\newtheorem{thm}{Theorem}[section]
\newtheorem{cor}[thm]{Corollary}
\theoremstyle{definition}
\newtheorem{defn}[thm]{Definition}
\theoremstyle{remark}
\theoremstyle{assumption}
\newcommand{\abs}[1]{\left\vert#1\right\vert}
\renewcommand{\footnote}{\endnote}  
\title{\textbf{\textcolor{blue}{Black-Litterman Asset Allocation under Hidden Truncation Distribution}}}
\author{\normalsize Jungjun Park\\Department of Economics \\St. Lawrence University, USA\\jpark@stlawu.edu
	\and \normalsize Andrew L. Nguyen\\Department of Mathematics \\California State University Fullerton, USA\\anguyen@fullerton.edu}
\date{\today }
\date{}
\begin{document}
\maketitle

\pdfoutput=1 
\noindent 
In this paper, we study the Black-Litterman (BL) asset allocation model \textcolor{blue}{(Black $\&$ Litterman, 1990)} under the hidden truncation skew-normal distribution \textcolor{blue}{(Arnold $\&$ Beaver, 2000)}. In particular, when returns are assumed to follow this skew normal distribution, we show that the posterior returns, after incorporating views, are also skew normal. By using Simaan's three moments risk model \textcolor{blue}{(Simaan, 1993)}, we could then obtain the optimal portfolio. Empirical data show that the optimal portfolio obtained this way has less risk compared to an optimal portfolio of the classical BL model and that they become more negatively skewed as the expected returns of portfolios increase, 
which suggests that the investors trade a negative skewness for a higher expected return. We also observe a negative relation between portfolio volatility and portfolio skewness. This observation suggests that investors may be making a trade-off, opting for lower volatility in exchange for higher skewness, or vice versa. This trade-off indicates that stocks with significant price declines tend to exhibit increased volatility. \\~\\
\noindent \textbf{JEL classification:} C61, G11, G12 \\
\textbf{Keywords:} Bayesian Asset Allocation, Black-Litterman Model, Hidden Truncation Distribution, Skew-Normal Distribution, Non-Normal Distribution, Mean-Variance-Skewness Optimization
\let\thefootnote\relax\footnotetext{We deeply would like to thank Stefan Nagel, Paul Zak, Barry Arnold, Yusif Simman, John Rutledge, Yaron Raviv, and Harold Vogel for their comments and suggestions to improve the reading of this paper.}
\newpage




\section{Introduction}
The Nobel Prize laureate Harry Markowitz \textcolor{blue}{(Markowitz, 1952)} provides a path-breaking mean-variance framework for portfolio optimization. This method, however, faces several problems such as highly-concentrated portfolios, input sensitivity problem, and estimation error maximization that are well-documented with asset pricing research when it is implemented in practice. Black and Litterman \textcolor{blue}{(Black $\&$ Litterman, 1990)} develop an alternative approach in portfolio optimization which allows investors to incorporate their expert view on how the market behaves in the future. The Black-Litterman (hereafter BL) model can help construct stable mean-variance efficient portfolios and overcome the input sensitivity problem. Both methods, however, assume that returns are normally distributed, an assumption consistent with other mainstream theories in finance.

As is well known now, asset returns exhibit non-normal behavior. In the presence of asset returns' non-normal behavior, optimal portfolio selection techniques should consider higher-moment risks. Skewness, a shape parameter, characterizes asymmetry of the returns of some financial assets, and is recognized early on in modern finance theory \textcolor{blue}{(Fama, 1963; Mandelbrot, 1963)}. For this reason, many authors propose different statistical distributions for the returns of different asset classes and focus on extending the model to be more general and suitable for many different types of portfolios \textcolor{blue}{(Bacmann $\&$ Benedetti, 2009; Hu $\&$ Kercheval, 2010; Giacometti, Bertocchi, Rachev $\&$ Fabozzi, 2007; Harvey, Liechty, Liechty $\&$ M$\ddot{\textnormal{u}}$ller, 2010; and Xiao $\&$ Valdez, 2015)}.

Recently, non-normal models of portfolio returns are documented in the financial literature because of their greater flexibility to accommodate skewness and larger fat tails. It is now feasible to employ better multivariate distribution families that capture skewness and heavy tails in the data. A number of past studies discuss the asymmetric characteristics of asset returns and attempt to extend the standard portfolio theory by considering asymmetry and fat tail to deal with non-normality in the distribution of asset returns and provide evidence that incorporating skewness into the portfolio decision causes major changes in the optimal portfolio.

\textcolor{blue}{Giacometti, Bertocchi, Rachev and Fabozzi (2007)} improve the classical BL model by applying more realistic models for asset returns (the normal, the t-student, and the stable distributions). 
By incorporating heavy-tailed distribution models for asset returns and alternative risk measures, they obtain the following results: (1) the appropriateness of the $\alpha$-stable distributional hypothesis is more evident when they compute the equilibrium returns and (2) the combination of $\alpha$-stable distribution and the choice of risk measure provides the best forecast.
\textcolor{blue}{Bacmann and Benedetti (2009)} propose an application of Bayesian allocation techniques to the portfolio selection problem in the hedge fund context. Given the strong departures from normality of the hedge fund returns, they suggest that the inclusion of higher moments and the parameter uncertainty should be addressed in the portfolio selection task. They implement and test a Bayesian framework for portfolio optimization process in order to consider different hedge fund styles as well as the estimation risk. 
They find that introducing skewness in the asset allocation task will produce a different allocation for investors with skewness preference.  \textcolor{blue}{Harvey, Liechty, Liechty and M$\ddot{\textnormal{u}}$ller (2010)} employ the skew normal distribution which has many attractive features for modeling multivariate returns. They address both higher moments and estimation risk in a coherent Bayesian framework. They start with the multivariate skew normal probability model developed by \textcolor{blue}{Sahu (2003)} and demonstrate that their generalization of the skew normal model of Sahu is able to capture higher moments. They show that skew normal models with a diagonal $\Delta$ (a diagonal matrix accommodating skewness) outperform the other models, with the Sahu model fitting best. For their empirical investigation, they find that including the skewness clearly improves the expected utility and the increasing utility indicates that substantially better portfolios are available for the data over a wide range of different skewness preferences. Recently, \textcolor{blue}{Xiao and Valdez (2015)} develop the explicit form of the posterior predictive distribution. They work with Meucci \textcolor{blue}{(Meucci, 2006a)}'s market-based version of the Black-Litterman model and make the extension to the elliptical market. Their resulting posterior provides solutions to optimization problems of asset allocation in the non-normal market assumption. 

Some studies suggest that investors may be willing to pay a premium for positive skewness in their portfolios \textcolor{blue}{(Friend $\&$ Westerfield, 1980)}. \textcolor{blue}{Leland (1999)} suggests that portfolio returns are not, in general, normally distributed and most investors have a preference for positively skewed returns, which implies that prices in equilibrium reflect more than mean and variance, and skewness will be positively valued by the market. 
\textcolor{blue}{Bekaert, Erb, Harvey and Viskanta (1998)}  show that there is significant skewness and kurtosis in the returns. 
They examine how asset allocation decisions are impacted in the presence of skewness and kurtosis. Their portfolio simulation shows that investment weights are increased toward the asset with positive skewness, holding kurtosis constant. 

Of particular interest to this paper is the relationship between mean return and skewness explored by  \textcolor{blue}{Kraus and Litzenberger (1976), Golec and Tamarkin (1998), Harvey and Siddique (2000), and Antti Ilman (2012)}. Interestingly, \textcolor{blue}{Golec and Tamarkin (1998)} study of horse race betting and argue that horse race bettors accept low-return, high-variance bets because they enjoy the high skewness offered by these bets. They suggest that bettors primarily trade off negative expected return and variance for positive skewness. 
\textcolor{blue}{Harvey and Siddique (2000)} suggest that if asset returns have systematic skewness, expected returns should include rewards for accepting this risk. In their study, 
they find that negative trade-off of mean return and skewness. That is, to get investors to hold low or negatively skewed portfolios, the expected return needs to be higher and vice versa. They conclude that systematic skewness is economically important and commands a risk premium. 
\textcolor{blue}{Antti Ilman (2012)} 
suggests that with the behavioral perspective, on the left tail, insurance-seeking investors focus on downside risk protection (limiting the left tail). On the right tail, lottery-seeking investors' positive skewness of returns implies the potential for greater variance of positive returns than negative returns. He concludes that 
buying insurance (limiting the left tail) and buying lottery tickets (enhancing the right tail) are popular activities and thus tend to be overpriced compared with actuarially neutral prices. Conversely, selling insurance and selling lottery tickets may boost long-term returns.

In this paper we introduce a three-moment BL model considering skewness risk using Simaan's model \textcolor{blue}{(Simaan, 1993)}.  When assets returns exhibit skewness, the classical mean-variance portfolio optimization is no longer valid. Simaan develops a framework for finding an optimal portfolio in this situation. 
The unique perspective of Simaan's model is to have two measures of risk. The model splits the variance of the portfolio return into two components called the spherical and the non-spherical variances. The returns are non-symmetric. In addition, the skewness is a function only of the non-spherical part of the variance. In his model, 
Simaan shows that minimizing the spherical variance of the portfolio return for given mean and non-spherical variance is the legitimate optimization method to select the most efficient portfolio.


Skew-normal models of portfolio returns are documented in the recent financial literature \textcolor{blue}{(Bacmann $\&$ Benedetti, 2009; Harvey, Liechty, Liechty $\&$ M$\ddot{\textnormal{u}}$ller, 2010; Carmichael $\&$ Co$\ddot{\textnormal{e}}$n, 2013; and Gan, 2014)} because of their greater flexibility to accommodate skewness and larger fat tails. The skew-normal distribution is introduced by Azzalini \textcolor{blue}{(Azzalini, 1985)} as a natural extension of the classical normal density to accommodate asymmetry. He extensively studies the properties of this distribution and extends this class to include the multivariate analog of the skew-normal. Arnold \textcolor{blue}{(Arnold, 1993)} introduces a more general skew-normal distribution as the marginal distribution of a truncated bivariate normal distribution in which $X$ is retained only if $Y$ satisfies certain constraints. Using this approach (hidden truncation model), more general univariate and multivariate skewed distributions are developed.

In this paper, we contribute to extend the BL model to 
the hidden truncation model (distribution) by departing from the normal market assumption. This class has the advantage of allowing for heavier tails and higher peaks than the ordinary normal distribution. By adopting the hidden truncation distribution, we capture the skewness of the asset returns and develop a more flexible model in a Bayesian framework. Our empirical results suggest that using the assumption of the skew-normal returns the hidden truncation BL model developed in this paper provides optimal portfolios with the same expected return but less risk compared to an optimal portfolio of the classical BL model. As $N$ (non-spherical variance) increases, a positive relationship between portfolio volatility and skewness is found at any level of $M$ (target portfolio return). We also find that there is a negative relationship between portfolio expected return and skewness at any level of $N$. In other words, the investors hold negatively (positively) skewed portfolios with higher (lower) expected returns. This supports the existing study that the investors trade a negative skewness for a higher expected return.

This paper proceeds as follows. Section 2 introduces a model for (hidden truncation) skewed normal returns and statistical properties of the multivariate hidden truncation distribution, studies Simaan model, and develops a portfolio optimization method under skew-normal assumption. Section 3 presents both classical and hidden truncation Black-Litterman model and the main theoretical frameworks underpinning this paper. Section 4 discusses empirical analysis. Section 5 concludes the paper. The appendices contain additional proofs, results, figures, and tables.


\section{A Model for Skewed Normal Returns}
In this section, we will briefly review a flexible family of skewed distributions developed by \textcolor{blue}{Arnold and Beaver (2000)} called the Hidden Truncation Skewed Normal Distribution. For interested readers, a short development of this distribution is provided in the Appendix. We then discuss portfolio optimization when returns follow this distribution.

\subsection{Multivariate Hidden Truncation Skew-Normal Distribution}
Following Arnold and Beaver, a $k-$dimensional random vector $X$ is said to have a hidden truncation skew normal distribution if its density is
$$
 f_{\boldsymbol{X}}(\boldsymbol{x}) = \frac{1}{{\Phi \left(\frac{\lambda_{0}}{\sqrt{1 + \boldsymbol{\lambda}_{1}^{t} \boldsymbol{\lambda}_{1}}}\right) }} \cdot \varphi(\boldsymbol{x}; \boldsymbol{\mu}, \Sigma) \cdot \Phi(\lambda_{0} + \boldsymbol{\lambda^{t}_{1}} \Sigma^{- 1/2} (\boldsymbol{x} - \boldsymbol{\mu})) \nonumber \label{SNT} \\
$$
In this case, we write $X\sim SNT_k(\lambda_0, \lambda_1, \mu, \Sigma)$. As an example, the density of a two dimensional case along with its contour are given below in Figure \ref{SNTplot12}. In this example, we assume that $\lambda_0 = 0, \lambda_1 = (1,2)^t, \mu = (0,0)^t$ and
$\Sigma = \begin{pmatrix}
           2 & 1\\
           1 & 2\\
          \end{pmatrix}.$

\begin{figure}%
	\centering
	\subfloat[Bivariate skew-normal density function]{{\includegraphics[width=10cm]{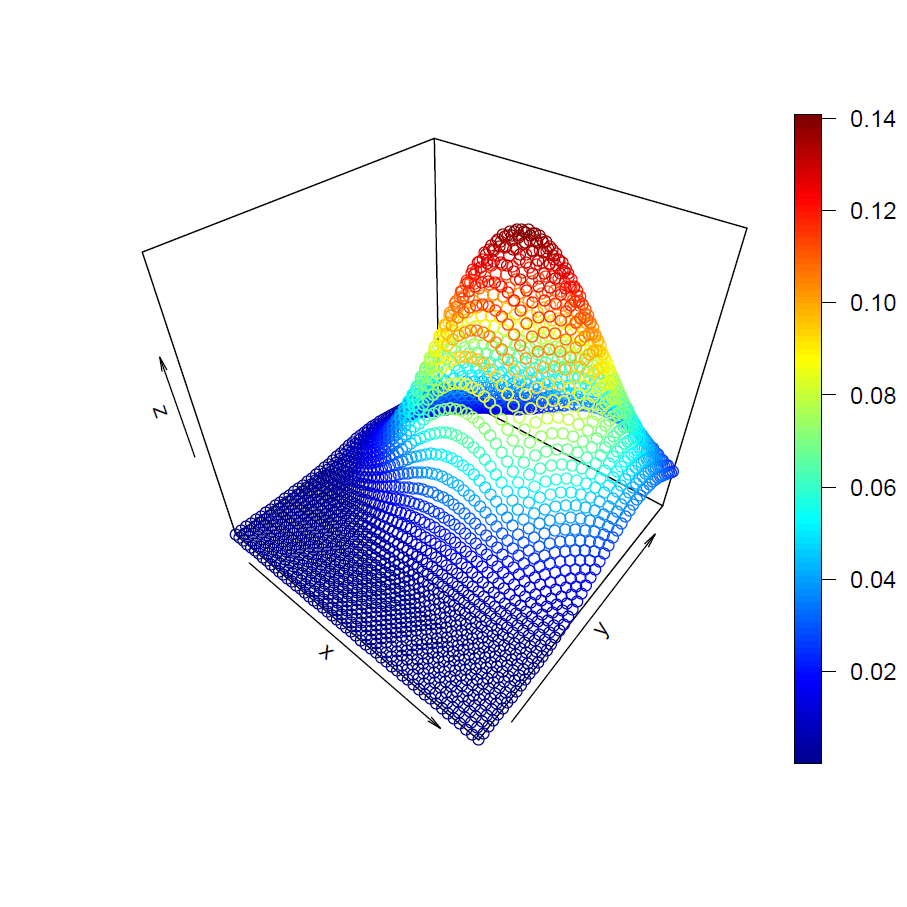}}}%
	\qquad
	\subfloat[Contour levels plot of bivariate skew-normal]{{\includegraphics[width=10cm]{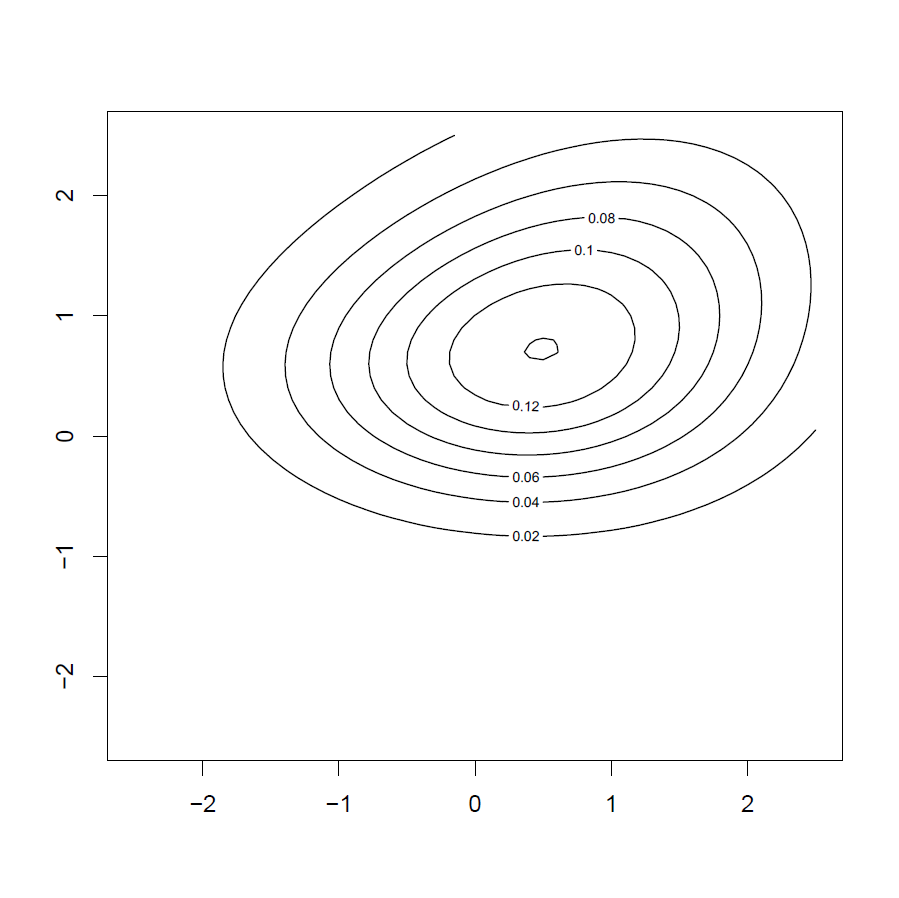}}}%
	\caption{Bivariate skew-normal density function and its contour from a SNT($\lambda_0=0, \lambda_1 = (1,2)^t, \mu = (0,0)^t, \Sigma$).}%
	\label{SNTplot12}%
\end{figure}


Among other important results, Arnold and Beaver show the following result.
\begin{thm}
	\textnormal{Suppose} $\boldsymbol{X} \sim SNT_{k}(\lambda_{0}, \boldsymbol{\lambda}_{1}, \boldsymbol{\mu}, \Sigma)$. Then its mgf is:
	\begin{align}
		M_{\bf X}({\bf s}) \:=\: \textnormal{exp} \left\lbrace {\bf s}^{t} \mu + \frac{1}{2} {\bf s}^{t} \Sigma {\bf s} \right\rbrace \cdot \frac{\Phi\left( \frac{\lambda_{0} + \boldsymbol{\lambda}^{t}_{1} \Sigma^{\frac{1}{2}} {\bf s}}{\sqrt{1 + \boldsymbol{\lambda}^{t}_{1} \boldsymbol{\lambda}_{1}}}\right) }{\Phi \left( \frac{\lambda_{0}}{\sqrt{1 + \boldsymbol{\lambda}^{t}_{1} \boldsymbol{\lambda}_{1}}}\right) }
	\end{align}
\end{thm}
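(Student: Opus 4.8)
The plan is to evaluate the moment generating function directly from its definition, $M_{\mathbf X}(\mathbf s)=\mathbb{E}[e^{\mathbf s^t\mathbf X}]$, by integrating $e^{\mathbf s^t\mathbf x}$ against the density displayed above and reducing the whole computation to one classical Gaussian identity. The first move is to complete the square in the Gaussian factor via the standard shift identity $e^{\mathbf s^t\mathbf x}\,\varphi(\mathbf x;\boldsymbol{\mu},\Sigma)=\exp\!\left\{\mathbf s^t\boldsymbol{\mu}+\tfrac12\mathbf s^t\Sigma\mathbf s\right\}\varphi(\mathbf x;\boldsymbol{\mu}+\Sigma\mathbf s,\Sigma)$. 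This pulls the exponential term $\exp\{\mathbf s^t\boldsymbol{\mu}+\tfrac12\mathbf s^t\Sigma\mathbf s\}$ outside the integral, so that what remains is $\mathbb{E}\!\left[\Phi\!\left(\lambda_0+\boldsymbol{\lambda}_1^t\Sigma^{-1/2}(\mathbf X'-\boldsymbol{\mu})\right)\right]$ with $\mathbf X'\sim N(\boldsymbol{\mu}+\Sigma\mathbf s,\Sigma)$, divided by the normalizing constant $\Phi(\lambda_0/\sqrt{1+\boldsymbol{\lambda}_1^t\boldsymbol{\lambda}_1})$.

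Next I would make the linear change of variables $\mathbf y=\Sigma^{-1/2}(\mathbf x-\boldsymbol{\mu}-\Sigma\mathbf s)$, taking $\Sigma^{-1/2}$ to be the inverse of the symmetric positive-definite square root. Under this substitution the shifted Gaussian becomes a standard $N(0,I_k)$ density (the Jacobian $(\det\Sigma)^{1/2}$ cancelling the Gaussian normalization), and since $\Sigma^{-1/2}\Sigma\mathbf s=\Sigma^{1/2}\mathbf s$ the argument of $\Phi$ turns into $\lambda_0+\boldsymbol{\lambda}_1^t\Sigma^{1/2}\mathbf s+\boldsymbol{\lambda}_1^t\mathbf y$. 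The leftover integral is therefore exactly $\mathbb{E}\!\left[\Phi(a+\boldsymbol{\lambda}_1^t\mathbf Y)\right]$ with $\mathbf Y\sim N(0,I_k)$ and $a:=\lambda_0+\boldsymbol{\lambda}_1^t\Sigma^{1/2}\mathbf s$.

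The one genuinely non-mechanical ingredient — and the step I expect to be the crux — is the identity $\mathbb{E}\!\left[\Phi(a+\mathbf b^t\mathbf Y)\right]=\Phi\!\left(a/\sqrt{1+\mathbf b^t\mathbf b}\right)$ for $\mathbf Y\sim N(0,I_k)$. I would prove it probabilistically: introduce $Z\sim N(0,1)$ independent of $\mathbf Y$, write $\Phi(a+\mathbf b^t\mathbf Y)=\mathbb{P}(Z\le a+\mathbf b^t\mathbf Y\mid\mathbf Y)$, take expectations to get $\mathbb{P}(Z-\mathbf b^t\mathbf Y\le a)$, and note $Z-\mathbf b^t\mathbf Y\sim N(0,1+\mathbf b^t\mathbf b)$. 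Applying this with $\mathbf b=\boldsymbol{\lambda}_1$ gives $\Phi\!\left((\lambda_0+\boldsymbol{\lambda}_1^t\Sigma^{1/2}\mathbf s)/\sqrt{1+\boldsymbol{\lambda}_1^t\boldsymbol{\lambda}_1}\right)$, and multiplying by the prefactor from the first step and dividing by $\Phi(\lambda_0/\sqrt{1+\boldsymbol{\lambda}_1^t\boldsymbol{\lambda}_1})$ yields precisely the claimed expression.

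What is left is bookkeeping: using the symmetric square root consistently so that $(\Sigma^{1/2})^t=\Sigma^{1/2}$ and the quadratic form transforms cleanly; observing that the degenerate case $\lambda_0=0$, where the density's normalizing constant is $\Phi(0)=\tfrac12$, is covered by the identical computation; and remarking that $M_{\mathbf X}(\mathbf s)$ is finite for all $\mathbf s$ because the skew factor $\Phi(\cdot)\le 1$ bounds the integrand by a Gaussian moment generating function. A conceptual alternative — recognizing $SNT_k$ as the conditional distribution of a normal vector in dimension $k+1$ given a one-sided linear constraint and reading the mgf off the multivariate normal mgf — would also work, but the direct integration above is shorter and self-contained.
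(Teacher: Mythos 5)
Your proposal is correct and follows essentially the same route as the paper: complete the square in the Gaussian factor to extract $\exp\{\mathbf{s}^{t}\boldsymbol{\mu}+\tfrac12\mathbf{s}^{t}\Sigma\mathbf{s}\}$, reduce the remaining integral to $\mathbb{E}[\Phi(a+\boldsymbol{\lambda}_1^{t}\mathbf{Y})]$ for a Gaussian $\mathbf{Y}$, and finish with the identity $\mathbb{E}[\Phi(a+\mathbf{b}^{t}\mathbf{Y})]=\Phi\bigl(a/\sqrt{1+\mathbf{b}^{t}\mathbf{b}}\bigr)$, which the paper also proves via an auxiliary independent standard normal. The only (immaterial) difference is that you run the computation in one pass on the general density, whereas the paper first treats the standardized $SNT_k(\lambda_0,\boldsymbol{\lambda}_1)$ case and then lifts to the general case through the affine relation $\mathbf{X}=\boldsymbol{\mu}+\Sigma^{1/2}\mathbf{Y}$.
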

\begin{proof}
See Appendix.
\end{proof}

From this result, they obtain the mean (and also the variance) of a hidden truncation skew-normal. In particular, if ${\bf X} \sim SNT_k(\lambda_0, \boldsymbol{\lambda}_1, \boldsymbol{\mu}, \Sigma)$, we have
\begin{align}
	\mathbb{E}({\bf X})
	&= \boldsymbol{\mu} + h(\lambda_0, \boldsymbol{\lambda}_1)
	\frac{\Sigma^{1/2}\boldsymbol{\lambda}_1}{\sqrt{1 + \boldsymbol{\lambda}_1^t\boldsymbol{\lambda}_1}} \label{eq:16}
\end{align}
where $h (\lambda_{0}, \boldsymbol{\lambda}_{1}) = \frac{\varphi \left( \frac{\lambda_{0}}{\sqrt{1 + \boldsymbol{\lambda}_{1}^{t} \boldsymbol{\lambda}_{1}}}\right) }{\Phi \left(\frac{\lambda_{0}}{\sqrt{1 + \boldsymbol{\lambda}_{1}^{t} \boldsymbol{\lambda}_{1}}}\right) }$.\\

We conclude this subsection with one very important result which has direct application to our portfolio optimization. The proof of this theorem is in Appendix.

\begin{thm} \textnormal{\textbf{Affine Transformation}}.
\label{Linear Transformation}
	Suppose $\boldsymbol{X} \sim SNT_{n}(\lambda_{0}, \boldsymbol{\lambda}_{1}, \boldsymbol{\mu}, \Sigma)$. Let $\boldsymbol{Y} =  A\boldsymbol{X} + {\bf b}$ where $A$ is an $m \times n$ matrix and ${\bf b} \in \mathbb{R}^m.$ Then ${\bf Y} \sim SNT_{m}(\tau_{0}, \boldsymbol{\tau}_{1}, \boldsymbol{\mu}_{y}, \Sigma_{y})$, where
	\begin{align*}
		\boldsymbol{\mu}_{y} \:&=\: {\bf b} + A \boldsymbol{\mu} \\
		\Sigma_{y} \:&=\: A \Sigma A^{t} \\
		\tau_{0} \:&=\: \frac{\lambda_{0}}{\sqrt{1 + \boldsymbol{\lambda}^{t}_{1} \left[ I - H \Sigma_{y}^{-1} H^{t} \right] \boldsymbol{\lambda}_{1}}} \\
		\boldsymbol{\tau}_{1} \:&=\: \frac{\Sigma_{y}^{-\frac{1}{2}}H^t \boldsymbol{\lambda}_{1}}{\sqrt{1 + \boldsymbol{\lambda}^{t}_{1} \left[ I - H \Sigma_{y}^{-1} H^{t} \right] \boldsymbol{\lambda}_{1}}} \\
		H \:&=\: \Sigma^{\frac{1}{2}} A^{t}
	\end{align*}
\end{thm}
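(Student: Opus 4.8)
The plan is to work entirely at the level of moment generating functions, using Theorem 2.1 together with the uniqueness theorem for mgf's. First I would write
$$M_{\boldsymbol{Y}}(\mathbf{s}) \:=\: \mathbb{E}\!\left[e^{\mathbf{s}^{t}(A\boldsymbol{X}+\mathbf{b})}\right] \:=\: e^{\mathbf{s}^{t}\mathbf{b}}\, M_{\boldsymbol{X}}(A^{t}\mathbf{s})$$
and substitute the formula of Theorem 2.1 evaluated at $A^{t}\mathbf{s}$. The Gaussian factor $\exp\{(A^{t}\mathbf{s})^{t}\boldsymbol{\mu} + \tfrac12(A^{t}\mathbf{s})^{t}\Sigma(A^{t}\mathbf{s})\}$ combines with $e^{\mathbf{s}^{t}\mathbf{b}}$ to give $\exp\{\mathbf{s}^{t}(\mathbf{b}+A\boldsymbol{\mu}) + \tfrac12\mathbf{s}^{t}(A\Sigma A^{t})\mathbf{s}\}$, which is already the Gaussian part of an $SNT_{m}$ mgf with $\boldsymbol{\mu}_{y}=\mathbf{b}+A\boldsymbol{\mu}$ and $\Sigma_{y}=A\Sigma A^{t}$. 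So the entire content of the theorem sits in the ratio of normal cdf's, which after inserting $H=\Sigma^{1/2}A^{t}$ reads $\Phi\!\big((\lambda_{0}+\boldsymbol{\lambda}_{1}^{t}H\mathbf{s})/\sqrt{1+\boldsymbol{\lambda}_{1}^{t}\boldsymbol{\lambda}_{1}}\big)\big/\Phi\!\big(\lambda_{0}/\sqrt{1+\boldsymbol{\lambda}_{1}^{t}\boldsymbol{\lambda}_{1}}\big)$.

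The goal is then to rewrite this ratio so the linear-in-$\mathbf{s}$ term becomes $\boldsymbol{\tau}_{1}^{t}\Sigma_{y}^{1/2}\mathbf{s}$ and the normalizing constant becomes $\sqrt{1+\boldsymbol{\tau}_{1}^{t}\boldsymbol{\tau}_{1}}$. The key computation — and the step I expect to be the crux — is the algebraic identity
$$1 + \boldsymbol{\tau}_{1}^{t}\boldsymbol{\tau}_{1} \:=\: \frac{1+\boldsymbol{\lambda}_{1}^{t}\boldsymbol{\lambda}_{1}}{\,1 + \boldsymbol{\lambda}_{1}^{t}\!\left[I - H\Sigma_{y}^{-1}H^{t}\right]\!\boldsymbol{\lambda}_{1}\,}.$$
This is obtained by writing $D^{2}=1+\boldsymbol{\lambda}_{1}^{t}[I-H\Sigma_{y}^{-1}H^{t}]\boldsymbol{\lambda}_{1}$ so that $\boldsymbol{\tau}_{1}=\Sigma_{y}^{-1/2}H^{t}\boldsymbol{\lambda}_{1}/D$, computing $\boldsymbol{\tau}_{1}^{t}\boldsymbol{\tau}_{1}=\boldsymbol{\lambda}_{1}^{t}H\Sigma_{y}^{-1}H^{t}\boldsymbol{\lambda}_{1}/D^{2}$, and observing that $D^{2}+\boldsymbol{\lambda}_{1}^{t}H\Sigma_{y}^{-1}H^{t}\boldsymbol{\lambda}_{1}=1+\boldsymbol{\lambda}_{1}^{t}\boldsymbol{\lambda}_{1}$. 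Hence $\sqrt{1+\boldsymbol{\tau}_{1}^{t}\boldsymbol{\tau}_{1}}=\sqrt{1+\boldsymbol{\lambda}_{1}^{t}\boldsymbol{\lambda}_{1}}\,/\,D$.

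With this identity in hand I would verify the two required matchings directly: $\tau_{0}/\sqrt{1+\boldsymbol{\tau}_{1}^{t}\boldsymbol{\tau}_{1}} = (\lambda_{0}/D)\cdot(D/\sqrt{1+\boldsymbol{\lambda}_{1}^{t}\boldsymbol{\lambda}_{1}}) = \lambda_{0}/\sqrt{1+\boldsymbol{\lambda}_{1}^{t}\boldsymbol{\lambda}_{1}}$, so the two denominator cdf's agree; and $\boldsymbol{\tau}_{1}^{t}\Sigma_{y}^{1/2}/\sqrt{1+\boldsymbol{\tau}_{1}^{t}\boldsymbol{\tau}_{1}} = (\boldsymbol{\lambda}_{1}^{t}H\Sigma_{y}^{-1/2}\Sigma_{y}^{1/2}/D)\cdot(D/\sqrt{1+\boldsymbol{\lambda}_{1}^{t}\boldsymbol{\lambda}_{1}}) = \boldsymbol{\lambda}_{1}^{t}H/\sqrt{1+\boldsymbol{\lambda}_{1}^{t}\boldsymbol{\lambda}_{1}}$, so the linear terms in the numerators agree too. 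Therefore the cdf ratio appearing in $M_{\boldsymbol{Y}}(\mathbf{s})$ coincides with the one in the mgf of an $SNT_{m}(\tau_{0},\boldsymbol{\tau}_{1},\boldsymbol{\mu}_{y},\Sigma_{y})$ random vector, and by uniqueness of moment generating functions $\boldsymbol{Y}$ has exactly that distribution. I would record as a standing hypothesis that $A$ has full row rank $m$, so that $\Sigma_{y}=A\Sigma A^{t}$ is positive definite and $\Sigma_{y}^{-1/2}$, $\Sigma_{y}^{-1}$ are well defined; without it the affine image need not possess a density of the $SNT$ form. The only other thing to be careful about is keeping transposes straight when $\Sigma^{1/2}$ is chosen symmetric, so that $H^{t}=A\Sigma^{1/2}$ and $\boldsymbol{\lambda}_{1}^{t}\Sigma^{1/2}A^{t}\mathbf{s}=\boldsymbol{\lambda}_{1}^{t}H\mathbf{s}$ — routine but easy to slip on.
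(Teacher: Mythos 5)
Your proposal is correct and follows essentially the same route as the paper's own proof: compute $M_{\boldsymbol{Y}}(\mathbf{s})=e^{\mathbf{s}^{t}\mathbf{b}}M_{\boldsymbol{X}}(A^{t}\mathbf{s})$, absorb the Gaussian factor, and reduce everything to the identity $1+\boldsymbol{\lambda}_{1}^{t}\boldsymbol{\lambda}_{1}=\bigl(1+\boldsymbol{\lambda}_{1}^{t}[I-H\Sigma_{y}^{-1}H^{t}]\boldsymbol{\lambda}_{1}\bigr)\bigl(1+\boldsymbol{\tau}_{1}^{t}\boldsymbol{\tau}_{1}\bigr)$, which is exactly the paper's decomposition via $\lambda_{*}=\Sigma_{y}^{-1/2}A\Sigma^{1/2}\boldsymbol{\lambda}_{1}$ with $\lambda_{*}^{t}\lambda_{*}=\boldsymbol{\lambda}_{1}^{t}H\Sigma_{y}^{-1}H^{t}\boldsymbol{\lambda}_{1}$. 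Your explicit remark that $A$ must have full row rank so that $\Sigma_{y}$ is invertible is a reasonable clarification the paper leaves implicit.
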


\begin{proof}
	See Appendix.
\end{proof}

\begin{cor}
\label{PortSNT}
	\textnormal{Assume} $\boldsymbol{X}$ $\sim SNT_{k}(\lambda_{0}, \boldsymbol{\lambda}_{1}, \boldsymbol{\mu}, \Sigma, )$ \textnormal{and} $A_{m \times n}$. For any ${\boldsymbol{w}}\in \mathbb{R}^k$, let $Y = {\boldsymbol{w}}^{t} \boldsymbol{X}$. Then $Y \sim SNT_{1}(\tau_{0}, \tau_{1}, \mu_y, \Sigma_y)$, where
	\begin{align*}
		\mu_y \:\:=&\:\:\: {\boldsymbol{w}}^{t}\boldsymbol{\mu}\\
		\Sigma_{y} \:\:=&\:\:\: {\boldsymbol{w}}^{t} \Sigma {\boldsymbol{w}} \\
		\tau_{0} \:\:=&\:\:\: \frac{\lambda_{0}}{\sqrt{1 + \boldsymbol{\lambda}^{t}_{1} \left[ I - H (\Sigma_{y}^{-1}) H^{t} \right] \boldsymbol{\lambda}_{1}}} \\
		\tau_{1}
		=&\:\:\: \frac{(\Sigma_{y})^{-\frac{1}{2}} H^{t} \boldsymbol{\lambda}_{1}}{\sqrt{1 + \boldsymbol{\lambda}^{t}_{1} \left[ I - H (\Sigma_{y}^{-1}) H^{t} \right] \boldsymbol{\lambda}_{1}}} \\
		H \:\:=&\:\:\: \Sigma^{\frac{1}{2}} {\boldsymbol{w}}
	\end{align*}
\end{cor}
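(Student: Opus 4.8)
The plan is to derive this directly from the Affine Transformation theorem (Theorem~\ref{Linear Transformation}). I would take $m = 1$, $n = k$, choose the $1 \times k$ matrix $A = \boldsymbol{w}^{t}$, and set ${\bf b} = 0$. Then $Y = \boldsymbol{w}^{t}\boldsymbol{X} = A\boldsymbol{X} + {\bf b}$ is exactly the situation covered by Theorem~\ref{Linear Transformation}, so immediately $Y \sim SNT_{1}(\tau_{0}, \tau_{1}, \mu_{y}, \Sigma_{y})$, and it only remains to specialize the four parameter formulas of that theorem to this choice of $A$ and ${\bf b}$.

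Carrying out the substitution is routine. From $\mu_{y} = {\bf b} + A\boldsymbol{\mu}$ and $\Sigma_{y} = A\Sigma A^{t}$ we read off $\mu_{y} = \boldsymbol{w}^{t}\boldsymbol{\mu}$ and $\Sigma_{y} = \boldsymbol{w}^{t}\Sigma\boldsymbol{w}$; since $\Sigma$ is positive definite and $\boldsymbol{w} \neq {\bf 0}$, this $\Sigma_{y}$ is a strictly positive scalar, so $\Sigma_{y}^{-1}$ and $\Sigma_{y}^{-1/2}$ are just the ordinary reciprocal and reciprocal square root. From $H = \Sigma^{1/2}A^{t}$ we get $H = \Sigma^{1/2}\boldsymbol{w}$, a $k \times 1$ column vector, whence $H^{t}\boldsymbol{\lambda}_{1} = \boldsymbol{w}^{t}\Sigma^{1/2}\boldsymbol{\lambda}_{1}$ is a scalar and $H\Sigma_{y}^{-1}H^{t} = (\boldsymbol{w}^{t}\Sigma\boldsymbol{w})^{-1}\,\Sigma^{1/2}\boldsymbol{w}\boldsymbol{w}^{t}\Sigma^{1/2}$ is a $k \times k$ rank-one matrix, so the quadratic form $\boldsymbol{\lambda}_{1}^{t}[I - H\Sigma_{y}^{-1}H^{t}]\boldsymbol{\lambda}_{1}$ appearing in the denominators is well defined exactly as written. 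Plugging these into the expressions for $\tau_{0}$ and $\boldsymbol{\tau}_{1}$ from Theorem~\ref{Linear Transformation} reproduces verbatim the formulas claimed in the corollary, with $\tau_{1}$ a scalar (as it must be, since $m = 1$).

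There is essentially no obstacle here beyond this bookkeeping. The only point deserving a word of care is that $\Sigma_{y} = \boldsymbol{w}^{t}\Sigma\boldsymbol{w}$ must be invertible for the formulas to be meaningful, which is why one implicitly assumes $\boldsymbol{w} \neq {\bf 0}$ together with positive definiteness of $\Sigma$; under these standing hypotheses the conclusion is immediate. (The object $A_{m\times n}$ mentioned in the statement does not enter the conclusion and may be disregarded.)
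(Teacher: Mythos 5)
Your proposal is correct and is exactly the intended derivation: the paper states Corollary~\ref{PortSNT} without proof as an immediate specialization of Theorem~\ref{Linear Transformation} with $A=\boldsymbol{w}^{t}$, ${\bf b}={\bf 0}$, $m=1$, and your substitution reproduces the stated parameters verbatim. Your side remarks (that $\boldsymbol{w}\neq{\bf 0}$ is needed for $\Sigma_y>0$, and that the stray ``$A_{m\times n}$'' in the statement is vestigial) are also accurate.
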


In the next subsection, we discuss Simaan’s model and portfolio optimization algorithm under hidden truncation skew-normal returns.

\subsection{Simaan's Model and Portfolio Optimization under Skew-Normal Assumption}
When asset returns exhibit skewness, the classical mean-variance on portfolio optimization is no longer valid. \textcolor{blue}{Simaan (1993)} develops a theoretical framework for finding an optimal portfolio in this situation.

\begin{defn}
	Let ${\bf \mu}$ and ${\bf b}$ be two constant vectors in $\mathbb{R}^k$. Simaan's model assumes that assets returns ${\bf R}$ has the following representation
	$$
	{\bf R} = {\bf \mu} + {\bf b}Y + {\bf \epsilon}
	$$
	where $Y$ is a univariate non-elliptical distribution and ${\bf \epsilon}$ is a random vector whose distribution conditionally on $Y$ is a multivariate spherical distribution with a characteristic matrix $W$.
\end{defn}
With this representation, Simaan shows that, among other results, $\mathbb{E}({\bf R}) = {\boldsymbol \mu} + b\mathbb{E}(Y),
V({\bf R}) = kW + {\bf b}{\bf b}^t\sigma_y^2$ for some positive constant $k$. The two components of the variance are called the spherical (associated with $\epsilon$) and non-spherical (associated with $Y$) components of the variance. Thus for a given portfolio $\boldsymbol{w}$, the variance of the portfolio return, $R_w = \omega^t{\bf R}$ is
\begin{align*}
	V(R_{w}) &= V(\boldsymbol{w}^{t}{\bf R})
	= \boldsymbol{w}^{t} kW \boldsymbol{w} + \sigma_y^2 \boldsymbol{w}^{t}{\bf b}{\bf b}^t\boldsymbol{w}.
\end{align*}

It is then natural to see if there is any relationship between Simaan's model and the hidden truncation skewed-normal assumption. The following theorem, which is shown by \textcolor{blue}{Azzalini and Valle (1996)}, shows the connection between the two. (See Appendix for proof).

\begin{thm}
\label{SimaanModel}
	Suppose $X \sim N(0,1)$ is independent with ${\bf Z} \sim N_k({\bf 0}, \Psi)$ where $Z_i \sim N(0,1), i=1,2,\ldots,k.$ (So we can view $\Psi$ as a correlation matrix.) Let $D = \text{diag}(\delta_1, \delta_2, \ldots, \delta_k)$ with $\delta_j \in (-1,1)$ and $W = \text{diag}(w_1, w_2, \ldots, w_k)$ with $w_j > 0.$ Then, for ${\boldsymbol \mu} \in \mathbb{R}^k$ and ${\boldsymbol \delta} = (\delta_1, \ldots, \delta_k)^t$,
	\begin{align}
	{\bf R} = {\boldsymbol \mu} + W{\boldsymbol \delta}|X| + W(I - D^2)^{1/2}{\bf Z}
	\end{align}
	follows the hidden truncation skew normally distributed $SNT_k(\lambda_0, \boldsymbol{\lambda}_{1}, {\bf \mu}, \Sigma),$ where
	\begin{align*}
	\bar{\Sigma} &= {\boldsymbol \delta \boldsymbol \delta}^t + (I - D^2)^{1/2}\Psi(I - D^2)^{1/2}\\
	{\boldsymbol \alpha} &= \frac{\bar{\Sigma}^{-1}{\boldsymbol \delta}}
	{(1 - {\boldsymbol \delta}^t\bar{\Sigma}^{-1}{\boldsymbol \delta})^{1/2}}\\
	\Sigma &= W\bar{\Sigma}W\\
    \lambda_0 &= 0\\
    \boldsymbol{\lambda}_{1} &= \Sigma^{1/2}W^{-1}{\boldsymbol \alpha}.
	\end{align*}
\end{thm}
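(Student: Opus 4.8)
The plan is to compute the moment generating function of $\mathbf{R}$ directly and match it against the mgf of the $SNT_k$ family recorded earlier in this section. Since every piece of $\mathbf{R}$ is Gaussian or half--Gaussian, all the mgfs involved are finite on all of $\mathbb{R}^k$, so equality of mgfs forces equality in distribution and pins down the parameters.

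First I would use the independence of $X$ and $\mathbf{Z}$ to factor
$$M_{\mathbf{R}}(\mathbf{s}) \:=\: e^{\mathbf{s}^{t}\boldsymbol{\mu}}\;\mathbb{E}\!\left[e^{(\boldsymbol{\delta}^{t}W\mathbf{s})\,|X|}\right]\;\mathbb{E}\!\left[e^{\mathbf{s}^{t}W(I-D^{2})^{1/2}\mathbf{Z}}\right].$$
The Gaussian factor is immediate and equals $\exp\{\tfrac12\mathbf{s}^{t}W(I-D^{2})^{1/2}\Psi(I-D^{2})^{1/2}W\mathbf{s}\}$. The half--normal factor is where the skewness enters: writing the scalar $a=\boldsymbol{\delta}^{t}W\mathbf{s}$ and completing the square in $\int_{0}^{\infty}e^{ax}\varphi(x)\,dx$ gives $\mathbb{E}[e^{a|X|}]=2e^{a^{2}/2}\Phi(a)$. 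Substituting $a^{2}=\mathbf{s}^{t}W\boldsymbol{\delta}\boldsymbol{\delta}^{t}W\mathbf{s}$ and adding the two quadratic exponents, the bracket $\boldsymbol{\delta}\boldsymbol{\delta}^{t}+(I-D^{2})^{1/2}\Psi(I-D^{2})^{1/2}=\bar{\Sigma}$ appears and, after conjugation by $W$, collapses to $\tfrac12\mathbf{s}^{t}\Sigma\mathbf{s}$ since $\Sigma=W\bar{\Sigma}W$. This produces $M_{\mathbf{R}}(\mathbf{s})=2\exp\{\mathbf{s}^{t}\boldsymbol{\mu}+\tfrac12\mathbf{s}^{t}\Sigma\mathbf{s}\}\,\Phi(\boldsymbol{\delta}^{t}W\mathbf{s})$, which is already in the shape of the $SNT_k$ mgf with $\lambda_{0}=0$ — so that the normalizing constant $\Phi(0)^{-1}$ is exactly $2$ and the constant shift inside the upper $\Phi$ disappears.

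It then remains to check that $\boldsymbol{\delta}^{t}W\mathbf{s}$ coincides with $\boldsymbol{\lambda}_{1}^{t}\Sigma^{1/2}\mathbf{s}/\sqrt{1+\boldsymbol{\lambda}_{1}^{t}\boldsymbol{\lambda}_{1}}$ for the stated $\boldsymbol{\lambda}_{1}=\Sigma^{1/2}W^{-1}\boldsymbol{\alpha}$, equivalently that $W\boldsymbol{\delta}=\Sigma^{1/2}\boldsymbol{\lambda}_{1}/\sqrt{1+\boldsymbol{\lambda}_{1}^{t}\boldsymbol{\lambda}_{1}}$. Using only that $\Sigma^{1/2}$ is the symmetric root, so $\Sigma^{1/2}\Sigma^{1/2}=\Sigma=W\bar{\Sigma}W$, one gets $\Sigma^{1/2}\boldsymbol{\lambda}_{1}=W\bar{\Sigma}\boldsymbol{\alpha}$ and $\boldsymbol{\lambda}_{1}^{t}\boldsymbol{\lambda}_{1}=\boldsymbol{\alpha}^{t}\bar{\Sigma}\boldsymbol{\alpha}$; inserting $\boldsymbol{\alpha}=\bar{\Sigma}^{-1}\boldsymbol{\delta}/(1-\boldsymbol{\delta}^{t}\bar{\Sigma}^{-1}\boldsymbol{\delta})^{1/2}$ turns these into $\Sigma^{1/2}\boldsymbol{\lambda}_{1}=W\boldsymbol{\delta}/(1-\boldsymbol{\delta}^{t}\bar{\Sigma}^{-1}\boldsymbol{\delta})^{1/2}$ and the key identity $1+\boldsymbol{\lambda}_{1}^{t}\boldsymbol{\lambda}_{1}=(1-\boldsymbol{\delta}^{t}\bar{\Sigma}^{-1}\boldsymbol{\delta})^{-1}$. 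Dividing, the factors $(1-\boldsymbol{\delta}^{t}\bar{\Sigma}^{-1}\boldsymbol{\delta})^{1/2}$ cancel and leave exactly $W\boldsymbol{\delta}$, completing the match.

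The real obstacle is not analytic but the linear--algebra bookkeeping of that last step: $W$ is diagonal while $\bar{\Sigma}$ and $\Sigma=W\bar{\Sigma}W$ are positive definite and in general none of the three commute, so one must avoid a spurious factorization such as $\Sigma^{1/2}=W\bar{\Sigma}^{1/2}$ and use solely $\Sigma^{1/2}\Sigma^{1/2}=\Sigma$ together with the cyclic cancellation $WW^{-1}=I$. As an independent cross--check one could instead first handle the case $W=I$ — this is the classical Azzalini and Valle stochastic representation of the multivariate skew normal, which drops out of the same mgf computation on $\mathbb{R}^k$ — and then recover the general statement from the Affine Transformation result (Theorem~\ref{Linear Transformation}) applied with $A=W$ and $\mathbf{b}=\boldsymbol{\mu}$, reconciling the two parametrizations through $H=\Sigma^{1/2}A^{t}$.
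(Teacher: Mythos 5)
Your proposal is correct, and it proves the theorem by a genuinely different route from the paper. The paper works at the level of densities: it conditions on $|X|$, writes the joint density of $({\bf R},|X|)$ for the reduced case $\boldsymbol{\mu}={\bf 0}$, $W=I$, integrates $x$ out over $(0,\infty)$ by completing the square in $x$, and invokes the Sherman--Morrison--Woodbury identity to recognize the marginal as $\varphi({\bf r};{\bf 0},\bar\Sigma)\,\Phi(k({\bf r}))$, i.e.\ the hidden-truncation form with $\lambda_0=0$; location and scale are then restored by an affine map. You instead compute $M_{\bf R}({\bf s})$ directly from independence and the half-normal identity $\mathbb{E}[e^{a|X|}]=2e^{a^2/2}\Phi(a)$, observe that the two quadratic exponents reassemble into $\tfrac12{\bf s}^t W\bar\Sigma W{\bf s}=\tfrac12{\bf s}^t\Sigma{\bf s}$, and match the result against the $SNT_k$ mgf of Theorem~\ref{mgfSNT} with $\lambda_0=0$ (so the normalizing constant is exactly $\Phi(0)^{-1}=2$). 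The remaining parameter check --- $\Sigma^{1/2}\boldsymbol{\lambda}_1=W\bar\Sigma\boldsymbol{\alpha}$, $\boldsymbol{\lambda}_1^t\boldsymbol{\lambda}_1=\boldsymbol{\alpha}^t\bar\Sigma\boldsymbol{\alpha}$, and $1+\boldsymbol{\lambda}_1^t\boldsymbol{\lambda}_1=(1-\boldsymbol{\delta}^t\bar\Sigma^{-1}\boldsymbol{\delta})^{-1}$, whence $\Sigma^{1/2}\boldsymbol{\lambda}_1/\sqrt{1+\boldsymbol{\lambda}_1^t\boldsymbol{\lambda}_1}=W\boldsymbol{\delta}$ --- is carried out using only $\Sigma^{1/2}\Sigma^{1/2}=\Sigma$, as you rightly insist, and all the identities go through. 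What your route buys is the elimination of the Woodbury/complete-the-square bookkeeping in the density integration and a one-shot treatment of location and scale; the cost is that you must appeal to uniqueness of distributions with a common mgf (harmless here, since every mgf involved is finite on all of $\mathbb{R}^k$) and to the previously established mgf formula for $SNT_k(\lambda_0,\boldsymbol{\lambda}_1,\boldsymbol{\mu},\Sigma)$, whereas the paper's argument exhibits the density explicitly. Your suggested cross-check (do $W=I$ first, then apply Theorem~\ref{Linear Transformation} with $A=W$) is essentially the paper's own structure. One point you share with the paper in leaving implicit: the formula for $\boldsymbol{\alpha}$ requires $\boldsymbol{\delta}^t\bar\Sigma^{-1}\boldsymbol{\delta}<1$, which does hold by Sherman--Morrison because $(I-D^2)^{1/2}\Psi(I-D^2)^{1/2}$ is positive definite, but it deserves a sentence.
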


\begin{proof}
See Appendix.
\end{proof}

In practice, we are concerned with the other direction. That is, given ${\bf R} \sim SNT_k(\lambda_0 = 0, {\boldsymbol \lambda_1}, {\boldsymbol \mu}, \Sigma),$ we need to determine ${\boldsymbol \delta}, W$ and $\Psi$ so that the following identity
$$
{\bf R} = \boldsymbol \mu + W{\boldsymbol \delta}|X| + W(I - D^2)^{1/2}{\bf Z}
$$
holds. It can be shown that, after some algebras, if we take $W$ to be the diagonal matrix of the standard deviations of $\Sigma$, $\bar{\Sigma}$ be the correlation matrix of $\Sigma$, $\delta = \frac{\overline{\Sigma}\alpha}{(1+\alpha^t\overline{\Sigma}\alpha)^{1/2}},$ with $\boldsymbol \alpha = W\Sigma^{-1/2}\boldsymbol \lambda_1, D$ be the diagonal matrix of the $\delta$ vector, and
$$
\Psi = (I - D^2)^{-1/2}(\bar{\Sigma}-\delta \delta^t)(I-D^2)^{-1/2}
$$
then indeed, ${\bf R}$ has the above representation. With this representation, we can find the variance of asset returns as follows:
\begin{align}
V &= Var({\bf R}) = Var({\boldsymbol \mu} + W{\boldsymbol \delta}|X| + W(I - D^2)^{1/2}{\bf Z}) \nonumber \\
&= \sigma^2_{|X|}(W{\boldsymbol \delta})(W{\boldsymbol \delta})^t +
W(I - D^2)^{1/2}\Psi (I-D^2)^{1/2}W \nonumber \\
&= \sigma^2_{|X|}{\bf b}{\bf b}^t + S,
\end{align}
where ${\bf b} = W{\boldsymbol \delta}$ and $S = W(I - D^2)^{1/2}\Psi (I-D^2)^{1/2}W = W(\bar{\Sigma} - \delta\delta^t)W = \Sigma - {\boldsymbol b}{\boldsymbol b}^t.$ So the variance is sum of the non-spherical variance (associated ${|X|}$) and \text{spherical variance} (associated ${\bf Z}$) components as termed by Simaan. As a result, the portfolio variance is given by
\begin{align}
\label{portVar}
V_w = Var(R_w) = \sigma^2_{|X|}b_w^2 + {\boldsymbol{w}}^{t}S{\boldsymbol{w}},
\end{align}
where $b_w = {\boldsymbol{w}}^{t}{\bf b}.$ Using Simaan's observation that the utility of $R_w$ is a non-increasing function of the portfolio spherical component of the variance while holding the non-spherical component fixed, the optimal portfolio can be found by minimizing the spherical variance while holding the non-spherical variance fixed. In other words, the optimal portfolio is obtained by solving the following optimization problem:
\begin{align} 
\underset{\boldsymbol{w}}{\textnormal{min}} \:\:\:\:\:\:\:\:\:\:\:\:\:\:\:\:\: f({\boldsymbol{w}}) \:&=\: \frac{1}{2}\boldsymbol{w}^{t}S{\boldsymbol{w}} \\~\nonumber\\
\textnormal {subject to} \:\:\:\:\:\:\:\:\: g_1({\boldsymbol{w}}) \:&=\ \boldsymbol{w}^{t}{\bf 1} \:=\: 1 \nonumber\\
g_2({\boldsymbol{w}}) \:&=\ \boldsymbol{w}^{t}\mu \:=\: \mu_w \:=\: M \nonumber \\
g_3({\boldsymbol{w}}) \:&=\ \boldsymbol{w}^{t}{\bf b} \:=\: b_w \:=\: N. \nonumber
\end{align}

By applying the Lagrange multiplier method (See Appendix), the optimal portfolio $\boldsymbol{w}$ can be put in the form $c_1\boldsymbol{w}_{1} + c_2\boldsymbol{w}_{2} + c_3\boldsymbol{w}_{3}$, where
	\begin{align}
	\boldsymbol{w}_{1} = \frac{S^{-1}\mathbf{1}}{\mathbf{1}^tS^{-1}\mathbf{1}},\:\: \boldsymbol{w}_{2} = \frac{S^{-1}\mu}{\mathbf{1}^tS^{-1}\mu} \:\:\: \textnormal{and} \:\:\: \boldsymbol{w}_{3} = \frac{S^{-1}{\bf b}}{\mathbf{1}^t S^{-1}{\bf b}}.
	\end{align}
The weights $c_1, c_2, c_3$ can be determined once the target portfolio mean $M$ and the target non-spherical variance $N$ are given.

\section{Black-Litterman Models}
\textcolor{blue}{Black and Litterman (1991 and 1992)} propose a one-period mean-variance optimization (MVO) in which the expected risk premia  of the assets incorporate views formulated by securities and market analysts. This model can be viewed as the Bayesian updating of a prior distribution of risk premia into a posterior distribution reflecting the views. 
In this section, we briefly review the classical Black-Litterman model and we then show how to extend it under hidden truncated skew-normal distribution, which is the main result of this article.

\subsection{Classical Black-Litterman (BL) Model}
Let ${\bf R}$ denote the asset returns. In this popular model, Black and Litterman put the following prior distribution on the vector of expected returns
$$
{\bf M} \sim N_k(\boldsymbol{\pi}, \tau \Sigma)
$$
and assume that ${\bf R} | {\bf M} = {\bf m} \sim N_k({\bf m}, \Sigma)$ for some positive definite matrix $\Sigma.$ In practice, $\Sigma$ is taken to be the covariance matrix of the data and $\boldsymbol{\pi}$ is the implied excess returns. The implied excess returns are the set of returns that clear the market. The vector of implied excess returns are derived as follows: 
\begin{align}
	\boldsymbol{\pi} \:&=\: \gamma \Sigma \boldsymbol{w}^{*}_{}, \label{EquiReturns}
\end{align}
where
\begin{align}
	\boldsymbol{\pi} \:\:&\:\: \textnormal{vector of implied excess returns} \nonumber \\
	\gamma \:\:&\:\ \textnormal{risk aversion coefficient} \nonumber \\
	\Sigma \:\:&\:\ \textnormal{covariance matrix of excess returns} \nonumber \\
	\boldsymbol{w}^{*}_{} \:\:&\:\ \textnormal{market capitalization weights of the assets} \nonumber
\end{align}

The risk aversion coefficient, $\gamma$, characterizes the expected risk-return tradeoff and set as 2.5 \textcolor{blue}{(He and Litterman, 1999)}. That is, it is the rate at which an investor will trade expected return for less risk. In the reverse optimization process, $\gamma$ acts as a scaling factor; more excess return per unit of risk (a larger $\gamma$) increases the estimated excess return.

In addition, the BL framework allows investors to inject their own opinions through a view vector $V \in \mathbb{R}^d, d\le k$ and a pick $d\times k$ matrix $P.$ In particular, the model assumes that
$$
{\bf V} | {\bf M = m} \sim N_d(P{\bf m}, \Omega_v).
$$
where $\Omega_v$ is a $d\times d$ diagonal matrix representing how confidence the investors are on their views. This distribution serves as the likelihood function. Together with the prior distribution of ${\bf M}$, we obtain the following posterior distribution:
\begin{align*}
	\pi^{post}_{{\bf M|V=v}}({\bf m}) &\propto \pi^{pr}_{\bf M}({\bf m})
	f_{V|{\bf M = m}}({\bf v})\\
	&\propto \varphi(\boldsymbol{m};\boldsymbol{\pi}, \tau \Sigma)\varphi({\bf v}; P{\bf m}, \Omega_v)\\
	&\propto \varphi({\bf m}; \boldsymbol{\mu}_{BL}, \Sigma_{BL})h({\bf v}),
\end{align*}
where
\begin{align*}
	\boldsymbol{\mu}_{BL} &= \Sigma_{BL}\left[(\tau \Sigma)^{-1}\boldsymbol{\pi} + P^t\Omega_v^{-1}{\bf v}\right]\\
	\Sigma_{BL} &= \left[(\tau \Sigma)^{-1} + P^t\Omega_v^{-1}P\right]^{-1}
\end{align*}
and $h({\bf v})$ is independent of ${\bf m}.$ (See Theorem \:\ref{ProdGaussian1}\: in Appendix.) Thus the posterior distribution is ${\bf M} | {\bf V = v} \sim N_k(\boldsymbol{\mu}_{BL}, \Sigma_{BL}).$ The ``predictive distribution" ${\bf R} | {\bf V = v}$ then can be obtained as follows.
\begin{align*}
	f_{{\bf R} | {\bf V = v}}({\bf r}) = \int f_{{\bf R}|{\bf M=m}}({\bf r})\pi^{post}_{{\bf M}|{\bf V=v}}({\bf m})d{\bf m}.
\end{align*}
Using Theorem \:\ref{ProdGaussian2}\: in Appendix, the integrand of the above integral can be written as
\begin{align*}
	f_{{\bf R}|{\bf M=m}}&({\bf r})\pi^{post}_{{\bf M}|{\bf V=v}}({\bf m})\\
	&= \varphi({\bf r};{\bf m}, \Sigma)\varphi({\bf m};\boldsymbol{\mu}_{BL}, \Sigma_{BL})\\
	&= \varphi({\bf r}; \boldsymbol{\mu}_{BL}, \Sigma_{BL}+\Sigma)
	\varphi({\bf m}; z({\bf r}, \boldsymbol{\mu}_{BL}), \Delta),
\end{align*}
where $\Delta = (\Sigma^{-1}+\Sigma_{BL}^{-1})^{-1}$ and $z({\bf r}, \boldsymbol{\mu}_{BL}) = \Delta(\Sigma^{-1}{\bf r} + \Sigma^{-1}_{BL}\boldsymbol{\mu}_{BL}).$ Thus by integrating out ${\bf m}$, we get the following predictive distribution ${\bf R} | {\bf V =v} \sim N_k(\boldsymbol{\mu}_{BL}, \Sigma_{BL} + \Sigma).$ So given a portfolio ${\bf w}$, the ``predictive distribution" of the asset returns follows
$$
{\bf R}_w = {\boldsymbol w}^{t}{\bf R} \sim N(\mu_{BL}^w, \Sigma_{BL}^w)
$$
where $\mu_{BL}^w = \boldsymbol{w}^{t}{\bf \mu}_{BL}$ and $\Sigma_{BL}^w = \boldsymbol{w}^{t}(\Sigma_{BL}+\Sigma)\boldsymbol{w}.$ The term ``predictive distribution" is a bit misleading in this case. In Bayesian statistics, this term refers to the distribution of a new observation given the observed data. Here it refers to the distribution of the returns given views. It's more like a posterior distribution. However, the process in obtaining this distribution is similar to that of a typical predictive distribution.

The normality of the ``predictive distribution" allows us to find the optimal portfolio by using the mean-variance method. In other words, by setting $\mu_{BL}^w = M \in \mathbb{R}$ we can write the appropriate quadratic programming as follows:
\begin{align}
	{\textnormal{min}} \:&\:\: \Sigma^{w}_{BL} \nonumber \\
	\text{ Subject to } \mu^{w}_{BL} &\: = \: M \nonumber \\
	\mathbf{1}^{t} \boldsymbol{w} &\: = \: 1.
	\label{MV-opt}
\end{align}

This optimization can be solved using Markowitz's mean-variance framework.

\subsection{Hidden Truncation Skew-Normal Black-Litterman Model}
We now assume that asset returns are hidden truncation skew-normal. The main goal of this subsection, which is also the main goal of this article, is to obtain the ``predictive distribution" of the asset returns once investor views are incorporated. We then apply this result to find the optimal portfolio. Recall that if ${\bf X} \sim SNT_k(\lambda_0, \boldsymbol{\lambda}_1, \boldsymbol{\mu}, \Sigma)$, then
\begin{align}
	\mathbb{E}({\bf X})
	&= \boldsymbol{\mu} + h(\lambda_0, \boldsymbol{\lambda}_1)
	\frac{\Sigma^{1/2}\boldsymbol{\lambda}_1}{\sqrt{1 + \boldsymbol{\lambda}_1^t\boldsymbol{\lambda}_1}} \label{eq:16},
\end{align}
where $h (\lambda_{0}, \boldsymbol{\lambda}_{1}) = \frac{\varphi \left( \frac{\lambda_{0}}{\sqrt{1 + \boldsymbol{\lambda}_{1}^{t} \boldsymbol{\lambda}_{1}}}\right) }{\Phi \left(\frac{\lambda_{0}}{\sqrt{1 + \boldsymbol{\lambda}_{1}^{t} \boldsymbol{\lambda}_{1}}}\right) }$. For compactness, let ${\bf s} = h(\lambda_0, \boldsymbol{\lambda}_1)\frac{\Sigma^{1/2}\boldsymbol{\lambda}_1}{\sqrt{1 + \boldsymbol{\lambda}_1^t\boldsymbol{\lambda}_1}}.$ Then we get $E({\bf X}) = \boldsymbol{\mu} + {\bf s}.$\\

In this subsection, similar in the classical BL model, we continue to assume that
\begin{align*}
	{\bf M} &\sim N_k(\boldsymbol{\pi}, \tau \Sigma)\\
	{\bf V} | {\bf M = m} &\sim N_d(P{\bf m}, \Omega_v)
\end{align*}
However, as for asset returns, we assume that
$$
{\bf R}|{\bf M =m} \sim SNT_k(\lambda_0, \boldsymbol{\lambda}_1, {\bf m - s}, \Sigma).
$$

As a result, the posterior distribution ${\bf M} | {\bf V = v} \sim N_k(\boldsymbol{\mu}_{BL}, \Sigma_{BL})$ as shown in the previous section. It remains to determine the ``predictive distribution" of ${\bf R}|{\bf V = v}.$ We now state the main result of this article.

\begin{thm}
\label{mainResult}
Assume that ${\bf M} \sim N_k(\boldsymbol{\pi}, \tau \Sigma), {\bf V} | {\bf M = m} \sim N_d(P{\bf m}, \Omega_v)$ and that ${\bf R}|{\bf M =m} \sim SNT_k(\lambda_0, \boldsymbol{\lambda}_1, {\bf m - s}, \Sigma).
$ Then
\begin{align}
	{\bf R}|{\bf V = v} \sim SNT_k(\tau_0, \boldsymbol{\tau}_1, \boldsymbol{\mu}_{BL}-{\bf s}, \Sigma + \Sigma_{BL}), \label{SNReturns}
\end{align}
where $\tau_0 = \frac{\lambda_0}{\sqrt{1+(\boldsymbol{\lambda}_1^*)^t\boldsymbol{\lambda}_1^*}}$
and $\boldsymbol{\tau}_1 = \frac{(\Sigma+\Sigma_{BL})^{-1/2}\Sigma^{1/2}\boldsymbol{\lambda}_1}
{\sqrt{1+(\boldsymbol{\lambda}_1^*)^t\boldsymbol{\lambda}_1^*}},$ and $\boldsymbol{\lambda}_1^* = - \Delta^{1/2}\Sigma^{-1/2}\boldsymbol{\lambda}_1.$
\end{thm}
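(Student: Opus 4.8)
The plan is to obtain the predictive law with essentially no new integration, by feeding a suitably stacked vector into the Affine Transformation theorem (Theorem~\ref{Linear Transformation}) already established above. In the hierarchical specification the returns ${\bf R}$ and the views ${\bf V}$ are conditionally independent given the latent mean ${\bf M}$ (this is implicit in the Black--Litterman setup, where returns and views are both driven by ${\bf M}$), so conditioning on ${\bf V}={\bf v}$ does not alter the law of ${\bf R}$ given ${\bf M}={\bf m}$; and the classical computation carried out above gives ${\bf M}\mid{\bf V}={\bf v}\sim N_k(\boldsymbol{\mu}_{BL},\Sigma_{BL})$. Put ${\bf R}_0:={\bf R}-{\bf M}$. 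Since pure translations leave the shape parameters of an $SNT$ law untouched (apply Theorem~\ref{Linear Transformation} with $A=I$, or substitute directly in the density), ${\bf R}_0\mid{\bf M}={\bf m},{\bf V}={\bf v}\sim SNT_k(\lambda_0,\boldsymbol{\lambda}_1,-{\bf s},\Sigma)$, a law free of ${\bf m}$. Hence, conditionally on ${\bf V}={\bf v}$, the vectors ${\bf R}_0$ and ${\bf M}$ are independent, with ${\bf R}_0\sim SNT_k(\lambda_0,\boldsymbol{\lambda}_1,-{\bf s},\Sigma)$, ${\bf M}\sim N_k(\boldsymbol{\mu}_{BL},\Sigma_{BL})$, and ${\bf R}={\bf R}_0+{\bf M}$.

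The next step is to realise $({\bf R}_0,{\bf M})$, given ${\bf V}={\bf v}$, as a single $SNT_{2k}$ vector. The key observation is that stacking an independent Gaussian block beneath an $SNT$ vector again yields an $SNT$ vector whose skewing parameter is padded with zeros: forming the product of the $SNT_k$ density of ${\bf R}_0$ with the $N_k$ density of ${\bf M}$, the two Gaussian factors merge into $\varphi\big(({\bf r}_0,{\bf m});(-{\bf s},\boldsymbol{\mu}_{BL}),\text{diag}(\Sigma,\Sigma_{BL})\big)$, while the argument $\lambda_0+\boldsymbol{\lambda}_1^t\Sigma^{-1/2}({\bf r}_0+{\bf s})$ of the skewing factor equals $\lambda_0+(\boldsymbol{\lambda}_1,{\bf 0})^t\,\text{diag}(\Sigma,\Sigma_{BL})^{-1/2}\big(({\bf r}_0,{\bf m})-(-{\bf s},\boldsymbol{\mu}_{BL})\big)$ because the symmetric square root of a block-diagonal matrix is the block-diagonal of the square roots; the normalizing constant is unchanged since $(\boldsymbol{\lambda}_1,{\bf 0})^t(\boldsymbol{\lambda}_1,{\bf 0})=\boldsymbol{\lambda}_1^t\boldsymbol{\lambda}_1$. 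Therefore, given ${\bf V}={\bf v}$,
\[
\begin{pmatrix}{\bf R}_0\\ {\bf M}\end{pmatrix}\ \sim\ SNT_{2k}\!\left(\lambda_0,\ \begin{pmatrix}\boldsymbol{\lambda}_1\\ {\bf 0}\end{pmatrix},\ \begin{pmatrix}-{\bf s}\\ \boldsymbol{\mu}_{BL}\end{pmatrix},\ \begin{pmatrix}\Sigma & 0\\ 0 & \Sigma_{BL}\end{pmatrix}\right).
\]

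Now ${\bf R}={\bf R}_0+{\bf M}$ is the image of this vector under $A=[\,I_k\ \ I_k\,]$, so Theorem~\ref{Linear Transformation} applies directly. One reads off $\boldsymbol{\mu}_y=-{\bf s}+\boldsymbol{\mu}_{BL}=\boldsymbol{\mu}_{BL}-{\bf s}$ and $\Sigma_y=A\,\text{diag}(\Sigma,\Sigma_{BL})\,A^t=\Sigma+\Sigma_{BL}$; and with $H=\text{diag}(\Sigma^{1/2},\Sigma_{BL}^{1/2})\,A^t$ one finds $H^t(\boldsymbol{\lambda}_1,{\bf 0})=\Sigma^{1/2}\boldsymbol{\lambda}_1$ and $(\boldsymbol{\lambda}_1,{\bf 0})^t[I-H\Sigma_y^{-1}H^t](\boldsymbol{\lambda}_1,{\bf 0})=\boldsymbol{\lambda}_1^t\big[I-\Sigma^{1/2}(\Sigma+\Sigma_{BL})^{-1}\Sigma^{1/2}\big]\boldsymbol{\lambda}_1$. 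Substituting into the formulas of Theorem~\ref{Linear Transformation} yields $\boldsymbol{\tau}_1=(\Sigma+\Sigma_{BL})^{-1/2}\Sigma^{1/2}\boldsymbol{\lambda}_1\big/\sqrt{1+\boldsymbol{\lambda}_1^t[I-\Sigma^{1/2}(\Sigma+\Sigma_{BL})^{-1}\Sigma^{1/2}]\boldsymbol{\lambda}_1}$, and the analogous $\tau_0$ with numerator $\lambda_0$. The final move is to simplify the denominator: the matrix-inversion identity $\Sigma-\Sigma(\Sigma+\Sigma_{BL})^{-1}\Sigma=(\Sigma^{-1}+\Sigma_{BL}^{-1})^{-1}=\Delta$ gives $I-\Sigma^{1/2}(\Sigma+\Sigma_{BL})^{-1}\Sigma^{1/2}=\Sigma^{-1/2}\Delta\Sigma^{-1/2}$, whence $\boldsymbol{\lambda}_1^t[I-\Sigma^{1/2}(\Sigma+\Sigma_{BL})^{-1}\Sigma^{1/2}]\boldsymbol{\lambda}_1=\|\Delta^{1/2}\Sigma^{-1/2}\boldsymbol{\lambda}_1\|^2=(\boldsymbol{\lambda}_1^*)^t\boldsymbol{\lambda}_1^*$ with $\boldsymbol{\lambda}_1^*=-\Delta^{1/2}\Sigma^{-1/2}\boldsymbol{\lambda}_1$. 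This is exactly the asserted $(\tau_0,\boldsymbol{\tau}_1)$, and $\boldsymbol{\mu}_y,\Sigma_y$ already match, giving ${\bf R}\mid{\bf V}={\bf v}\sim SNT_k(\tau_0,\boldsymbol{\tau}_1,\boldsymbol{\mu}_{BL}-{\bf s},\Sigma+\Sigma_{BL})$.

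I expect only two points to require genuine care: the claim that a Gaussian block appended to an $SNT$ vector stays $SNT$ with a zero-padded skewing vector, which rests on the block-diagonal-square-root remark together with the precise way the $SNT$ density is parametrized through $\Sigma^{-1/2}$; and spotting the Woodbury identity $\Sigma-\Sigma(\Sigma+\Sigma_{BL})^{-1}\Sigma=(\Sigma^{-1}+\Sigma_{BL}^{-1})^{-1}$ needed to rewrite the raw output of Theorem~\ref{Linear Transformation} in terms of $\boldsymbol{\lambda}_1^*$; everything else is bookkeeping. As an independent check, one can instead evaluate $f_{{\bf R}\mid{\bf V}={\bf v}}({\bf r})=\int f_{{\bf R}\mid{\bf M}={\bf m}}({\bf r})\,\varphi({\bf m};\boldsymbol{\mu}_{BL},\Sigma_{BL})\,d{\bf m}$ head-on: the Gaussian product identity (Theorem~\ref{ProdGaussian2}) peels off the factor $\varphi({\bf r}+{\bf s};\boldsymbol{\mu}_{BL},\Sigma+\Sigma_{BL})$, and the residual integral is $\mathbb{E}\big[\Phi(\lambda_0+\boldsymbol{\lambda}_1^t\Sigma^{-1/2}({\bf r}+{\bf s}-{\bf M}))\big]$ over a Gaussian ${\bf M}$, which the elementary identity $\mathbb{E}[\Phi(W)]=\Phi\big(\mathbb{E}[W]/\sqrt{1+\mathrm{Var}(W)}\big)$ collapses to the same $SNT$ skewing factor.
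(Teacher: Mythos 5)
Your proof is correct, but it takes a genuinely different route from the paper's. The paper evaluates $f_{{\bf R}\mid{\bf V}={\bf v}}({\bf r})=\int f_{{\bf R}\mid{\bf M}={\bf m}}({\bf r})\,\pi^{post}({\bf m})\,d{\bf m}$ head-on: it factors the Gaussian parts with the product-of-Gaussians lemma (Theorem \ref{ProdGaussian2}), rewrites the argument of the skewing factor as $\lambda_0^*+(\boldsymbol{\lambda}_1^*)^t\Delta^{-1/2}({\bf m}-{\bf z})$ so that the remaining integral collapses via the $\mathbb{E}[\Phi(\mathbf{a}^T\boldsymbol{X}+b)]$ identity, and then uses Sherman--Morrison--Woodbury to simplify ${\bf r}^*-{\bf z}=\Sigma(\Sigma+\Sigma_{BL})^{-1}({\bf r}^*-\boldsymbol{\mu}_{BL})$ and read off $(\tau_0,\boldsymbol{\tau}_1)$ --- this is exactly the "independent check" you sketch at the end. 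Your main argument instead avoids integration entirely: you exploit that ${\bf R}_0={\bf R}-{\bf M}$ has a conditional law free of ${\bf m}$, hence is independent of ${\bf M}$ given ${\bf V}={\bf v}$, stack $({\bf R}_0,{\bf M})$ into an $SNT_{2k}$ vector with zero-padded skewing vector, and push it through the already-proved affine-closure result (Theorem \ref{Linear Transformation}) with $A=[\,I\ \ I\,]$; the same Woodbury identity then converts $\boldsymbol{\lambda}_1^t[I-\Sigma^{1/2}(\Sigma+\Sigma_{BL})^{-1}\Sigma^{1/2}]\boldsymbol{\lambda}_1$ into $(\boldsymbol{\lambda}_1^*)^t\boldsymbol{\lambda}_1^*$. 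I checked the two delicate points you flag and both hold: the zero-padding claim is immediate from the block-diagonal symmetric square root and the way the $SNT$ density is written through $\Sigma^{-1/2}$, and the computation of $H$, $H^t(\boldsymbol{\lambda}_1,{\bf 0})$ and $H\Sigma_y^{-1}H^t$ gives exactly the stated $(\tau_0,\boldsymbol{\tau}_1,\boldsymbol{\mu}_{BL}-{\bf s},\Sigma+\Sigma_{BL})$. What your route buys is conceptual transparency (the predictive return is an affine image of a jointly $SNT$ vector, so closure is structural rather than a computational accident) and maximal reuse of Theorem \ref{Linear Transformation}; what the paper's route buys is a self-contained density-level derivation that exhibits the intermediate objects $\lambda_0^*$ and ${\bf z}$ explicitly. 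Both arguments rest on the conditional independence of ${\bf R}$ and ${\bf V}$ given ${\bf M}$, which you state explicitly and the paper leaves implicit in its first integral identity.
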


\begin{proof}
See Appendix.
\end{proof}

Thus for a given portfolio $\boldsymbol{w}$, the distribution of the portfolio return $R_w = \boldsymbol{w}^{t}{\bf R}$ can be readily shown to follow a skew normal distribution. Thus the classical mean-variance portfolio optimization is no longer applicable. In this case, to find the optimal portfolio, we use Simaan's model. As shown in the earlier section, this is equivalent of assuming that ${\bf R|M=m}$ follows a hidden truncated skew normal distribution as in the above theorem with $\lambda_0 = 0$. 

For the special case $\lambda_0 = 0,$ hence, $\tau_0 = 0,$ we have
\begin{align}
	{\bf R}|{\bf V = v} \sim SNT_k(0, \boldsymbol{\tau}_1, \boldsymbol{\mu}_{skBL}, \Sigma_{skBL}) \label{eq:18},
\end{align}
where $\boldsymbol{\mu}_{skBL} = \boldsymbol{\mu}_{BL} - \textbf{s}, \Sigma_{skBL} = \Sigma+\Sigma_{BL}$ and ${\bf s} = \sqrt{\frac{2}{\pi}}\frac{{\bf \Sigma}^{1/2}\boldsymbol{\lambda}_1}{\sqrt{1 + \boldsymbol{\lambda}_1^t\boldsymbol{\lambda}_1}}.$ Since $R_{\boldsymbol{w}} | {\bf V=v}$ is a linear transformation of ${\bf R} | {\bf V =v},$ based on Corollary (\ref{PortSNT}), we thus have
\begin{align}
	R_{\boldsymbol{w}} | {\bf V=v} \:\sim\: SNT(0, \tau^{w}_{1},  \mu^{w}_{skBL}, \Sigma^{w}_{skBL}), \label{eq:19}
\end{align}
where
\begin{align*}
	\mu^{w}_{skBL} \:&=\: \boldsymbol{w}^{t}\mu_{skBL} \\
	\Sigma^{w}_{skBL} \:&=\: \boldsymbol{w}^{t}(\Sigma_{skBL})\boldsymbol{w} \\
	\tau^{w}_{1} \:&=\: \frac{(\Sigma^{w}_{skBL})^{-\frac{1}{2}} H^t \boldsymbol{\tau}_{1}}{\sqrt{1 + \boldsymbol{\tau}^{t}_{1} \left[ I - H\: (\Sigma^{w}_{skBL})^{-1} H^{t} \right] \boldsymbol{\tau}_{1}}} \\
	H \:&=\: (\Sigma_{skBL})^{\frac{1}{2}} \boldsymbol{w}
\end{align*}

In this representation, we are ready to find the optimal portfolio by applying the technique discussed in Section 2. In particular, let $V_{sp}^w = \boldsymbol{w}^{t}(\Sigma_{skBL} - \mathbf{bb}^{t})\boldsymbol{w}$ denote the spherical component of the variance where ${\bf b} = W{\boldsymbol \delta}$ where $W$ is the diagonal matrix of the standard deviations of $\Sigma_{skBL}, \delta = \frac{\overline{\Sigma}_{skBL}\alpha}{(1+\alpha^t\overline{\Sigma_{skBL}}\alpha)^{1/2}},$ with $\boldsymbol \alpha = W\Sigma_{skBL}^{-1/2}\boldsymbol \tau_1$ as discussed in the earlier section. Then the optimal portfolio can be obtained by solving the following problem:
\begin{align}
	\underset{w}{\textnormal{min}} \:\:\:\:\:\:\:\:\:\:\:\:\: \boldsymbol{w}^{t}(\Sigma_{skBL} - \mathbf{b} \mathbf{b}^{t}) \boldsymbol{w} \label{eq:23} \\
	\textnormal {subject to} \:\:\:\:\: \boldsymbol{w}^{t}\mu_{skBL} \:&=\: M \nonumber \\
	\boldsymbol{w}^{t}b \:&=\: N \nonumber\\
	\boldsymbol{w}^{t}\mathbf{1} \:&=\: 1. \nonumber
\end{align}

Once the optimal portfolio has been found, at a given $M$ and $N$, the variance of this optimal portfolio can be obtained based on the result of Equation \ref{portVar}. In particular, with $b_w = \boldsymbol{w}^{t}b,$ we have
\begin{align}
	V(R_w) &= V_{np} + V_{sp} = \sigma^2_{|X|}b^2_w + \boldsymbol{w}^{t}(\Sigma_{skBL} - \mathbf{bb}^{t})\boldsymbol{w} \nonumber \\
	&= (1 - \frac{2}{\pi})b^2_w + \Sigma_{skBL}^w - b^2_w \nonumber \\
	&= \Sigma_{skBL}^w - \frac{2}{\pi}N^2. \label{eq:24}
\end{align}

\newpage

\section{Empirical Analysis}
In this section, we outline and demonstrate the Black-Litterman allocation procedure in skew-normal markets, which involves several major steps. We first analyze the time series data to estimate the key parameters of the hidden truncation skew-normal distribution. Then after incorporating views, the optimal portfolios are obtained for various values of the target means and non-spherical variances. For comparisons, the classical Black-Litterman optimal portfolios are also obtained.

\subsection{Data}
For illustration, we consider in this paper 217 observations of monthly log-returns from December 2004 to November 2022 on 13 
equities: Apple, Amazon, Google, Microsoft, ExxonMobil, Intel, Verizon, Coca-Cola, Netflix, Comcast, NVIDIA, Starbucks, and Walmart. Figure~\ref{TS} presents the monthly stock prices for the 13 equities. 
Figures~\ref{Histograms} and~\ref{QQ} present a histogram and Q--Q plot for the 13 equities. Note that they indicate the return distributions are skewed.

\begin{landscape}
	\begin{figure}[h!]
		\centering
		{{\includegraphics[width=22cm]{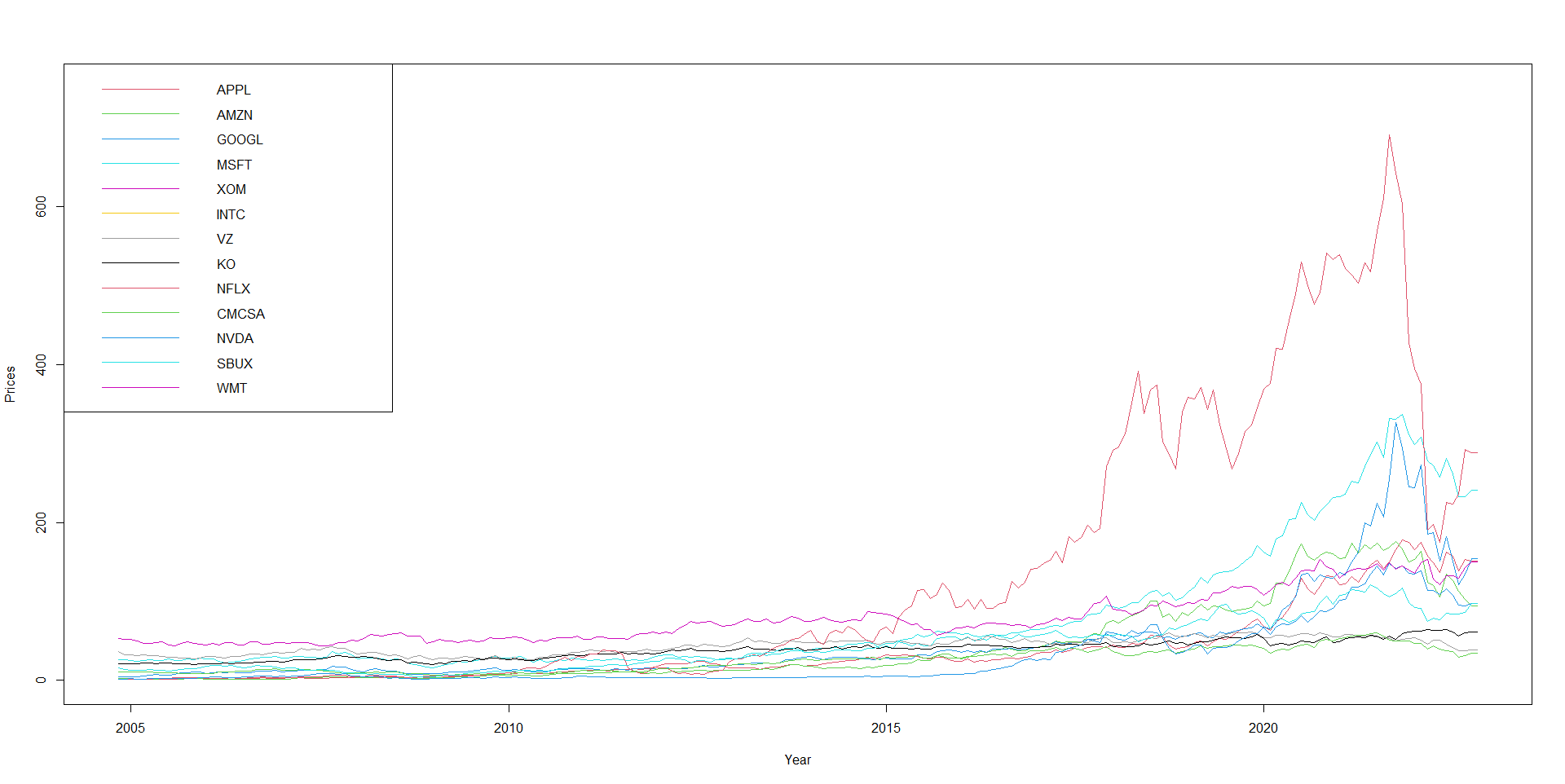}}}%
		\qquad
		\caption{\small Monthly Stock Prices. Figure 4 presents monthly stock prices for the 13 equities for the period from December 2004 to November 2022.}%
		\label{TS}%
	\end{figure}
\end{landscape}

\begin{landscape}
	\begin{figure}[h!]%
		\centering
		{{\includegraphics[width=22cm]{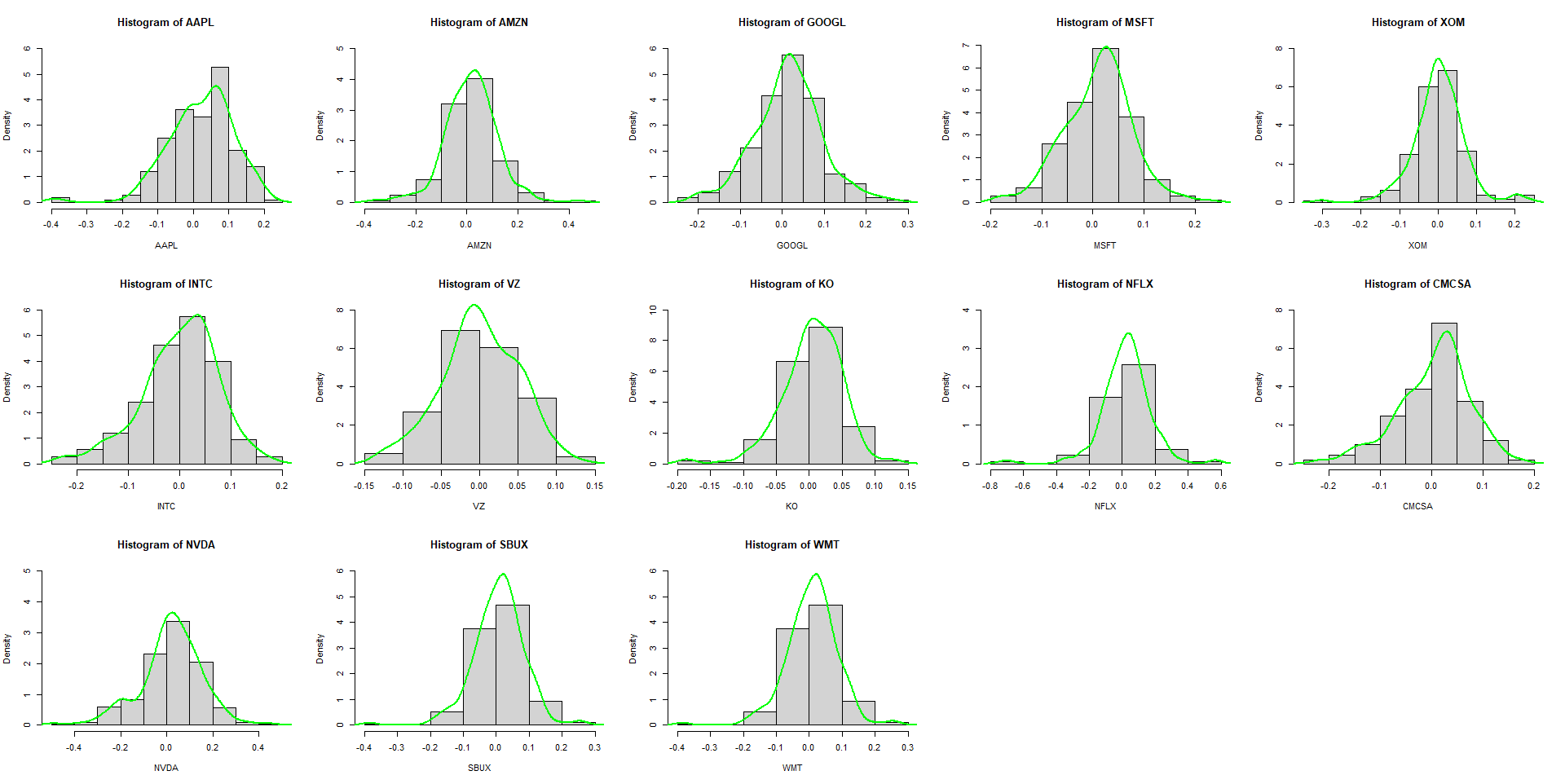}}}%
		\qquad
		\caption{\small Log Returns Distribution. Figure 3 presents the log returns distribution of the 13 equities for the period from December 2004 to November 2022.}%
		\label{Histograms}%
	\end{figure}
\end{landscape}

\begin{figure}[h!]
	\centering
	{{\includegraphics[width=13cm]{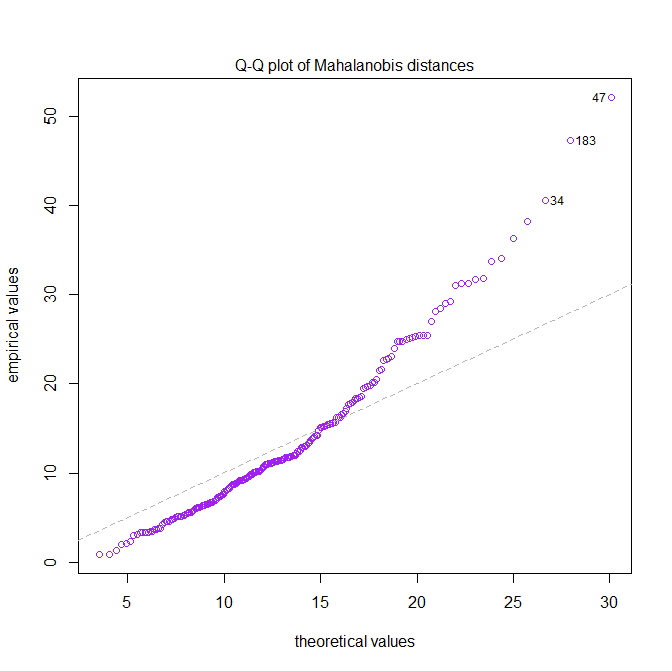}}}%
	\qquad
	\caption{\small Quantile-quantile (Q--Q) plot. Figure 4 presents a Q-Q plot for the 13 equities. 
		As shown, the points forming curves deviate markedly from a straight line.}%
	\label{QQ}%
\end{figure}

\subsection{Black-Litterman Allocation in Normal Markets}
Now the investor has views on the 13 stocks in her portfolio. The market models are give by:
\begin{align}
\boldsymbol{R} | \boldsymbol{M = m} \:&\sim\: N_{13}(\boldsymbol{m}, \Sigma)\\
\boldsymbol{M} \:&\sim\: N_{13}(\boldsymbol{\pi}, \tau \Sigma),
\end{align}
where $\boldsymbol{\pi}$ is the vector of implied excess
returns (see Equation \ref{EquiReturns}) and $\tau = 0.025$ \textcolor{blue}{(He and Litterman, 1999)}. Table~\ref{BL-largecap-summary} reports the vector of the market capitalization weights $\boldsymbol{w}^{*}_{}$, historical return $\boldsymbol{\hat{\mu}}_{}$, implied excess return $\boldsymbol{\pi}_{}$, and Black-Litterman mean vector $\boldsymbol{m}_{BL}$. Table~\ref{BL-largecap-covar} shows the covariance matrix $\widehat{\Sigma_{}}$.\\

\begin{table}[!htbp] \small
\caption{\small This table reports the vector of the market capitalization weights $\boldsymbol{w}^{*}_{}$, historical return $\boldsymbol{\hat{\mu}}_{}$, implied excess return $\boldsymbol{\pi}_{}$, and Black-Litterman mean vector $\boldsymbol{m}_{BL}$.} 
\centering 
\begin{tabular}{c c c c c c } 
\hline 
Ticker & $\hat{\boldsymbol{w}}^{*}_{}$ &  $\hat{\boldsymbol{\mu}}_{}$ & $\hat{\boldsymbol{\pi}}_{}$ & $\hat{\boldsymbol{m}}_{BL}$ \\ [0.3ex] 
\hline 
AAPL & 0.2787 &  0.0226 & 0.0114 & 0.0134 \\ 
AMZN & 0.1115  & 0.0174 & 0.0100 & 0.0117 \\
GOOGL & 0.1466 & 0.0139 & 0.0087 & 0.0167 \\
MSFT & 0.2088 & 0.0102 & 0.0072 & 0.0022 \\
XOM & 0.0536 & 0.0036 & 0.0038 & 0.0019 \\
INTC & 0.0143 & 0.0011 & 0.0064 & 0.0042 \\
VZ & 0.0188 & 0.0003 & 0.0025 & 0.0043 \\
KO & 0.0307 & 0.0050 & 0.0027 & --0.0002{ } \\
NFLX & 0.0149 & 0.0236 & 0.0074 & 0.0049 \\
CMCSA & 0.0173 & 0.0053 & 0.0042 & --0.0052{ } \\
NVDA & 0.0446 & 0.0053 & 0.0125 & 0.0116 \\
SBUX & 0.0130 & 0.0053 & 0.0053 & 0.0039 \\
WMT & 0.0474 & 0.0053 & 0.0018 & 0.0002 \\
\hline 
\end{tabular}
\label{BL-largecap-summary} 
\end{table}

\begin{landscape}
\begin{table}[!htbp] \centering
\addtolength{\tabcolsep}{-4pt}{\tiny{\tiny}}
\caption{This table reports the covariance matrix, $\widehat{\Sigma_{}}$.}
\label{table:Sigma}
\begin{tabular}{@{\extracolsep{5pt}} cccccccccccccc}
	\\[-1.8ex]\hline
	\hline \\[-1.8ex]
	& AAPL & AMZN & GOOGL & MSFT & XOM & INTC & VZ & KO & NFLX & CMCSA & NVDA & SBUX & WMT \\
	\hline \\[-1.8ex]
	AAPL & $0.009$ & $0.004$ & $0.004$ & $0.003$ & $0.002$ & $0.003$ & $0.001$ & $0.001$ & $0.002$ & $0.002$ & $0.006$ & $0.002$ & $0.001$ \\
	AMZN & $0.004$ & $0.010$ & $0.004$ & $0.003$ & $0.001$ & $0.003$ & $0.001$ & $0.001$ & $0.006$ & $0.001$ & $0.005$ & $0.003$ & $0.001$ \\
	GOOGL & $0.004$ & $0.004$ & $0.006$ & $0.003$ & $0.001$ & $0.002$ & $0.001$ & $0.001$ & $0.003$ & $0.002$ & $0.004$ & $0.003$ & $0.001$ \\
	MSFT & $0.003$ & $0.003$ & $0.003$ & $0.004$ & $0.001$ & $0.002$ & $0.001$ & $0.001$ & $0.003$ & $0.002$ & $0.004$ & $0.002$ & $0.001$ \\
	XOM & $0.002$ & $0.001$ & $0.001$ & $0.001$ & $0.005$ & $0.002$ & $0.001$ & $0.001$ & $0.0004$ & $0.002$ & $0.002$ & $0.0005$ & $0.0003$ \\
	INTC & $0.003$ & $0.003$ & $0.002$ & $0.002$ & $0.002$ & $0.005$ & $0.001$ & $0.001$ & $0.002$ & $0.002$ & $0.004$ & $0.002$ & $0.001$ \\
	VZ & $0.001$ & $0.001$ & $0.001$ & $0.001$ & $0.001$ & $0.001$ & $0.003$ & $0.001$ & $$-$0.0004$ & $0.002$ & $0.002$ & $0.001$ & $0.001$ \\
	KO & $0.001$ & $0.001$ & $0.001$ & $0.001$ & $0.001$ & $0.001$ & $0.001$ & $0.002$ & $0.0001$ & $0.001$ & $0.001$ & $0.001$ & $0.001$ \\
	NFLX & $0.002$ & $0.006$ & $0.003$ & $0.003$ & $0.0004$ & $0.002$ & $$-$0.0004$ & $0.0001$ & $0.023$ & $0.001$ & $0.005$ & $0.003$ & $$-$0.00004$ \\
	CMCSA & $0.002$ & $0.001$ & $0.002$ & $0.002$ & $0.002$ & $0.002$ & $0.002$ & $0.001$ & $0.001$ & $0.005$ & $0.003$ & $0.002$ & $0.001$ \\
	NVDA & $0.006$ & $0.005$ & $0.004$ & $0.004$ & $0.002$ & $0.004$ & $0.002$ & $0.001$ & $0.005$ & $0.003$ & $0.018$ & $0.003$ & $0.001$ \\
	SBUX & $0.002$ & $0.003$ & $0.003$ & $0.002$ & $0.0005$ & $0.002$ & $0.001$ & $0.001$ & $0.003$ & $0.002$ & $0.003$ & $0.006$ & $0.0005$ \\
	WMT & $0.001$ & $0.001$ & $0.001$ & $0.001$ & $0.0003$ & $0.001$ & $0.001$ & $0.001$ & $$-$0.00004$ & $0.001$ & $0.001$ & $0.0005$ & $0.002$ \\
	\hline \\[-1.8ex]
\end{tabular}
\label{BL-largecap-covar}
\end{table}
\end{landscape}

Suppose the investor has the following views on the expected monthly returns:
\begin{enumerate}
\item Apple will almost surely perform 1.5\% in the following month.
\item Google will outperform Microsoft by 2.5\% in the following month.
\item Verizon will outperform Comcast by 2\% in the following month.
\item Coca-Cola will perform $-$0.5\% in the following month.
\end{enumerate}

Then we formulate an investor's views as follows:
\begin{align}
\boldsymbol{V} | \boldsymbol{M} = {\bf m} \:\sim\: N_{4}(P{\bf m}, \Omega_v)
\end{align}

The above views and the pick matrix can be specified as follows:
\setcounter{MaxMatrixCols}{50}
\begin{align}
{\bf v}_{4 \times 1} \:=\: \begin{pmatrix}
1.5\% \\
2.5\% \\
2\% \\
0.5\%
\end{pmatrix}
\end{align}
\begin{align}
\boldsymbol{\textnormal{P}}_{4 \times 10} \:=\: \begin{pmatrix}
1 & 0 & 0 & 0 & 0 & 0 & 0 & 0 & 0 & 0 \\
0 & 0 & 1 & -1 & 0 & 0 & 0 & 0 & 0 & 0 \\
0 & 0 & 0 & 0 & 0 & 0 & 1 & 0 & 0 & -1 \\
0 & 0 & 0 & 0 & 0 & 0 & 0 & -1 & 0 & 0
\end{pmatrix}
\end{align}

And the following matrix gives a measure of the investors' confidence level on the views. We assume that the investor has a high confidence in the views \textcolor{blue}{(Idzorek, 2004)}.
\begin{align}
\Omega_{v_{4 \times 4}} \:=\: \begin{pmatrix}
1\%^{2} & 0 & 0 & 0 \\
0 & 1\%^{2} & 0 & 0 \\
0 & 0 & 1\%^{2} & 0 \\
0 & 0 & 0 & 1\%^{2}
\end{pmatrix}
\end{align}

The zeros on the off-diagonal elements reflect that the views are uncorrelated.

Suppose an investor solves her portfolio problem by setting the monthly target return to 1.25\%. Then the Black-Litterman optimization problem is given by:
\begin{align}
\underset{w}{\textnormal{min}} \:\:\:\:\:\:\: \boldsymbol{w}^{t}(\Sigma \:+ \:&\Sigma_{BL})\boldsymbol{w} \label{eq:33} \\
\textnormal {subject to} \:\:\:\:\:\:\:\:\: \boldsymbol{w}^{t}\boldsymbol{\mu}_{BL} \:&=\: 1.25 \nonumber\\
\mathbf{1}^{t}\boldsymbol{w} \:&=\: 1. \nonumber
\end{align}

Thus we have:
\begin{align}
\boldsymbol{R}_{\boldsymbol{w}^{Opt}_{BL}} \:\sim\: N(1.25\%, 4.95\%)
\end{align}

Tables~\ref{BL-largecap-port} 
reports the Black-Litterman optimal portfolios given various corresponding monthly target returns $M$ obtained by the optimization problem (\ref{eq:33}).

\begin{table}[!htbp] \small
\caption{\small This table reports the BL optimal portfolios, $\boldsymbol{w}^{Opt}_{BL_{i}}$ given the corresponding monthly target returns, $M_{1} = 0.42\%$, $M_{2} = 0.83\%$, $M_{3} = 1.25\%$, and $M_{4} = 1.67\%$, obtained by the optimization problems (\ref{eq:33}).} 
\centering 
\begin{tabular}{c c c c c} 
\hline 
Ticker & $\boldsymbol{w}^{Opt}_{BL_{1}}$ & $\boldsymbol{w}^{Opt}_{BL_{2}}$ & $\boldsymbol{w}^{Opt}_{BL_{3}}$ & $\boldsymbol{w}^{Opt}_{BL_{4}}$ \\ [0.3ex] 
\hline 
AAPL & 0.039 & 0.090 & 0.143 & 0.196 \\ 
AMZN & --0.014{ } & 0.001 & 0.016 & 0.031 \\
GOOGL & 0.094 & 0.233 & 0.374 & 0.516 \\
MSFT & --0.034{ } & --0.135{ } & --0.237{ } &  --0.340{ } \\
XOM & 0.117 & 0.114 & 0.110 & 0.107 \\
INTC & 0.043 & 0.041 & 0.039 & 0.037 \\
VZ & 0.281 & 0.389 & 0.499 &  0.609 \\
KO & 0.190 & 0.119 & 0.046 & --0.026{ } \\
NFLX & 0.048 & 0.046 & 0.044 & 0.041 \\
CMCSA & --0.103{ } & --0.221{ } & --0.342{ } & --0.463{  } \\
NVDA & --0.030{ } & --0.022{ } & --0.014{ } & --0.006{  } \\
SBUX & 0.069 & 0.065 & 0.061 & 0.057 \\
 WMT & 0.299 & 0.280 & 0.260 & 0.240 \\ 
\hline 
\end{tabular}
\label{BL-largecap-port} 
\end{table}

\subsection{Black-Litterman Allocation in Skew-Normal Markets}
We first perform a statistical procedure, the likelihood ratio test, to see if the skew parameter is statistically significant for the current data. The values of the test are compared with the values of a chi-squared distribution with 13 degrees of freedom which is the dimension of the skew parameter $\boldsymbol{\lambda}_1$ in this example. The results in Table~\ref{likelihoodRatioTest} prove that the skew-normal assumption is more appropriate and promising distribution for our data because it can account for skewness. Table~\ref{MLE-largecap} 
reports the results of the maximum likelihood estimation (MLE) of parameters $\boldsymbol{\mu}, \boldsymbol{\lambda}_{1}, \text{and } \boldsymbol{\tau_1}$ for the 13 equities.

\begin{table}[!htbp]
\small
\caption{\small Likelihood ratio test ($H_{0}:$ $\lambda_1 = 0$)} 
\centering 
\begin{tabular}{c c} 
\\[-1.8ex]\hline
\hline \\[-1.8ex]
& Observed--data \\ [0.3ex] 
\hline 
Log-likelihood normal & 3652.19 \\ 
Log-likelihood skew-normal & 3668.80 \\
2 $\times$ ln(LR) & 33.212 \\
P-value & 0.0015 \\ 
\hline \\[-1.8ex] 
\end{tabular}
\label{likelihoodRatioTest} 
\end{table}

\begin{table}[!htbp]
\small
\caption{\small Maximum likelihood estimates of $\boldsymbol{\mu}$, $\boldsymbol{\lambda}_{1}$, and $\boldsymbol{\tau}_{1}$.}
\centering 
\begin{tabular}{c c c c} 
\\[-1.8ex]\hline
\hline \\[-1.8ex]
Ticker & $\hat{\boldsymbol{\mu}}$ & $\hat{\boldsymbol{\lambda}}_{1}$ & $\hat{\boldsymbol{\tau}}_{1}$ \\ [0.3ex] 
\hline 
AAPL &  0.113 & --1.918{ } & --1.783{ } \\ 
AMZN & 0.077 & --0.648{ } & --0.600{ } \\
GOOGL & 0.074 & --0.938{ } & --0.864{ } \\
MSFT & 0.063  & --0.889{ } & --0.817{ } \\
XOM & 0.043 & --0.715{ } & --0.658{ } \\
INTC & 0.061 & --1.155{ } & --1.062{ } \\
VZ & 0.023 & --0.341{ } & --0.309{ } \\
KO & 0.031 & --0.806{ } & --0.745{ } \\
NFLX & 0.088 & --0.769{ } & --0.706{ } \\
CMCSA & 0.055 & --1.191{ } & --1.099{ } \\
NVDA & 0.138 & --1.744{ } & --1.602{ } \\
SBUX & 0.036 & 0.018 & 0.015 \\
WMT & 0.013 & 0.111 & 0.101 \\ 
\hline \\[-1.8ex] 
\end{tabular}
\label{MLE-largecap} 
\end{table}
\newpage

Assume now that the investor has the same views as earlier. The skew-normal market models are given by:
\begin{align}
\boldsymbol{R}|\boldsymbol{M} = \boldsymbol{m} \:&\sim\: SNT_k(\lambda_0=0, \boldsymbol{\lambda}_1, {\bf m - s}, \Sigma) \\
\boldsymbol{M} \:&\sim\: N_{k}(\pi, \tau \Sigma)
\end{align}

In addition, by Theorem \ref{mainResult}, we have
\begin{align}
\boldsymbol{R}|\boldsymbol{V} = \boldsymbol{v} \:&\sim\: SNT_k(\tau_0=0, \boldsymbol{ \tau}_1, \boldsymbol{\mu}_{BL} - \textbf{s}, \Sigma + \Sigma_{BL})
\end{align}

Now we solve the Black-Litterman allocation problem under the skew-normal assumption as shown in (\ref{eq:23}). For example, we pick one target portfolio monthly return (1.25\%) and six different values of non-spherical variance ($N$ = 0, 0.005, 0.01, 0.015, 0.02, 0.025, 0.03, and 0.04). The results of the skew-normal BL optimal portfolios $\boldsymbol{w}^{Opt}_{skBL,i}$ for $i = 1,...,8$ are reported in Table~\ref{skewBL-largecap-port}. 

\begin{table}[!htbp]
\small
\caption{\small Black-Litterman Skew-Normal Optimal Portfolios Given $M = 1.25\%$ and $N = 0, 0.005, 0.01, 0.015, 0.02, 0.025, 0.03$, and $0.04$} 
\centering 
\begin{tabular}{c c c c c c c c c} 
\\[-1.8ex]\hline
\hline \\[-1.8ex]
Ticker & $\boldsymbol{w}^{Opt}_{N = 0}$ & $\boldsymbol{w}^{Opt}_{N = 0.005}$ & $\boldsymbol{w}^{Opt}_{N = 0.01}$ & $\boldsymbol{w}^{Opt}_{N = 0.015}$ & $\boldsymbol{w}^{Opt}_{N = 0.02}$ &
$\boldsymbol{w}^{Opt}_{N = 0.025}$ & $\boldsymbol{w}^{Opt}_{N = 0.03}$ & $\boldsymbol{w}^{Opt}_{N = 0.04}$ \\ [0.3ex] 
\hline 
AAPL & --0.037{ } & --0.018{ } & 0.002 & 0.022 & 0.042 & 0.061 & 0.081 & 0.120 \\ 
AMZN & --0.036{ } & --0.026{ } & --0.015{ } & --0.004{ } & 0.007 & 0.018 & 0.029 & 0.050 \\
GOOGL & --0.052{ } & --0.007{ } & 0.037 & 0.082 & 0.126 & 0.170 & 0.215 & 0.303\\
MSFT & 0.164 & 0.037 & --0.089{ } & --0.216{ } & --0.342{ } & --0.468{ } & --0.594{ } & --0.847{ } \\
XOM & 0.167 & 0.128 & 0.089 & 0.050 & 0.011 & --0.029{ } & --0.068{ } & --0.146{ } \\
INTC & 0.060 & 0.037 & 0.014 & --0.009{ } & --0.032{ } & --0.055{ } & --0.078{ } & --0.124{ } \\
VZ & 0.186 & 0.224 & 0.261 & 0.299 & 0.336 & 0.374 & 0.411 & 0.486 \\
KO & 0.144 & 0.231 & 0.318 & 0.404 & 0.491 & 0.578 & 0.665 & 0.839 \\
NFLX & 0.027 & 0.048 & 0.069 & 0.090 & 0.111 & 0.132 & 0.153 & 0.196 \\
CMCSA & 0.007 & --0.039{ } & --0.086{ } & --0.133{ } & --0.179{ } & --0.226{ } & --0.273{ } & --0.366{ } \\
NVDA & --0.056{ } & --0.048{ } & --0.039{ } & --0.031{ } & --0.022{ } & --0.014{ } & --0.005{ } & 0.012 \\
SBUX & 0.116 & 0.091 & 0.065 & 0.040 & 0.014 & --0.011{ } & --0.037{ } & --0.086{ } \\
WMT & 0.310 & 0.342 & 0.374 & 0.406 & 0.437 & 0.469 & 0.501 & 0.564 \\
\hline \\[-1.8ex] 
\end{tabular}
\label{skewBL-largecap-port} 
\end{table}
\clearpage

Based on (\ref{eq:19}), (\ref{eq:23}) and (\ref{eq:24}), we could evaluate the portfolio skewness $\tau_1$ and portfolio volatility given different target returns and non-spherical part of the variance. Table~\ref{skBL-largecap-MN} 
reports the portfolios skewness, expected returns and volatilities across different non-spherical variance $N$ ranging from .00 to .05 and different values of monthly target returns $M$ ranging from 0.21\% to 1.67\%. Most importantly, we find that the skew-normal BL optimal allocation provides a lower volatility than that of the classical Black-Litterman given the same target portfolio return except for $M$ = 0.42\%. For example, with fixing 1.25\% as the monthly target return and $N = 0$, the skew-normal BL optimal allocation provides a volatility of 3.552\%, which is less than 4.950\% of the classical BL optimal allocation. We also find the following interesting empirical observations:

\begin{itemize}
\item The portfolios become more negatively skewed as the expected returns of portfolios increase for any given $N$, which suggests that the investors trade a negative skewness for a higher expected return or vice versa.
\item The negative relation between portfolio volatility and portfolio skewness is robust for any given $N$. That is, the investors trade a lower volatility for a higher skewness or vice versa reflecting that stocks with big drops in price are more volatile.
\end{itemize}

\begin{table}[!htbp]
\small
\caption{\small BL Skew-Normal Optimal Portfolio Skewness, Expected Return and Volatility given $N$ = 0, 0.01, 0.02, 0.03, 0.04 and 0.05.} 
\centering 
\begin{tabular}{c c c c c c c c c} 
\hline 
\multicolumn{8}{c}{\textbf{\textit{N}} = \textbf{0}} \\
\hline
& $\boldsymbol{w}_{skBL_{1}}$ & $\boldsymbol{w}_{skBL_{2}}$ & $\boldsymbol{w}_{skBL_{3}}$ & $\boldsymbol{w}_{skBL_{4}}$ & $\boldsymbol{w}_{skBL_{5}}$ & $\boldsymbol{w}_{skBL_{6}}$ &
$\boldsymbol{w}_{skBL_{7}}$ &
$\boldsymbol{w}_{skBL_{8}}$
\\ [0.3ex] 
\hline 
$M$ & 0.21\% & 0.42\% & 0.63\% & 0.83\% & 1.04\% & 1.25\% & 1.46\% & 1.67\% \\
$\tau^{w}_{1}$ & --0.198{ } & --0.242{ } & --0.295{ } & --0.355{ } & --0.424{ } & --0.498{ } & --0.570{ } & --0.632{ } \\ 
Volatility & 5.155\% & 4.741\% & 4.365\% & 4.033\% & 3.757\% & \textcolor{red}{3.552\%} & 3.431\% & 3.401\% \\
\hline \hline 
\multicolumn{8}{c}{\textbf{\textit{N}} = \textbf{0.01}} \\
& $\boldsymbol{w}_{skBL_{1}}$ & $\boldsymbol{w}_{skBL_{2}}$ & $\boldsymbol{w}_{skBL_{3}}$ & $\boldsymbol{w}_{skBL_{4}}$ & $\boldsymbol{w}_{skBL_{5}}$ & $\boldsymbol{w}_{skBL_{6}}$ &
$\boldsymbol{w}_{skBL_{7}}$ &
$\boldsymbol{w}_{skBL_{8}}$ \\ [0.3ex] 
\hline
$M$ & 0.21\% & 0.42\% & 0.63\% & 0.83\% & 1.04\% & 1.25\% & 1.46\% & 1.67\%\\
$\tau^{w}_{1}$ & --0.169{ } & --0.214{ } & --0.262{ } & --0.310{ } & --0.355{ } & --0.392{ } & --0.416{ } & --0.429{ }\\ 
Volatility & 4.659\% & 3.837\% & 3.608\% & 3.459\% & 3.399\% & 3.434\% & 3.560\% & 3.769\% \\
\hline \hline 
\multicolumn{8}{c}{\textbf{\textit{N}} = \textbf{0.02}} \\
& $\boldsymbol{w}_{skBL_{1}}$ & $\boldsymbol{w}_{skBL_{2}}$ & $\boldsymbol{w}_{skBL_{3}}$ & $\boldsymbol{w}_{skBL_{4}}$ & $\boldsymbol{w}_{skBL_{5}}$ & $\boldsymbol{w}_{skBL_{6}}$ &
$\boldsymbol{w}_{skBL_{7}}$ &
$\boldsymbol{w}_{skBL_{8}}$ \\ [0.3ex] 
\hline
$M$ & 0.21\% & 0.42\% & 0.63\% & 0.83\% & 1.04\% & 1.25\% & 1.46\% & 1.67\% \\
$\tau^{w}_{1}$ & --0.106{ } & --0.140{ } & --0.172{ } & --0.199{ } & --0.221{ } & --0.238{ } & --0.249{ } & --0.257{ } \\ 
Volatility & 3.509\% & 3.420\% & 3.425\% & 3.522\%
 & 3.706\% & 3.963\% & 4.280\% & 4.647\% \\
\hline \hline 
\multicolumn{8}{c}{\textbf{\textit{N}} = \textbf{0.03}} \\
& $\boldsymbol{w}_{skBL_{1}}$ & $\boldsymbol{w}_{skBL_{2}}$ & $\boldsymbol{w}_{skBL_{3}}$ & $\boldsymbol{w}_{skBL_{4}}$ & $\boldsymbol{w}_{skBL_{5}}$ & $\boldsymbol{w}_{skBL_{6}}$ &
$\boldsymbol{w}_{skBL_{7}}$ \\ [0.3ex] 
\hline
$M$ & 0.21\% & 0.42\% & 0.63\% & 0.83\% & 1.04\% & 1.25\% & 1.46\% & 1.67\% \\
$\tau^{w}_{1}$ & --0.026{ } & --0.052{ } & --0.075{ } & --0.095{ } & --0.113{ } & --0.127{ } & --0.139{ } & --0.149{ }\\ 
Volatility & 3.505\% & 3.662\% & 3.895\% & 4.193\% & 4.544\% & 4.935\% & 5.357\% & 5.806\% \\
\hline \hline 
\multicolumn{8}{c}{\textbf{\textit{N}} = \textbf{0.04}} \\
& $\boldsymbol{w}_{skBL_{1}}$ & $\boldsymbol{w}_{skBL_{2}}$ & $\boldsymbol{w}_{skBL_{3}}$ & $\boldsymbol{w}_{skBL_{4}}$ & $\boldsymbol{w}_{skBL_{5}}$ & $\boldsymbol{w}_{skBL_{6}}$ &
$\boldsymbol{w}_{skBL_{7}}$ &
$\boldsymbol{w}_{skBL_{8}}$ \\ [0.3ex] 
\hline
$M$ & 0.21\% & 0.42\% & 0.63\% & 0.83\% & 1.04\% & 1.25\% & 1.46\% & 1.67\% \\
$\tau^{w}_{1}$ & 0.036 & 0.013 & --0.008{ } & --0.026{ } & --0.043{ } & --0.057{ } & --0.070{ } & --0.081{ }\\ 
Volatility & 4.123\% & 4.457\% & 4.833\% & 5.244\% & 5.684\% & 6.143\% & 6.619\% & 7.111\% \\
\hline \hline 
\multicolumn{8}{c}{\textbf{\textit{N}} = \textbf{0.05}} \\
& $\boldsymbol{w}_{skBL_{1}}$ & $\boldsymbol{w}_{skBL_{2}}$ & $\boldsymbol{w}_{skBL_{3}}$ & $\boldsymbol{w}_{skBL_{4}}$ & $\boldsymbol{w}_{skBL_{5}}$ & $\boldsymbol{w}_{skBL_{6}}$ &
$\boldsymbol{w}_{skBL_{7}}$ &
$\boldsymbol{w}_{skBL_{8}}$ \\ [0.3ex] 
\hline
$M$ & 0.21\% & 0.42\% & 0.63\% & 0.83\% & 1.04\% & 1.25\% & 1.46\% & 1.67\% \\
$\tau^{w}_{1}$ & 0.075 & 0.054 & 0.035 & 0.018 & 0.002 & --0.012{ } & --0.024{ } & --0.035{ }\\ 
Volatility & 5.143\% & 5.572\% & 6.023\% & 6.492\% & 6.978\% & 7.474\% & 7.979\% & 8.495\% \\
\hline \hline 
\end{tabular}
\label{skBL-largecap-MN} 
\end{table}

Finally, Table~\ref{largecap-vola} reports BL, and skBL portfolio volatility given corresponding monthly target returns, $M = 0.42\%, 0.83\%, 1.25\%$, and $1.67\%$. Table~\ref{SR} reports Sharpe ratios for BL optimal portfolios and skBL optimal portfolios.
\clearpage


\begin{table}[!htbp]
\small
\caption{\small Optimal Portfolio Volatilities. This table reports BL and skBL optimal portfolios volatilities given $N = 0, 0.01, 0.03$, and $0.04$.} 
\centering 
\begin{tabular}{c c c c c c} 
\\[-1.8ex]\hline
\hline \\[-1.8ex] 
Target Return & $BL^{Opt}$ & $skBL^{Opt}_{N=0}$ & $skBL^{Opt}_{N=0.01}$ & $skBL^{Opt}_{N=0.03}$ & $skBL^{Opt}_{N=0.04}$ \\  [1ex] 
\hline 
$0.21\%$ & 3.399\% & 5.155 \% & 4.132\% & 3.505\% & 4.123\% \\ [0.5ex]
$0.42\%$ & 3.483\% & 4.741\% & 3.837\% & 3.662\% & 4.457\% \\ [0.5ex]
$0.63\%$ & 3.702\% & 4.365\% & 3.608\% & 3.895\% & 4.833\% \\ [0.5ex]
$0.83\%$ & 4.036\% & 4.033\% & 3.459\% & 4.193\% & 5.244\% \\ [0.5ex]
$1.04\%$ & 4.461\% & 3.757\% & 3.399\% & 4.544\% & 5.684\% \\ [0.5ex]
$1.25\%$ & 4.950\% & 3.552\% & 3.434\% & 4.935\% & 6.143\% \\ [0.5ex]
$1.46\%$ & 5.487\% & 3.431\% & 3.560\% & 5.357\% & 6.619\% \\ [0.5ex]
$1.67\%$ & 6.062\% & 3.401\% & 3.769\% & 5.806\% & 7.111\% \\ [0.5ex]
$1.88\%$ & 6.661\% & 3.465\% & 4.047\% & 6.273\% & 7.611\% \\ [0.5ex]
$2.08\%$ & 7.279\% & 3.618\% & 4.382\% & 6.755\% & 8.121\% \\ [0.5ex]
\hline 
\end{tabular}
\label{largecap-vola} 
\end{table}

\begin{table}[!htbp]
\small
\caption{\small Sharpe Ratio. This table reports Sharpe ratios for BL and skBL optimal portfolios.} 
\centering 
\begin{tabular}{c c c c c} 
\\[-1.8ex]\hline
\hline \\[-1.8ex] 
Target Return & $BL^{Opt}$ & $skBL^{Opt}_{N=0}$ & $skBL^{Opt}_{N=0.005}$ & $skBL^{Opt}_{N=0.01}$ \\ [1ex] 
\hline 
$1.04\%$ & --0.025{ } & --0.030{ } & --0.032{ } & --0.033{ } \\ [0.5ex]
$1.25\%$ & 0.019 & 0.027 & 0.028 & 0.028 \\ [0.5ex]
$1.46\%$ & 0.056 & 0.089 & 0.089 & 0.086\\ [0.5ex]
$1.67\%$ & 0.085 & 0.151 & 0.146 & 0.136 \\ [0.5ex]
$1.88\%$ & 0.108 & 0.208 & 0.196 & 0.178 \\ [0.5ex]
$2.08\%$ & 0.128 & 0.257 & 0.236 & 0.212\\ [0.5ex]
\hline 
\end{tabular}
\label{SR} 
\end{table}

We find that there is a negative relation between portfolio expected return and portfolio skewness given the values of $N$. In other words, the investors hold negatively (positively) skewed portfolios with higher (lower) expected return. Our empirical results do suggest that the normality assumption in asset returns leads to portfolios that are more risky than in the case when asymmetry (skewness) is explicitly considered in portfolio construction. Unlike the usual mean-variance analysis, there are multiple efficient portfolios in the mean-variance-skewness analysis. Figure \ref{largeCap Optimal Port_1} and \ref{largeCap Optimal Port_2}
represent the frontiers of the feasible skew-normal BL optimal portfolios in the mean-spherical variance-nonspherical variance space (upper panel) and in the mean-variance-skewness space (lower panel) respectively given $0 \le M \le 2.0\%$ and $-0.5 \le N \le 0.5$ and $-2\% \le M \le 2.0\%$ and $-0.5 \le N \le 0.5$.
\clearpage


\begin{figure}%
	\centering
	\subfloat[Feasible monthly optimal portfolios in the mean-spherical variance-nonspherical variance space]{{\includegraphics[width=10cm]{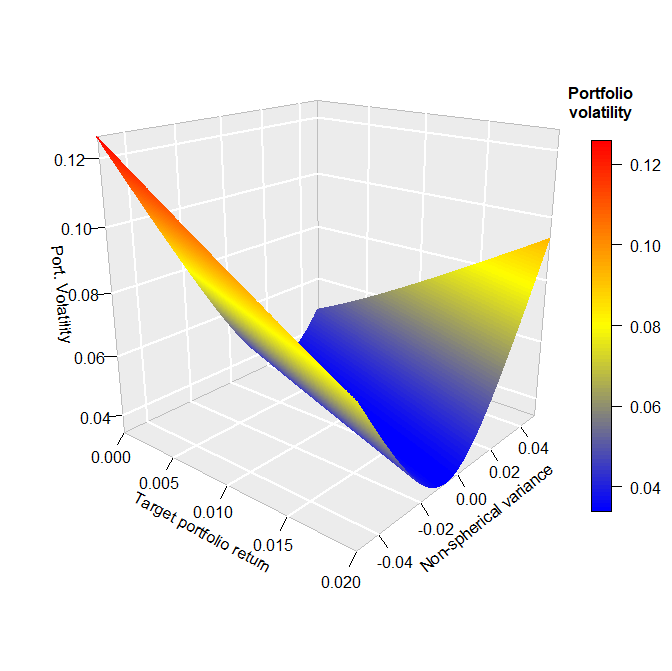}}}%
	\qquad
	\subfloat[Feasible monthly optimal portfolios in the mean-variance-skewness space]{{\includegraphics[width=10cm]{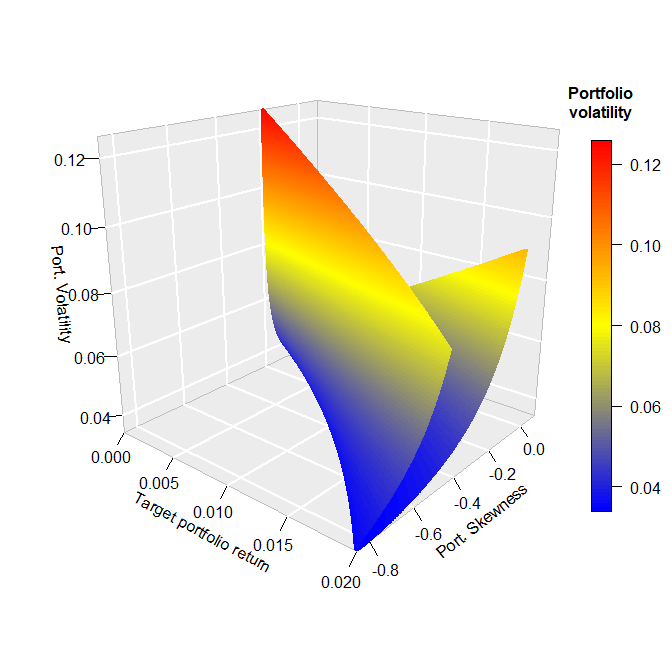}}}%
	\caption{Feasible optimal portfolios frontiers in the mean-spherical variance-nonspherical variance space and in the mean-variance-skewness space ($0 \le M \le 2.0\%$ and $-0.5 \le N \le 0.5$).}%
	\label{largeCap Optimal Port_1}%
\end{figure}
\clearpage

\begin{figure}%
	\centering
	\subfloat[Feasible monthly optimal portfolios in the mean-spherical variance-nonspherical variance space]{{\includegraphics[width=10cm]{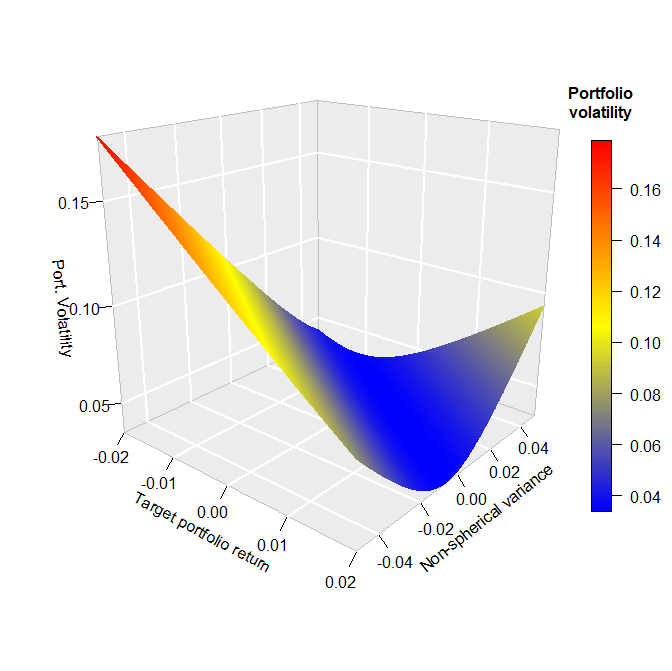}}}%
	\qquad
	\subfloat[Feasible monthly optimal portfoliosin the mean-variance-skewness space]{{\includegraphics[width=10cm]{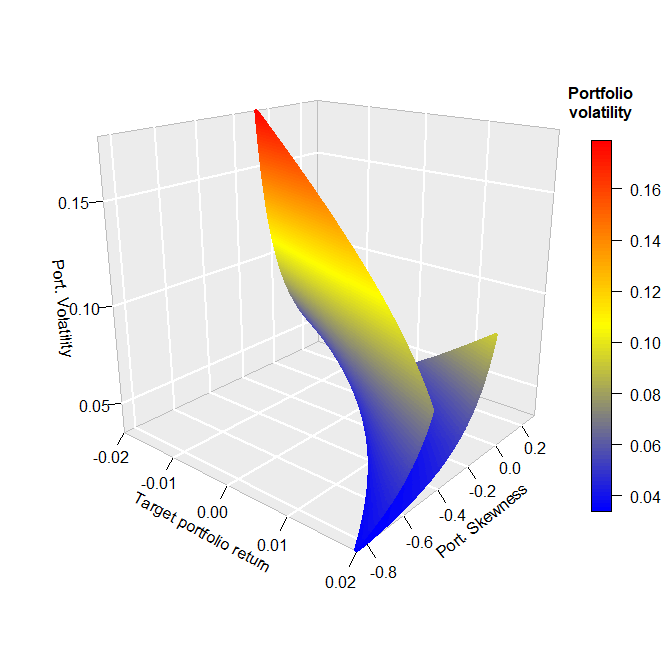}}}%
	\caption{Feasible optimal portfolios frontiers in the mean-spherical variance-nonspherical variance space and in the mean-variance-skewness space ($-2.0\% \le M \le 2.0\%$ and $-0.5 \le N \le 0.5$).}%
	\label{largeCap Optimal Port_2}%
\end{figure}
\clearpage

\section{Conclusion}
The pioneering study conducted by Black and Litterman \textcolor{blue}{(Black \& Litterman, 1990)} provides a framework for how to construct stable mean-variance efficient portfolios. This framework is notably enlightening and aligns with the principles of the Bayesian statistical framework. While the BL model has the capacity to produce portfolios with enhanced stability, it is constrained by the assumption of normality, which represents a limitation. In the context of the Black-Litterman (BL) asset allocation model, skewness is not priced, either because return distribution is assumed to be normal (and thus symmetric) or because investors are assumed to care about only mean and variance. Skew-normal models of portfolio returns have been documented in the recent financial literature due to their enhanced versatility in accommodating distributions with skewness and fat tails. In this paper, we contribute to extend the BL model to the class of skew-normal distributions by departing from the normal market assumption. 

The purpose of this paper is twofold. The first is to extend the classical BL model under the hidden truncation skew-normal distribution as a statistical tool for asset returns. The hidden truncation model provides a flexible family of skewed alternatives to the classical $k$-dimensional normal distribution. It also provides more generalized framework for the Bayesian asset allocation problem. This paper thus presents a theoretical construction of the mean-variance-skewness portfolio optimization based on the multivariate skew-normal distribution. With this model construction, we can solve the optimal asset allocation problems for the extended assumptions of assets returns once the posterior predictive distribution is determined.

The second is to provide the empirical findings derived from our skew-normal BL model. When employing skew-normal returns, we find that the skew-normal BL model provides optimal portfolios with the same expected return but less risk compared to an optimal portfolio of the classical BL model. This finding indicates that when asset returns are assumed to follow a normal distribution, portfolios typically exhibit higher risk, whereas in cases where asymmetry is explicitly accounted for during portfolio construction, risk tends to be lower. It is well-known that assets with large upsides (positive skewness) are overpriced and thus have low expected returns, while assets with large downsides (negative skewness) are underpriced and thus have high expected returns.

The empirical findings of this paper suggest that at any level of $M$ (target portfolio return), a negative trade-off between portfolio volatility and portfolio skewness is robust. This observation suggests that investors may be making a trade-off, opting for lower volatility in exchange for higher skewness, or vice versa. This trade-off indicates that stocks with significant price declines tend to exhibit increased volatility. We also find that the portfolios become more negatively skewed as the expected returns of portfolios increase for given $N$ (non-spherical variance), which suggests that the investors trade a negative skewness for a higher expected return. In other words, the investors hold negatively skewed portfolios with higher expected return or the investors hold positively skewed portfolios with lower expected return. This demonstrates that the investors trade a negative skewness for a higher expected return or vice versa.
\newpage

\section*{References}
\noindent Adcock C. and E. Clark, (1999), ``Beta lives - some statistical perspectives on the capital asset pricing model", The European Journal of Finance, 5, 1999, 213-224.\\
Adcock C., Eling. M. and N. Loperfido, (2015), ``Skewed distributions in finance and actuarial science: a review", The European Journal of Finance, 21, 2015, Issue 21, 1253-1281.\\
Arellano-Valle R. and A. Azzalini, (2008), ``The Centered Parametrization for the Multivariate Skew-Normal Distribution", The Journal of Multivariate Analysis, 99, 2008, 1362-1382.\\
Arnold B. and R Beaver, (2000), ``Hidden Truncation Models", Sankhy$\bar{a}$: The Indian Journal of Statistics, Series A (1961-2002), 62, 2000, Issue 1, 23-35.\\
Arnold B. and R Beaver, (2002), ``Skewed multivariate models related to hidden truncation and / or selective reporting", Test, 11, 2002, Issue 1, 7-54.\\
Arnold B. and R Beaver, (2002), ``Conditionally Specified Multivariate Skewed Distributions", The Indian Journal of Statistics: Series A, 64, 2002, 206-226.\\
Azzalini A., (1985), ``A Class of Distributions Which Includes the Normal Ones",
Scandinavian Journal of Statistics 12, 1985, 171-178.\\
Azzalini A. and A. Dalla Valle, (1996), ``The multivariate skew normal distribution",
Biometrika 83, 1996, 715-726.\\
Azzalini A. and A. Capitanio, (1999), ``Statistical applications of the multivariate skew normal distributions", Journal of the Royal Statistical Society: Series B (Statistical Methodology) 61, 1999, Issue 3, 579-602.\\
Azzalini A., (2005), ``The Skew-normal Distribution and Related Multivariate Families", The Scandinavian Journal of Statistics, Vol 32, 2005, 159-188.\\
Azzalini A. and A. Dalla Valle, (2006), ``On the Unification of Families of Skew-normal Distributions", The Scandinavian Journal of Statistics, Vol 33, 2006, 561-574.\\
Bacmann J-F. and S.M. Benedetti, (2009), ``Optimal Bayesian Portfolios of Hedge Funds", Int. J. Risk Assessment and Management, Vol 11, 2009, 39-58.\\
Barberis N. and M. Huang, (2008), ``Stocks as Lotteries: The Implications of Probability Weighting for Security Prices", American Economic Review, Vol 98, 2008, 2066-2100.\\
Bekart G., Erb C. B., Harvey C. and T. Viskanta, (1998), ``Distributional characteristics of emerging market returns and asset allocation", Journal of Portfolio Management, 1998, Vol 24, 102-116.\\
Best M. and R. Grauer, (1991), ``Sensitivity analysis for mean-variance portfolio problems", Management Science, 1991, Vol 37, 980-989.\\
Black F. and R. Litterman, (1990), ``Asset Allocation: combining investor views with market equilibrium", Goldman Sachs Fixed Income Research, 1990.\\
Black F. and R. Litterman, (1992), ``Global Portfolio Optimization", Financial
Analysts Journal 48:5, September/October 1992, 28-43.\\
Black F., (1993), ``Beta and Return", The Journal of Portfolio Management, 1993, Vol 20, Issue 1, 8-18.\\
Blasi F. S. (2008), ``Bayesian Asset Allocation Using the Skew Normal Distribution", PhD thesis, 2008.\\
Box G. and G. Tiao (1973), Bayesian Inference in Statistical Analysis, Addison-Wesley Publishing Company, 1973.\\
Carmichael B. and A. Co$\ddot{\textnormal{e}}$n, (2013), ``Asset Pricing with Skewed-Normal Return", Finance Research Letters, 2013, Vol 10, 50-57.\\
Conine T. and M. Tamarkin, (1981), ``On Diversification Given Asymmetry in Returns", The Journal of Finance, 1981, Vol 36, 1143-1155.\\
Fabozzi F., S. Focardi and P. Kolm, (2008), ``Incorporating Trading Strategies in the Black-Litterman Framework", Journal of Trading 1(2), 2008, 1-10\\
Fama E., (1963), ``Mandelbrot and the Stable Paretian Hypothesis", The Journal of Business, Vol 36, 1963, 420-429\\
Fang K. and Y. Zhang, (1990), Generalized Multivariate Analysis, Science Press Beijing and Springer-Verlag Heidelberg, 1990.\\
Friend I. and R. Westerfield, (1980), ``Co-skewness and capital asset pricing", the Journal of Finance, Vol 35, 1980, 897-913.\\
Gan Q., (2014), ``Location-scale portfolio selection with factor-recentered skew normal asset returns", Journal of Economic Dynamics and Control 48, 2014, 176-187.\\
Garrett T. and R. Sobel, (1999), ``Gamblers Favor Skewness, not Risk: Further Evidence from United States' Lottery Games", Economic Letters 63, 1999, 85-90.\\
Genton M. G. (ed.), (2004), Skew-Elliptical distributions and their applications: a
journey beyond normality, Chapman $\&$ Hall/CRC, Boca Raton.\\
Giacometti R., M. Bertocch, R. Svetlozar and F. Fzbozzi, (2007), ``Stable distributions in the Black-Litterman approach to asset allocation", Quantitative Finance 7, 2007, Issue 4, 423-433.\\
Golec J. and A. Tamarkin, (1998), ``Bettors love skewness, nor risk, at the horse track", Journal of Political Economy, 1998, Vol 106, 205-225.\\
Harvey C. and A. Siddique, (2000), ``Conditional Skewness in Asset Pricing Tests", The Journal of Finance, 2000, 1263-1296.\\
Harvey C., J. Liechty, M. Liechty and P. M$\ddot{u}$ller, (2010), ``Portfolio selection with higher moments", Quantitative Finance 10, 2010, Issue 5, 469-485.\\
He G. and R. Litterman, (1999), ``The Intuition Behind Black-Litterman Model Portfolios'', Investment Management Research, Goldman Sachs Quantitative Resources Group, 1999, 1-18.\\
Horvath P. and R. Scott, (1980), ``On the Direction of Preference for Moments of Higher Order Than the Variance'', The Journal of Finance, 1999, Vol 35, 915-919.\\
Hu W. and A. Kercheval, (2010), ``Portfolio Optimization for Student t and Skewed t Returns", Quantitative Finance, 2010, Vol 10, 91-105.\\
Idzorek T.M., (2004), ``A Step-by-Step Guide to the Black-Litterman Model. Incorporating User-Specified Confidence Intervals", Zephyr Associates, Inc.\\
Ilmanen A., (2012), ``Do financial markets reward buying or selling insurance and lottery tickets?", Financial Analysts Journal, 2012, Vol 68, 26-36.\\
Ingersoll J., (1987), Theory of Financial Decision Making, Rowman $\&$ Littlefield,
1987.\\
Jondeau E. and M. Rockinger, (2006), ``Optimal Portfolio Allocation under Higher Moments", European Financial Management, 2006, Vol 12, 29-55.\\
Kotz S., (2005), ``Survey of developments in the theory of continuous skewed distributions", International Journal of Statistics, 2005, Vol 63, 225-261.\\
Kraus A. and R. H. Litzenberger, (1976), ``Skewness preference and the valuation of risk assets", Journal of Finance, 1976, 1085-1100.\\
Leland H., (1999), ``Beyond Mean-Variance: Performance measurement in a nonsymmetirical world", Financial Analysts Journal, 1999, 27-36.\\
Liseo B. and N. Loperfido, (2003), ``A Bayesian interpretation of the multivariate skew-normal distribution", Statistics and Probability Letters 61, 2003, 395-401.\\
Mandelbrot B., (1963), ``The variation of certain speculative prices", Journal of Business 36, 1963, 394-419.\\
Markowitz H., (1952), ``Portfolio Selection", The Journal of Finance, Vol 7, 1952, 77-91.\\
Meucci A., (2005), Risk and Asset Allocation, Springer, 2005.\\
Meucci A., (2006a), ``Beyond Black-Litterman: views on non-normal markets",
Risk, February 2006a, 87-92.\\
Meucci A., (2006b), ``Beyond Black-Litterman in Practice: A Five-Step Recipe to Input Views on Non-Normal Markets." Available at SSRN:\\ \href{https://ssrn.com/abstract=872577}{https://ssrn.com/abstract=872577.}\\
Ogundimu E. and J. Hutton, (2015), ``On the extended two-parameter generalized skew-normal distribution", Statistics and Probability Letters 100, 2015, 142-148.\\
Peir$\acute{\textnormal{o}}$ A., (1999), ``Skewness in financial returns", Journal of Banking and Finance 23, 1999, 847-862.\\
Pflug G., (2000), Some Remarks on the Value-at-Risk and the Conditional Value-at-Risk, Probabilistic Constrained Optimization, Kluwer Academic Publishers, 2000,  272-281.\\
Polson, N.G. and B.V. Tew, (2000), ``Bayesian Portfolio Selection: An Empirical Analysis of the S\&P 500 index 1970-1996", J. Business and Econ. Statist., 2000, 164-73.\\
Post T., P. Vliet and H. Levy, (2006), ``Risk Aversion and Skewness Preference", Journal of Banking and Finance 32, 2006, 1178-1187.\\
Pourahmadi M., (2007), ``Construction of Skew-Normal Random Variables",\\ \href{http://citeseerx.ist.psu.edu/viewdoc/summary?doi=10.1.1.211.2167}{http://citeseerx.ist.psu.edu/viewdoc/summary?doi=10.1.1.211.2167.}\\
Qian E. and S. Gorman, (2001), ``Conditional Distribution in Portfolio Theory", Financial Analysts Journal 57, 2001, 44-51.\\
Rama C., R. Deguest and G. Scandolo, (2010), ``Robustness and Sensitivity Analysis of Risk Measurement Procedures", Quantitative Finance 10, 2010, 593-606.\\
Rockafellar R. and S. Uryasev, (2000), ``Optimization of Conditional Value at
Risk", Journal of Risk 2, 2000, 21-41.\\
Roman S., (2004), Introduction to the Mathematics of Finance, Springer, 2004.\\
Ross S., (1978), ``Mutual Fund Separation in Financial Theory - The Separating
Distributions", Journal of Economic Theory 17, 1978.\\
Sahu S.K., D.K. Dey and M. Branco, (2003), ``A New Class of Multivariate Skew Distributions with Application to Bayesian Regression Models", Canad. J. Statist. 31, 2003, Vol 31, 129-150.\\
Samuelson P., (1970), ``The Fundamental Approximation Theorem of Portfolio Analysis in terms of Means, Variances and Higher Moments", The Review of Economic Studies, 1970, Vol 37, 537-542.\\
Sharpe W., (1964), ``Capital asset prices: A theory of market equilibrium under conditions of risk", Journal of Finance, 19, Issue 3, 425-442\\
Shore M., (2005), Hedge Funds: Insights in Performance Measurement, Risk Analysis, and Portfolio Allocation, Wiley $\&$ Sons, 2005.\\
Simaan Y., (1993), ``Portfolio Selection and Asset Pricing Three Parameter Framework", Management Science 5, 1993, 578-587.\\
Singleton J. and J. Wingender, (1986), ``Skewness Persistence in Common Stock Returns", Journal of Financial and Quantitative Analysis, 21, 1986,  Issue 3, 335-341.\\
Sortino F. and R. Van der Meer, (1991), ``Downside Risk", Journal of Portfolio Management, 1991, Vol 17, 27-31.\\
Taleb N., (2004), ``Bleed or Blowup? Why Do We Prefer Asymmetric Payoffs?", Journal of Behavioral Finance, 2004, Vol 5, 2-7.\\
Thaler R.H. and W.T. Ziemba, (1988), ``Anomalies. Parimutuel Betting Markets: Racetracks and Lotteries", Journal of Economic Perspectives, 1988, Vol 2. 161-174.\\
Theil H., (1971), Principles of Econometrics, New York: Wiley and
Sons, 1971.\\
Weil P., (1989), ``The Equity Premium Puzzle and the Risk-Free Rate Puzzle", Journal of Monetary Economics, 1989, Vol 24, 401-421.\\
Xiao Y. and E. Valdez, (2015), ``A Black-Litterman asset allocation model under Elliptical distributions", Quantitative Finance, 2015, Vol 15, 509-519.
\clearpage

\section{Appendix A}
In this Appendix, we report the commonly used results as well as the proofs of the theorems that are presented in the main article. In addition, we also provide a brief development of the hidden truncation skewed-normal distribution.

\subsection{Common Results}
\begin{thm}\label{ShermanMorrisonWoodbury} \textnormal{(Sherman-Morrison-Woodbury Matrix Identity)}
	Suppose A, B, C, D are matrices with the right conditions. Then
	$$
	(A + UBV)^{-1} = A^{-1} - A^{-1}U(B^{-1} + VA^{-1}U)^{-1}VA^{-1}.
	$$
\end{thm}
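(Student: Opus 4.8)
The plan is to verify the identity by direct computation. Denote the proposed right-hand side by $X = A^{-1} - A^{-1}U(B^{-1} + VA^{-1}U)^{-1}VA^{-1}$, abbreviate $M = B^{-1} + VA^{-1}U$ so that $X = A^{-1} - A^{-1}UM^{-1}VA^{-1}$, and show that $(A + UBV)X = I$. Since $A + UBV$ is square, a one-sided inverse is automatically two-sided; alternatively one repeats the symmetric computation for $X(A+UBV)$. The standing hypotheses are that $A$, $B$ and $M$ are invertible and that $U$, $V$ have shapes making every product conformable.

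First I would expand the product into four terms:
\[
(A+UBV)\left(A^{-1} - A^{-1}UM^{-1}VA^{-1}\right) = I - UM^{-1}VA^{-1} + UBVA^{-1} - UBVA^{-1}UM^{-1}VA^{-1}.
\]
The key algebraic step is the observation $VA^{-1}U = M - B^{-1}$, which I would apply to the last term:
\[
UBVA^{-1}UM^{-1}VA^{-1} = UB(M - B^{-1})M^{-1}VA^{-1} = UBVA^{-1} - UM^{-1}VA^{-1}.
\]
Substituting this back, the two copies of $UBVA^{-1}$ cancel and the two copies of $UM^{-1}VA^{-1}$ cancel, leaving precisely $I$, which establishes the claim.

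The computation is entirely routine; the only real obstacle is bookkeeping --- keeping each factor on its correct side and applying the substitution $VA^{-1}U = M - B^{-1}$ in the right place rather than, say, $UVA^{-1}$. For completeness I would also record the precise invertibility hypotheses ($A$ square and invertible, $B$ invertible, $B^{-1}+VA^{-1}U$ invertible), since the later uses of this identity in computing $\Sigma_{BL}$, $\Delta$, and the various covariance updates implicitly assume them; in those applications $A$, $B$ are symmetric positive definite, so the hypotheses are automatic.
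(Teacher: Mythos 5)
Your verification is correct: expanding $(A+UBV)\bigl(A^{-1}-A^{-1}UM^{-1}VA^{-1}\bigr)$ and substituting $VA^{-1}U = M - B^{-1}$ in the quartic term does collapse the expression to $I$, and the remark that a one-sided inverse of a square matrix is two-sided (or that the symmetric computation for $X(A+UBV)$ works the same way) closes the argument. The paper itself states this identity without any proof, treating it as a standard fact, so there is no argument of the authors' to compare against; yours is the canonical direct verification and is complete. One small point worth noting: the paper's statement lists matrices ``$A$, $B$, $C$, $D$'' but the formula actually involves $A$, $U$, $B$, $V$, and your explicit listing of the hypotheses ($A$ invertible, $B$ invertible, $B^{-1}+VA^{-1}U$ invertible, with conformable shapes) supplies exactly the precision the paper's ``with the right conditions'' leaves implicit.
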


\begin{thm}
	Assume $\boldsymbol{X} \sim N(\mu, \Sigma)$, where ${\bf \mu}$ is the mean vector and $\Sigma$ is the covariance matrix. Then we have
	$$
	E[\Phi(\boldsymbol{a}^{T} \boldsymbol{X} + b)] = \Phi \left\lbrace \frac{b + \mathbf{a}^{T} {\bf \mu}}{\sqrt{(1 + \mathbf{a}^{T} \Sigma \mathbf{a} )}}\right\rbrace
	$$
	for any constant vector $ \mathbf{a} \in \mathbb{R}^{k}$, and constant $b \in \mathbb{R}$.
\end{thm}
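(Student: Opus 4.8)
The plan is to reduce the multivariate statement to a one-dimensional Gaussian computation and then exploit the probabilistic meaning of $\Phi$ as a standard normal CDF. First I would set $U = \boldsymbol{a}^{T}\boldsymbol{X} + b$; since $\boldsymbol{X}\sim N(\mu,\Sigma)$, the scalar $U$ is normal with mean $m := \boldsymbol{a}^{T}\mu + b$ and variance $\sigma^{2} := \boldsymbol{a}^{T}\Sigma\boldsymbol{a}$. The quantity to evaluate is therefore $E[\Phi(U)]$ for $U\sim N(m,\sigma^{2})$, so the entire multivariate structure has been absorbed into the two scalars $m$ and $\sigma^{2}$.

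Next I would introduce an auxiliary standard normal $Z\sim N(0,1)$ chosen to be independent of $U$, and use the identity $\Phi(u) = P(Z \le u)$. Then by the tower property, $E[\Phi(U)] = E\big[P(Z\le U\mid U)\big] = P(Z \le U) = P(Z - U \le 0)$. The key observation is that $Z - U$, being a linear combination of independent normals, is itself normal with mean $-m$ and variance $1 + \sigma^{2}$. Hence $P(Z-U\le 0) = \Phi\!\big(m/\sqrt{1+\sigma^{2}}\big)$, which is exactly the claimed expression once one substitutes back $m = \boldsymbol{a}^{T}\mu+b$ and $\sigma^{2} = \boldsymbol{a}^{T}\Sigma\boldsymbol{a}$.

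An alternative, purely analytic route would be to write $E[\Phi(U)] = \int_{\mathbb{R}} \Phi(u)\,\varphi(u;m,\sigma^{2})\,du$, expand $\Phi(u) = \int_{-\infty}^{u}\varphi(t)\,dt$, interchange the order of integration over the region $\{t \le u\}$, and evaluate the resulting Gaussian integrals by completing the square; this yields the same answer but is more computational. I would present the probabilistic argument as the main proof. There is essentially no genuine obstacle: the only points deserving a word of care are that $\boldsymbol{a}^{T}\Sigma\boldsymbol{a}\ge 0$ so the square root is well defined (the degenerate case $\boldsymbol{a}^{T}\Sigma\boldsymbol{a}=0$ gives $U\equiv m$ almost surely and the identity is then immediate), and the legitimacy of enlarging the probability space to accommodate the independent copy $Z$, which is routine.
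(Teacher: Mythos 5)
Your argument is correct and coincides with the paper's own proof: both introduce an independent standard normal $Z$, write $\Phi(\mathbf{a}^{T}\boldsymbol{X}+b)=P(Z\le \mathbf{a}^{T}\boldsymbol{X}+b)$, and conclude by observing that $Z-\mathbf{a}^{T}\boldsymbol{X}$ is normal with mean $-\mathbf{a}^{T}\boldsymbol{\mu}$ and variance $1+\mathbf{a}^{T}\Sigma\mathbf{a}$. Your added remarks on the degenerate case $\mathbf{a}^{T}\Sigma\mathbf{a}=0$ and on enlarging the probability space are fine but not needed.
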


\begin{proof}
	Let Z $\sim N(0,1)$ be a random variable independent of the random vector $\boldsymbol{X}$. Then we have
	\begin{align*}
		E[\Phi(\mathbf{a}^{T} \boldsymbol{X} + b)] &= E[P(Z \le \mathbf{a}^{T} \boldsymbol{X} + b)]\\
		&=  E[P(Z - \mathbf{a}^{T} \boldsymbol{X} \le b)]\\
	\end{align*}
	\noindent Let $Y = Z - \mathbf{a}^{T} \boldsymbol{X}$.
	Then $Y \sim N(-a^{T} \mu, 1 + a^{T} \Sigma a)$. Therefore,
	
	\begin{align*}
		\mathbb{E}[&\Phi(\mathbf{a}^{T} \boldsymbol{X} + b)] = E[P(Y \le b)]\\
		&= E \left[ P \left(  \frac{Y + {\bf a}^{T} {\bf \mu}}{\sqrt{1 + {\bf a}^{T} \Sigma {\bf a}}} \le \frac{b + {\bf a}^{T} {\bf\mu}}{\sqrt{1 + {\bf a}^{T} \Sigma {\bf a}}} \right) \right] \\
		&= E \left[ P \left( Z \le \frac{b + {\bf a}^{T} {\bf \mu}}{\sqrt{1 + {\bf a}^{T} \Sigma {\bf a}}} \right) \right]\\
		&= E \left[ \Phi \left( \frac{b + {\bf a}^{T} {\bf \mu}}{\sqrt{1 + {\bf a}^{T} \Sigma {\bf a}}} \right)  \right]\\
		&= \Phi \left( \frac{b + {\bf a}^{T} {\bf \mu}}{\sqrt{1 + {\bf a}^{T} \Sigma {\bf a}}} \right)
	\end{align*}
\end{proof}
\clearpage

\begin{thm}[Product of Gaussian Densities] \label{ProdGaussian1}
	$$
	\varphi({\bf m};{\bf \pi}, \Omega_1)\varphi({\bf v};P{\bf m}, \Omega_2) =
	\varphi({\bf m};{\bf \mu}_{BL}, \Sigma_{BL})h({\bf v}),
	$$
	where
	\begin{align*}
		\mu_{BL} &= \Sigma_{BL}\left[\Omega_1^{-1}{\bf \pi} + P^T\Omega_2^{-1}{\bf v}\right]\\
		\Sigma_{BL} &= \left[\Omega_1^{-1} + P^T\Omega_2^{-1}P\right]^{-1}
	\end{align*}
	and $h({\bf v})$ is independent of ${\bf m}.$
\end{thm}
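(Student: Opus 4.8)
The plan is to use the standard ``completing the square'' identity for Gaussian kernels, treating $\mathbf{v}$ as a fixed parameter and isolating the dependence on $\mathbf{m}$. Writing $k$ and $d$ for the dimensions of $\mathbf{m}$ and $\mathbf{v}$, I would first make the exponents explicit:
\begin{align*}
\varphi(\mathbf{m};\pi,\Omega_1)\,\varphi(\mathbf{v};P\mathbf{m},\Omega_2)
= \frac{\exp\left\{-\tfrac12 Q(\mathbf{m},\mathbf{v})\right\}}{(2\pi)^{(k+d)/2}\,|\Omega_1|^{1/2}\,|\Omega_2|^{1/2}},
\end{align*}
where $Q(\mathbf{m},\mathbf{v}) = (\mathbf{m}-\pi)^T\Omega_1^{-1}(\mathbf{m}-\pi) + (\mathbf{v}-P\mathbf{m})^T\Omega_2^{-1}(\mathbf{v}-P\mathbf{m})$. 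Expanding and grouping by powers of $\mathbf{m}$, and using the symmetry of $\Omega_1^{-1}$ and $\Omega_2^{-1}$ to merge the two cross terms in each quadratic form, gives
\begin{align*}
Q(\mathbf{m},\mathbf{v}) = \mathbf{m}^T\!\left(\Omega_1^{-1}+P^T\Omega_2^{-1}P\right)\!\mathbf{m} - 2\,\mathbf{m}^T\!\left(\Omega_1^{-1}\pi + P^T\Omega_2^{-1}\mathbf{v}\right) + \pi^T\Omega_1^{-1}\pi + \mathbf{v}^T\Omega_2^{-1}\mathbf{v}.
\end{align*}

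Next I would set $\Sigma_{BL} = (\Omega_1^{-1}+P^T\Omega_2^{-1}P)^{-1}$, which is well defined since $\Omega_1$ is positive definite and $P^T\Omega_2^{-1}P$ is positive semidefinite, so their sum is positive definite; and $\mu_{BL} = \Sigma_{BL}(\Omega_1^{-1}\pi + P^T\Omega_2^{-1}\mathbf{v})$, exactly as in the statement. Completing the square then turns $Q$ into
\begin{align*}
Q(\mathbf{m},\mathbf{v}) = (\mathbf{m}-\mu_{BL})^T\Sigma_{BL}^{-1}(\mathbf{m}-\mu_{BL}) + \pi^T\Omega_1^{-1}\pi + \mathbf{v}^T\Omega_2^{-1}\mathbf{v} - \mu_{BL}^T\Sigma_{BL}^{-1}\mu_{BL}.
\end{align*}
Substituting back, the first term reconstructs $\varphi(\mathbf{m};\mu_{BL},\Sigma_{BL})$ once we multiply and divide by its normalizing constant $(2\pi)^{-k/2}|\Sigma_{BL}|^{-1/2}$, and the remaining factor,
\begin{align*}
h(\mathbf{v}) = \frac{|\Sigma_{BL}|^{1/2}}{(2\pi)^{d/2}\,|\Omega_1|^{1/2}\,|\Omega_2|^{1/2}}\exp\left\{-\tfrac12\left(\pi^T\Omega_1^{-1}\pi + \mathbf{v}^T\Omega_2^{-1}\mathbf{v} - \mu_{BL}^T\Sigma_{BL}^{-1}\mu_{BL}\right)\right\},
\end{align*}
involves $\mathbf{v}$ only (directly and through $\mu_{BL} = \mu_{BL}(\mathbf{v})$), establishing the claim.

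This argument is entirely linear algebra, so I do not anticipate a real obstacle; the only delicate point is the bookkeeping of the cross terms, where one must combine $\mathbf{m}^TP^T\Omega_2^{-1}\mathbf{v}$ with $\mathbf{v}^T\Omega_2^{-1}P\mathbf{m}$ (and likewise for the $\Omega_1^{-1}$ term) so that the coefficient of the linear term is precisely $\Omega_1^{-1}\pi + P^T\Omega_2^{-1}\mathbf{v}$ --- the factor of $2$ is what makes $\mu_{BL}$ come out correctly. If one does not care about the closed form of $h$, the proof is even shorter: it suffices to observe that after factoring out $\varphi(\mathbf{m};\mu_{BL},\Sigma_{BL})$ the residual depends on $\mathbf{m}$ nowhere, so one may simply \emph{define} $h(\mathbf{v})$ to be that residual and be done.
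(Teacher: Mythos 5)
Your proof is correct and follows essentially the same route as the paper's: expand the two quadratic forms, group by powers of $\mathbf{m}$, and complete the square with $A=\Omega_1^{-1}+P^T\Omega_2^{-1}P$ so that the linear-term coefficient $\Omega_1^{-1}\pi+P^T\Omega_2^{-1}\mathbf{v}$ yields $\mu_{BL}$. The only difference is cosmetic: the paper works up to proportionality and leaves $h(\mathbf{v})$ implicit, whereas you track the normalizing constants and write $h$ in closed form, which is a harmless (and slightly more complete) refinement.
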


\begin{proof}
	Observe that
	\begin{align*}
		\varphi({\bf m};&{\bf \pi}, \Omega_1)\varphi({\bf v};P{\bf m}, \Omega_2)\\
		&\propto \textnormal{exp} \left\lbrace - \frac{1}{2} \:  \underset{(2)}{\underbrace {(\mathbf{m} - \boldsymbol{\pi})^{T} \Omega_1^{-1} (\mathbf{m} - \boldsymbol{\pi}) + (\mathbf{v} - P \mathbf{m})^{T} \Omega_2^{-1} (\mathbf{v} - P \mathbf{m})}} \right\rbrace \\
	\end{align*}
	We can rewrite (2) as
	\begin{align*}
		(\mathbf{m} - &\boldsymbol{\pi})^{T} \Omega_1^{-1} (\mathbf{m} - \boldsymbol{\pi}) + (\mathbf{v} - P \mathbf{m})^{T} \Omega_2^{-1} (\mathbf{v} - P \mathbf{m}) \\
		&= \mathbf{m}^{T}\Omega_1^{-1} \mathbf{m} - 2 \mathbf{m}^{T} \Omega_1^{-1} \boldsymbol{\pi} + \boldsymbol{\pi}^{T} \Omega_1^{-1} \boldsymbol{\pi} + \mathbf{v}^{T} \Omega_2^{-1} \mathbf{v}  - 2 (P \mathbf{m})^{T} \Omega_2^{-1} \mathbf{v} \\
		&\quad \quad + (P \mathbf{m})^{T} \Omega_2^{-1} (P \mathbf{m}) \\
		&= \mathbf{m}^{T} [\Omega_1^{-1} + P^T \Omega_2^{-1} P] \mathbf{m} - 2 \mathbf{m}^{T} [\Omega_1^{-1} \boldsymbol{\pi} + P^T \Omega_2^{-1} \mathbf{v}] \\
		&\quad \quad + \underset{(V_{1})}{\underbrace {\boldsymbol{\pi}^{T} \Omega_1^{-1} \boldsymbol{\pi} + \mathbf{v}^{T} \Omega_2^{-1} \mathbf{v}}} \\
		&= \mathbf{m}^{T} [\Omega_1^{-1} + P^T \Omega_2^{-1} P] \mathbf{m} - 2 \mathbf{m}^{T} A \underset{(b)}{\underbrace {A^{-1} [\Omega_1^{-1} \boldsymbol{\pi} + P^T \Omega_2^{-1} \mathbf{v}]}} + V_{1} \\
		&= \mathbf{m}^{T} A \mathbf{m} - 2 \mathbf{m}^{T} A b + b^{T} A b \: \underset{(V_{2})}{\underbrace {- \: b^{T} A b + V_{1}}} \\
		&= (\mathbf{m} - b)^{T} A (\mathbf{m} - b) + V_{2}
	\end{align*}
	where $A = \Omega_1^{-1} + P \Omega_2^{-1} P.$ This completes the proof since ${\bf b} = {\bf \mu}_{BL}.$\\
\end{proof}
\clearpage

\begin{thm}\label{ProdGaussian2} \textnormal{(Product of Gaussian Densities).}
	$$
	\varphi({\bf x};{\bf m}_1, \Omega_1)\varphi({\bf m}_1;{\bf m}_2, \Omega_2) =
	\varphi({\bf x};{\bf m}_2, \Omega_1+\Omega_2)
	\varphi({\bf m}_1; {\bf z(x,m_2)},\Delta),
	$$
	where $\Delta = (\Omega_1^{-1}+\Omega_2^{-1})^{-1}$ \text{ and } ${\bf z(x,m_2)} = \Delta(\Omega_1^{-1}{\bf x} + \Omega_2^{-1}{\bf m}_2).$
\end{thm}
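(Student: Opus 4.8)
The plan is to verify the identity by matching both the quadratic exponents and the normalizing constants on each side, working with $\mathbf{m}_1$ as the ``free'' variable. First I would write out the exponent (multiplied by $-2$) of the left-hand side,
\begin{align*}
Q \:=\: (\mathbf{x}-\mathbf{m}_1)^{T}\Omega_1^{-1}(\mathbf{x}-\mathbf{m}_1) + (\mathbf{m}_1-\mathbf{m}_2)^{T}\Omega_2^{-1}(\mathbf{m}_1-\mathbf{m}_2),
\end{align*}
and regroup it as a quadratic in $\mathbf{m}_1$:
\begin{align*}
Q \:=\: \mathbf{m}_1^{T}\Delta^{-1}\mathbf{m}_1 - 2\mathbf{m}_1^{T}(\Omega_1^{-1}\mathbf{x}+\Omega_2^{-1}\mathbf{m}_2) + \mathbf{x}^{T}\Omega_1^{-1}\mathbf{x} + \mathbf{m}_2^{T}\Omega_2^{-1}\mathbf{m}_2,
\end{align*}
where $\Delta^{-1} = \Omega_1^{-1}+\Omega_2^{-1}$. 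Completing the square in $\mathbf{m}_1$ and recalling $\mathbf{z}(\mathbf{x},\mathbf{m}_2) = \Delta(\Omega_1^{-1}\mathbf{x}+\Omega_2^{-1}\mathbf{m}_2)$ (so that $\Delta^{-1}\mathbf{z} = \Omega_1^{-1}\mathbf{x}+\Omega_2^{-1}\mathbf{m}_2$) gives
\begin{align*}
Q \:=\: (\mathbf{m}_1-\mathbf{z})^{T}\Delta^{-1}(\mathbf{m}_1-\mathbf{z}) + \big[\mathbf{x}^{T}\Omega_1^{-1}\mathbf{x} + \mathbf{m}_2^{T}\Omega_2^{-1}\mathbf{m}_2 - \mathbf{z}^{T}\Delta^{-1}\mathbf{z}\big].
\end{align*}
The first term is precisely the exponent (up to the $-\tfrac12$) of $\varphi(\mathbf{m}_1;\mathbf{z}(\mathbf{x},\mathbf{m}_2),\Delta)$, so it remains to show the bracketed residual equals $(\mathbf{x}-\mathbf{m}_2)^{T}(\Omega_1+\Omega_2)^{-1}(\mathbf{x}-\mathbf{m}_2)$.

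For that residual I would substitute $\mathbf{z}^{T}\Delta^{-1}\mathbf{z} = (\Omega_1^{-1}\mathbf{x}+\Omega_2^{-1}\mathbf{m}_2)^{T}\Delta(\Omega_1^{-1}\mathbf{x}+\Omega_2^{-1}\mathbf{m}_2)$ and collect the $\mathbf{x}\mathbf{x}$, $\mathbf{x}\mathbf{m}_2$, and $\mathbf{m}_2\mathbf{m}_2$ blocks. The computation reduces to the three matrix identities
\begin{align*}
\Omega_1^{-1} - \Omega_1^{-1}\Delta\Omega_1^{-1} = (\Omega_1+\Omega_2)^{-1},\quad \Omega_1^{-1}\Delta\Omega_2^{-1} = (\Omega_1+\Omega_2)^{-1},\quad \Omega_2^{-1} - \Omega_2^{-1}\Delta\Omega_2^{-1} = (\Omega_1+\Omega_2)^{-1}.
\end{align*}
These follow from the ``push-through'' form $\Delta = (\Omega_1^{-1}+\Omega_2^{-1})^{-1} = \Omega_1(\Omega_1+\Omega_2)^{-1}\Omega_2 = \Omega_2(\Omega_1+\Omega_2)^{-1}\Omega_1$ (itself an instance of the Sherman--Morrison--Woodbury identity, Theorem \ref{ShermanMorrisonWoodbury}), after which the middle identity is immediate and the outer two follow by writing $I - (\Omega_1+\Omega_2)^{-1}\Omega_2 = (\Omega_1+\Omega_2)^{-1}\Omega_1$ and similarly with the roles swapped. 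Plugging these back in collapses the residual to the desired $(\mathbf{x}-\mathbf{m}_2)^{T}(\Omega_1+\Omega_2)^{-1}(\mathbf{x}-\mathbf{m}_2)$, which is the exponent of $\varphi(\mathbf{x};\mathbf{m}_2,\Omega_1+\Omega_2)$.

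Finally I would check the constants. The left-hand side carries $(2\pi)^{-k}|\Omega_1|^{-1/2}|\Omega_2|^{-1/2}$ and the right-hand side carries $(2\pi)^{-k}|\Omega_1+\Omega_2|^{-1/2}|\Delta|^{-1/2}$, so equality of the full densities needs $|\Omega_1|\,|\Omega_2| = |\Omega_1+\Omega_2|\,|\Delta|$. This is immediate from $\Delta = \Omega_1(\Omega_1+\Omega_2)^{-1}\Omega_2$, which gives $|\Delta| = |\Omega_1|\,|\Omega_1+\Omega_2|^{-1}|\Omega_2|$. Combining the exponent computation with this determinant check establishes the claimed factorization. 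I expect the only real work to be the middle step---the block-by-block reduction of the residual quadratic form and the accompanying matrix identities---while the completing-the-square and determinant parts are routine.

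\begin{proof}
See the proof proposal above; the argument is a completing-the-square computation in $\mathbf{m}_1$ together with the Sherman--Morrison--Woodbury identity (Theorem \ref{ShermanMorrisonWoodbury}).
\end{proof}
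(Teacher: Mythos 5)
Your proposal is correct and follows essentially the same route as the paper's own proof: complete the square in $\mathbf{m}_1$ with $\Delta^{-1}=\Omega_1^{-1}+\Omega_2^{-1}$, then collapse the residual quadratic form to $(\mathbf{x}-\mathbf{m}_2)^{T}(\Omega_1+\Omega_2)^{-1}(\mathbf{x}-\mathbf{m}_2)$ via the Sherman--Morrison--Woodbury identities. You are in fact slightly more careful than the paper, which leaves the cross-term identity $\Omega_1^{-1}\Delta\Omega_2^{-1}=(\Omega_1+\Omega_2)^{-1}$ implicit and never checks the normalizing constants, whereas you verify both.
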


\begin{proof}
	\begin{align}
		\varphi(\mathbf{x}; &\mathbf{m}_{1}, \Omega_{1}) \cdot \varphi(\mathbf{m}_{1}; \mathbf{m}_{2}, \Omega_{2}) \nonumber \\
		&= \textnormal{exp}\left\lbrace - \frac{1}{2} \: [ \underset{(1)} {\underbrace { (\mathbf{x} - \mathbf{m}_1)^{T} \Omega^{-1}_{1} (\mathbf{x} - \mathbf{m}_1) + (\mathbf{m}_{1} - \mathbf{m}_{2})^{T} \Omega^{-1}_{2} (\mathbf{m}_{1} - \mathbf{m}_{2}) }} ] \right\rbrace \nonumber
	\end{align}
	Then we can write (1) as
	\begin{align*}
		\mathbf{x}^{T} &\Omega^{-1}_{1} \mathbf{x} - 2\mathbf{m}_{1}^{T} \Omega^{-1}_{1} \mathbf{x} + \mathbf{m}_{1}^{T} \Omega^{-1}_{1} \mathbf{m}_{1} +  \mathbf{m}_{1}^{T} \Omega^{-1}_{2} \mathbf{m}_{1} - 2 \mathbf{m}_{1}^{T} \Omega^{-1}_{2} \mathbf{m}_{2} +  \mathbf{m}_{2}^{T} \Omega^{-1}_{2} \mathbf{m}_{2} \\
		&= \mathbf{m}_{1}^{T} [\Omega^{-1}_{1} + \Omega^{-1}_{2}] \mathbf{m}_1 - 2 \mathbf{m}_{1}^{T} [\Omega^{-1}_{1} \mathbf{x} + \Omega^{-1}_{2} \mathbf{m}_{2}] + \mathbf{x}^{T} \Omega^{-1}_{1} \mathbf{x} +  \mathbf{m}_{2}^{T} \Omega^{-1}_{2} \mathbf{m}_{2} \\
		&= \mathbf{m}_{1}^{T} \Delta^{-1} \mathbf{m}_1 - 2 \mathbf{m}_{1}^{T} [\Omega^{-1}_{1} + \Omega^{-1}_{2}] [\Omega^{-1}_{1} + \Omega^{-1}_{2}]^{-1} [\Omega^{-1}_{1} \mathbf{x} + \Omega^{-1}_{2} \mathbf{m}_{2}] + \mathbf{z}^{T} \Delta^{-1} \mathbf{z} \\
		&\:\:\: \underset{(\eta)} {\underbrace {- \mathbf{z}^{T} \Delta^{-1} \mathbf{z} + \mathbf{x}^{T} \Omega^{-1}_{1} \mathbf{x} +  \mathbf{m}_{2}^{T} \Omega^{-1}_{2} \mathbf{m}_{2}}} \\
		&= (\mathbf{m}_{1} - \mathbf{z})^{T} \Delta^{-1} (\mathbf{m}_{1} - \mathbf{z}) + \eta
	\end{align*}
	Since
	$$
	\mathbf{z}^{T} \Delta^{-1} \mathbf{z} =
	({\bf x}^T\Omega_1^{-1} + {\bf m}_2^T\Omega_2^{-1})\Delta
	(\Omega_1^{-1}{\bf x} + \Omega_2^{-1}{\bf m}_2),
	$$
	We can also rewrite $\eta$ as
	\begin{align*}
		\eta &= \mathbf{x}^{T} \Omega^{-1}_{1} \mathbf{x} +  \mathbf{m}_{2}^{T} \Omega^{-1}_{2} \mathbf{m}_{2} - \mathbf{z}^{T} \Delta^{-1} \mathbf{z} \\
		&= \mathbf{x}^{T} \Omega^{-1}_{1} \mathbf{x} +  \mathbf{m}_{2}^{T} \Omega^{-1}_{2} \mathbf{m}_{2} - [\mathbf{x}^{T}  \Omega^{-1}_{1} + \mathbf{m}_{2}^{T} \Omega^{-1}_{2}] \Delta [\Omega^{-1}_{1} \mathbf{x} + \Omega^{-1}_{2} \mathbf{m}_{2}] \\
		&= \mathbf{x}^{T} \Omega^{-1}_{1} \mathbf{x} + \mathbf{m}_{2}^{T} \Omega^{-1}_{2} \mathbf{m}_{2} - [\mathbf{x}^{T}  \Omega^{-1}_{1} \Delta \Omega^{-1}_{1} \mathbf{x} + 2\mathbf{x}^{T}  \Omega^{-1}_{1} \Delta \Omega^{-1}_{2} \mathbf{m}_{2} + \mathbf{m}^{T}_{2} \Omega^{-1}_{2} \Delta \Omega^{-1}_{2} \mathbf{m}_{2}] \\
		&= \mathbf{x}^{T} [\Omega^{-1}_{1} - \Omega^{-1}_{1}  \Delta \Omega^{-1}_{1}] \mathbf{x} + \mathbf{m}^{T}_{2} [\Omega^{-1}_{2} - \Omega^{-1}_{2} \Delta \Omega^{-1}_{2}] \mathbf{m}_{2} - 2 \mathbf{x}^{T} \Omega^{-1}_{1} \Delta \Omega^{-1}_{2} \mathbf{m}_{2}
	\end{align*}
	
	Using the Sherman-Morrison-Woodbury matrix identity, we obtain
	\begin{align*}
		\Omega^{-1}_{1} - \Omega^{-1}_{1} \Delta \Omega^{-1}_{1} \:\:=&\:\:\: \Omega^{-1}_{1}  - \Omega^{-1}_{1} (\Omega^{-1}_{1}  + \Omega^{-1}_{2})^{-1} \Omega^{-1}_{1} \:\:=\:\: (\Omega_{1} + \Omega_{2})^{-1} \\
		\Omega^{-1}_{2} - \Omega^{-1}_{2} \Delta \Omega^{-1}_{2} \:\:=&\:\:\: \Omega^{-1}_{2}  - \Omega^{-1}_{2} (\Omega^{-1}_{1}  + \Omega^{-1}_{2})^{-1} \Omega^{-1}_{2} \:\:=\:\: (\Omega_{1} + \Omega_{2})^{-1}
	\end{align*}
	We then have
	\begin{align}
		\eta \:\:=&\:\:\: \mathbf{x}^{T} (\Omega_{1} + \Omega_{2})^{-1} \mathbf{x} + \mathbf{m}^{T}_{2} (\Omega_{1} + \Omega_{2})^{-1} \mathbf{m}_{2} - 2 \mathbf{x}^{T} (\Omega_{1} + \Omega_{2})^{-1} \mathbf{m}_{2} \nonumber \\
		\:\:=&\:\:\: (\mathbf{x} - \mathbf{m}_{2})^{T} (\Omega_{1} + \Omega_{2})^{-1} (\mathbf{x} - \mathbf{m}_{2})
	\end{align}
	This completes the proof.
\end{proof}
\clearpage

\subsection{Multivariate Hidden Truncation Skew-Normal Distribution}
To extend the univariate hidden truncation skew-normal, \textcolor{blue}{Arnold and Beaver (2000)} started with $W_1, W_2, \ldots, W_k$ and $U$ i.i.d. N(0,1). Let ${\bf W} = (W_1, W_2, \ldots, W_k)^T$. Then for any $\lambda_0 \in \mathbb{R}$ and ${\bf \lambda_1} \in \mathbb{R}^k$, consider the event $A = [\lambda_0 + {\bf \lambda_1}^T {\bf W} > U].$ Then the joint conditional distribution of ${\bf W}$ and $U$ given $A$ is
\begin{align*}
	P({\bf W} \le &{\bf w}, U \le u | A) = \frac{1}{P(A)}P({\bf W} \le {\bf w}, U \le u, A)\\
	&= \frac{1}{P(A)}\int_{R^{k+1}}P({\bf W} \le {\bf w}, U \le u | {\bf W} = {\bf x}, U = s) d{\bf x} ds\\
	&= \frac{1}{P(A)}\int_{-\infty}^{\bf w}\int_{-\infty}^u I_{[\lambda_0 + {\bf \lambda}_1^T{\bf x} > s]}f_{\bf W}({\bf x})f_U(s)d{\bf x} ds
\end{align*}
Thus, the joint conditional density of ${\bf W}$ and $U$ given $A$ is
\begin{align}
	f_{\boldsymbol{W},U|A}(\boldsymbol{w},u) = \frac{1}{P(A)} \prod_{i=1}^{k} \varphi(w_{i}) \varphi(u) I(\lambda_{0} + \boldsymbol{\lambda}_{1}^{t} \boldsymbol{w} > u)
\end{align}
By integrating with respect to u we get
\begin{align*}
	f_{\boldsymbol{W}|A}(\boldsymbol{w}) &= \frac{1}{P(A)}\int_R \prod_{i=1}^{k} \varphi(w_{i})\varphi(u)I_{[\lambda_0 + {\bf \lambda}^T_1{\bf w}> u]} du\\
	&= \frac{1}{P(A)}\prod_{i=1}^{k} \varphi(w_{i}) E [I_{[\lambda_{0} + \boldsymbol{\lambda}_{1}^{t} \boldsymbol{w} > U]}]\\
	&= \frac{1}{P(A)}\prod_{i=1}^{k} \varphi(w_{i}) P[\lambda_{0} +\boldsymbol{\lambda}_{1}^{t} \boldsymbol{w} > U]\\
	&= \frac{1}{P(A)}\prod_{i=1}^{k} \varphi(w_{i}) \Phi(\lambda_{0} +\boldsymbol{\lambda}_{1}^{t} \boldsymbol{w})
\end{align*}
In addition, since
$$
P(A) = P(U - {\bf \lambda_1}^T {\bf W} < \lambda_0)
=  \Phi \left( \frac{\lambda_{0}}{\sqrt{1 + \boldsymbol{\lambda}_{1}^{t} \boldsymbol{\lambda}_{1}}} \right),
$$
we have
$$
f_{\boldsymbol{W}|A}(\boldsymbol{w}) = \prod_{i=1}^{k} \varphi(w_{i}) \Phi(\lambda_{0} + \boldsymbol{\lambda}_{1}^{t} \boldsymbol{w}) \Big/ \Phi \left( \frac{\lambda_{0}}{\sqrt{1 + \boldsymbol{\lambda}_{1}^{t} \boldsymbol{\lambda}_{1}}} \right).
$$
Arnold and Beaver \textcolor{blue}{(Arnold $\&$ Beaver, 2000)} take this as the definition of the multivariate hidden truncation skew-normal.
\begin{defn}
	A random vector ${\bf X}$ is said to have a hidden truncation skew-normal with parameters $\lambda_0, {\bf \lambda}_1$ if its density is
	\begin{align}
		f_{\bf X}({\bf x}) = \varphi_k({\bf x};{\bf 0}, I)\Phi(\lambda_{0} + \boldsymbol{\lambda}_{1}^{T} {\bf x}) \Big/ \Phi \left( \frac{\lambda_{0}}{\sqrt{1 + \boldsymbol{\lambda}_{1}^{T} \boldsymbol{\lambda}_{1}}} \right).
	\end{align}
	In this case, we write ${\bf X} \sim SNT_k(\lambda_0, {\bf \lambda}_1).$
\end{defn}

\begin{thm}
	Suppose ${\bf X} \sim SNT_k(\lambda_0, {\bf \lambda_1})$, then the moment generating function is
	\begin{align}
		M(\boldsymbol{s}) &= e^{\boldsymbol{s}^{T} \boldsymbol{s} / 2}\: \Phi \left( \frac{\lambda_{0} + \boldsymbol{\lambda}_{1}^{T} \boldsymbol{s}}{\sqrt{1 + \boldsymbol{\lambda}_{1}^{T} \boldsymbol{\lambda}_{1}}}\right)\:\Bigg/ \:\Phi\left( \frac{\lambda_{0}}{\sqrt{1 + \boldsymbol{\lambda}_{1}^{T} \boldsymbol{\lambda}_{1}}}\right).
	\end{align}
\end{thm}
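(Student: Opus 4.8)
The plan is to evaluate $M(\boldsymbol{s}) = \mathbb{E}[e^{\boldsymbol{s}^{T}\boldsymbol{X}}]$ straight from the density and then reduce the resulting integral to the Gaussian expectation identity proved earlier in this Appendix, namely that $\mathbb{E}[\Phi(\mathbf{a}^{T}\boldsymbol{Y}+b)] = \Phi\big((b+\mathbf{a}^{T}\boldsymbol{\mu})/\sqrt{1+\mathbf{a}^{T}\Sigma\mathbf{a}}\big)$ whenever $\boldsymbol{Y}\sim N(\boldsymbol{\mu},\Sigma)$. First I would write
$$
M(\boldsymbol{s}) \:=\: \frac{1}{\Phi\!\left(\lambda_{0}/\sqrt{1+\boldsymbol{\lambda}_{1}^{T}\boldsymbol{\lambda}_{1}}\right)}\int_{\mathbb{R}^{k}} e^{\boldsymbol{s}^{T}\boldsymbol{x}}\,\varphi_{k}(\boldsymbol{x};\boldsymbol{0},I)\,\Phi(\lambda_{0}+\boldsymbol{\lambda}_{1}^{T}\boldsymbol{x})\,d\boldsymbol{x},
$$
which is finite for every $\boldsymbol{s}$ because the Gaussian factor dominates $e^{\boldsymbol{s}^{T}\boldsymbol{x}}$ and $\Phi\le 1$, so the mgf is well defined on all of $\mathbb{R}^{k}$.

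Next I would complete the square in the exponent: since $-\tfrac12\boldsymbol{x}^{T}\boldsymbol{x}+\boldsymbol{s}^{T}\boldsymbol{x} = -\tfrac12(\boldsymbol{x}-\boldsymbol{s})^{T}(\boldsymbol{x}-\boldsymbol{s})+\tfrac12\boldsymbol{s}^{T}\boldsymbol{s}$, we have $e^{\boldsymbol{s}^{T}\boldsymbol{x}}\varphi_{k}(\boldsymbol{x};\boldsymbol{0},I) = e^{\boldsymbol{s}^{T}\boldsymbol{s}/2}\,\varphi_{k}(\boldsymbol{x};\boldsymbol{s},I)$. Pulling the constant $e^{\boldsymbol{s}^{T}\boldsymbol{s}/2}$ outside the integral leaves $\int_{\mathbb{R}^{k}}\varphi_{k}(\boldsymbol{x};\boldsymbol{s},I)\Phi(\lambda_{0}+\boldsymbol{\lambda}_{1}^{T}\boldsymbol{x})\,d\boldsymbol{x} = \mathbb{E}\big[\Phi(\lambda_{0}+\boldsymbol{\lambda}_{1}^{T}\boldsymbol{Y})\big]$, where $\boldsymbol{Y}\sim N_{k}(\boldsymbol{s},I)$.

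Finally I would apply the Gaussian expectation identity with $\mathbf{a}=\boldsymbol{\lambda}_{1}$, $b=\lambda_{0}$, $\boldsymbol{\mu}=\boldsymbol{s}$, and $\Sigma=I$, obtaining $\mathbb{E}[\Phi(\lambda_{0}+\boldsymbol{\lambda}_{1}^{T}\boldsymbol{Y})] = \Phi\big((\lambda_{0}+\boldsymbol{\lambda}_{1}^{T}\boldsymbol{s})/\sqrt{1+\boldsymbol{\lambda}_{1}^{T}\boldsymbol{\lambda}_{1}}\big)$. Substituting this back into the expression for $M(\boldsymbol{s})$ yields exactly the claimed formula; as a sanity check, putting $\boldsymbol{s}=\boldsymbol{0}$ gives $M(\boldsymbol{0})=1$, which also re-confirms the normalizing constant in the density. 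I do not expect any real obstacle here: the whole argument is the completing-the-square step followed by a single invocation of the already-established Gaussian orthant expectation, and the only point needing an explicit line of justification is the integrability claim that guarantees the mgf exists for all $\boldsymbol{s}$.
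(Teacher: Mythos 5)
Your proposal is correct and follows essentially the same route as the paper: write the mgf integral, complete the square so that $e^{\boldsymbol{s}^{T}\boldsymbol{x}}\varphi_k(\boldsymbol{x};\boldsymbol{0},I)=e^{\boldsymbol{s}^{T}\boldsymbol{s}/2}\varphi_k(\boldsymbol{x};\boldsymbol{s},I)$, and then invoke the Gaussian expectation identity $\mathbb{E}[\Phi(\mathbf{a}^{T}\boldsymbol{Y}+b)]$ with $\boldsymbol{Y}\sim N_k(\boldsymbol{s},I)$. In fact your version is slightly cleaner, since you carry the correct factor $e^{\boldsymbol{s}^{T}\boldsymbol{s}/2}$ (the paper's displayed intermediate step has a typographical $e^{\boldsymbol{s}^{T}\boldsymbol{s}}$) and you explicitly justify finiteness of the mgf.
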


\begin{proof}
	For ${\bf s} \in \mathbb{R}^k,$ we have
	\begin{align*}
		M_{\bf X}({\bf s}) &= \frac{1}{\Phi \left( \frac{\lambda_{0}}{\sqrt{1 + \boldsymbol{\lambda}_{1}^{T} \boldsymbol{\lambda}_{1}}} \right)}
		\int_{\mathbb{R}^k} e^{{\bf s}^T{\bf x}} \varphi_k({\bf x}; {\bf 0}, I)\Phi(\lambda_0 + {\bf \lambda}_1^T{\bf x})d{\bf x}\\
		&= \frac{e^{{\bf s}^T{\bf s}}}{\Phi \left( \frac{\lambda_{0}}{\sqrt{1 + \boldsymbol{\lambda}_{1}^{T} \boldsymbol{\lambda}_{1}}} \right)}
		\int_{\mathbb{R}^k} \varphi_k({\bf x}; {\bf s}, I)\Phi(\lambda_0 + {\bf \lambda}_1^T{\bf x})d{\bf x}\\
		&= \frac{e^{{\bf s}^T{\bf s}}}{\Phi \left( \frac{\lambda_{0}}{\sqrt{1 + \boldsymbol{\lambda}_{1}^{T} \boldsymbol{\lambda}_{1}}} \right)}
		\mathbb{E}[\Phi(\lambda_0 + {\bf \lambda}_1^T{\bf U})
	\end{align*}
	where ${\bf U} \sim N_k({\bf s}, I).$ Applying Theorem 4.8 in Appendix B, we obtain the result.
\end{proof}

The mean and the covariance matrix of a random vector could be obtained by differentiating the cumulant function $K(t) = \text{ln}(M(t)).$ For ${\bf X} \sim SNT_k(\lambda_0, {\boldsymbol \lambda}_1),$ its cumulant function is
$$
K(t) = \sum_{i =1}^{k} t^{2}_{i} + \textnormal{ln} \:  \Phi \left\lbrace \frac{\lambda_{0} + \boldsymbol{\lambda}_{1}^{t} \textbf{t}}{\sqrt{1 + \boldsymbol{\lambda}_{1}^{t} \boldsymbol{\lambda}_{1}}}\right\rbrace - c.
$$
Differentiating with respect to $t_{i}$ and setting $\boldsymbol{t} = \mathbf{0}$, we have
\begin{align}
	E(X_{i}) = \frac{\lambda_{1i}}{\sqrt{1 + \boldsymbol{\lambda}_{1}^{t} \boldsymbol{\lambda}_{1}}} \frac{ \varphi \left( \frac{\lambda_{0}}{\sqrt{1 + \boldsymbol{\lambda}_{1}^{t} \boldsymbol{\lambda}_{1}}} \right) }{\Phi \left( \frac{\lambda_{0}}{\sqrt{1 + \boldsymbol{\lambda}_{1}^{t} \boldsymbol{\lambda}_{1}}}\right) }.
\end{align}
and with further differentiation we have the variances and covariances.
\begin{align*}
	\textnormal{Cov} (X_{i},X_{j}) = \delta_{ij} - \frac{\lambda_{1i} \lambda_{1j}}{(1 + \boldsymbol{\lambda}_{1}^{t} \boldsymbol{\lambda}_{1})} \left[ \frac{\lambda_{0}}{\sqrt{1 + \boldsymbol{\lambda}_{1}^{t} \boldsymbol{\lambda}_{1}}} h (\lambda_{0}, \boldsymbol{\lambda}_{1}) + h^{2} (\lambda_{0}, \boldsymbol{\lambda}_{1})\right],
\end{align*}
where $\delta_{ij}$ is the Kronecker delta symbol, and where
\begin{align}
	h (\lambda_{0}, \boldsymbol{\lambda}_{1}) = \frac{\varphi \left( \frac{\lambda_{0}}{\sqrt{1 + \boldsymbol{\lambda}_{1}^{t} \boldsymbol{\lambda}_{1}}}\right) }{\Phi \left(\frac{\lambda_{0}}{\sqrt{1 + \boldsymbol{\lambda}_{1}^{t} \boldsymbol{\lambda}_{1}}}\right) }.
\end{align}

The general multivariate hidden truncation skew normal can be obtained by introducing the location and scale parameters. In particular, suppose $\boldsymbol{\mu} \in \mathbb{R}^k$ and ${\Sigma}$ is a $k\times k$ positive definite matrix. With ${\bf Y} \sim SNT_k(\lambda_0, \boldsymbol{\lambda}_1),$ consider ${\bf X} = \boldsymbol{\mu} + \Sigma^{1/2}{\bf Y}.$ With ${J}$ denote the Jacobian of the transformation, we have
\begin{align}
	f_{\boldsymbol{X}}(\boldsymbol{x}) \:&=\: f_{\boldsymbol{Y}}(\Sigma^{-1/2}(\mathbf{x} - \boldsymbol{\mu})) \cdot \abs{J}^{-1} \nonumber \\
	&= \frac{1}{\Phi \left(\frac{\lambda_{0}}{\sqrt{1 + \boldsymbol{\lambda}_{1}^{t} \boldsymbol{\lambda}_{1}}}\right) }\varphi(\Sigma^{-1/2}(\boldsymbol{x} - \boldsymbol{\mu}); 0, I) \cdot \Phi(\lambda_{0} + \boldsymbol{\lambda}^{t}_{1}\Sigma^{-1/2}(\boldsymbol{x} - \boldsymbol{\mu})) \cdot \abs{J}^{-1} \nonumber \\
	&=\: \frac{1}{\Phi \left(\frac{\lambda_{0}}{\sqrt{1 + \boldsymbol{\lambda}_{1}^{t} \boldsymbol{\lambda}_{1}}}\right) } \cdot \frac{1}{(2 \pi)^{k/2} |\Sigma|^{1/2}} \: \textnormal{exp}\left\lbrace - \frac{1}{2} (\boldsymbol{x} - \boldsymbol{\mu})^{t} \Sigma^{- 1/2} \Sigma^{- 1/2}(\boldsymbol{x} - \boldsymbol{\mu}) \right\rbrace \cdot \nonumber \\
	&\:\:\:\:\:\:\: \times \Phi(\lambda_{0} + \boldsymbol{\lambda^{t}_{1}} \Sigma^{- 1/2} (\boldsymbol{x} - \boldsymbol{\mu})) \nonumber \\
	\:& =  \frac{1}{{\Phi \left(\frac{\lambda_{0}}{\sqrt{1 + \boldsymbol{\lambda}_{1}^{t} \boldsymbol{\lambda}_{1}}}\right) }} \cdot \varphi(\boldsymbol{x}; \boldsymbol{\mu}, \Sigma) \cdot \Phi(\lambda_{0} + \boldsymbol{\lambda^{t}_{1}} \Sigma^{- 1/2} (\boldsymbol{x} - \boldsymbol{\mu})) \nonumber \label{eq:7} \\
\end{align}
In this case, we write ${\bf X}\sim SNT_k(\lambda_0, \boldsymbol{\lambda}_1, \boldsymbol{\mu}, \Sigma).$

\begin{thm}
	\label{mgfSNT}
	\textnormal{Suppose} $\boldsymbol{X} \sim SNT_{k}(\lambda_{0}, \boldsymbol{\lambda}_{1}, \boldsymbol{\mu}, \Sigma)$. Then
	\begin{align}
		M_{\bf X}({\bf s}) \:=\: \textnormal{exp} \left\lbrace {\bf s}^{T} \mu + \frac{1}{2} {\bf s}^{T} \Sigma {\bf s} \right\rbrace \cdot \frac{\Phi\left( \frac{\lambda_{0} + \boldsymbol{\lambda}^{T}_{1} \Sigma^{\frac{1}{2}} {\bf s}}{\sqrt{1 + \boldsymbol{\lambda}^{T}_{1} \boldsymbol{\lambda}_{1}}}\right) }{\Phi \left( \frac{\lambda_{0}}{\sqrt{1 + \boldsymbol{\lambda}^{T}_{1} \boldsymbol{\lambda}_{1}}}\right) } \label{eq:8}
	\end{align}
\end{thm}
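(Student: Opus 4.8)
The plan is to obtain $M_{\boldsymbol X}(\mathbf{s})$ directly from the density of $SNT_k(\lambda_0,\boldsymbol{\lambda}_1,\boldsymbol\mu,\Sigma)$ together with the identity for $\mathbb{E}[\Phi(\mathbf{a}^{T}\boldsymbol X+b)]$ already established in this appendix. Start from the definition
$$
M_{\boldsymbol X}(\mathbf{s}) \;=\; \frac{1}{\Phi\!\left(\tfrac{\lambda_0}{\sqrt{1+\boldsymbol{\lambda}_1^{T}\boldsymbol{\lambda}_1}}\right)}\int_{\mathbb{R}^k} e^{\mathbf{s}^{T}\mathbf{x}}\,\varphi(\mathbf{x};\boldsymbol\mu,\Sigma)\,\Phi\!\left(\lambda_0+\boldsymbol{\lambda}_1^{T}\Sigma^{-1/2}(\mathbf{x}-\boldsymbol\mu)\right)d\mathbf{x}.
$$
First I would complete the square in the Gaussian exponent to get the pointwise identity $e^{\mathbf{s}^{T}\mathbf{x}}\varphi(\mathbf{x};\boldsymbol\mu,\Sigma)=\exp\{\mathbf{s}^{T}\boldsymbol\mu+\tfrac12\mathbf{s}^{T}\Sigma\mathbf{s}\}\,\varphi(\mathbf{x};\boldsymbol\mu+\Sigma\mathbf{s},\Sigma)$, which pulls the factor $\exp\{\mathbf{s}^{T}\boldsymbol\mu+\tfrac12\mathbf{s}^{T}\Sigma\mathbf{s}\}$ out of the integral and leaves $\mathbb{E}\!\left[\Phi(\lambda_0+\boldsymbol{\lambda}_1^{T}\Sigma^{-1/2}(\boldsymbol U-\boldsymbol\mu))\right]$ with $\boldsymbol U\sim N_k(\boldsymbol\mu+\Sigma\mathbf{s},\Sigma)$.

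Next I would rewrite the argument of $\Phi$ in the form $\mathbf{a}^{T}\boldsymbol U+b$, using the symmetry of $\Sigma^{-1/2}$, with $\mathbf{a}=\Sigma^{-1/2}\boldsymbol{\lambda}_1$ and $b=\lambda_0-\boldsymbol{\lambda}_1^{T}\Sigma^{-1/2}\boldsymbol\mu=\lambda_0-\mathbf{a}^{T}\boldsymbol\mu$. Applying the earlier theorem on $\mathbb{E}[\Phi(\mathbf{a}^{T}\boldsymbol X+b)]$ to $\boldsymbol U$ gives $\Phi\big((b+\mathbf{a}^{T}(\boldsymbol\mu+\Sigma\mathbf{s}))/\sqrt{1+\mathbf{a}^{T}\Sigma\mathbf{a}}\big)$, and I would simplify the two pieces: the numerator collapses to $\lambda_0+\mathbf{a}^{T}\Sigma\mathbf{s}=\lambda_0+\boldsymbol{\lambda}_1^{T}\Sigma^{1/2}\mathbf{s}$ because $\Sigma^{-1/2}\Sigma=\Sigma^{1/2}$, and the denominator becomes $\sqrt{1+\boldsymbol{\lambda}_1^{T}\Sigma^{-1/2}\Sigma\Sigma^{-1/2}\boldsymbol{\lambda}_1}=\sqrt{1+\boldsymbol{\lambda}_1^{T}\boldsymbol{\lambda}_1}$. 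Recombining with the extracted exponential factor and the normalizing constant (which is untouched throughout) yields exactly \eqref{eq:8}.

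There is no essential obstacle here; the routine calculation has three points that must be handled carefully: the correct completion of the square for the multivariate Gaussian, the repeated use of $(\Sigma^{1/2})^{T}=\Sigma^{1/2}$ so that the inserted $\Sigma$ cancels $\Sigma^{-1/2}$ on both sides, and bookkeeping of the normalizing constant $\Phi(\lambda_0/\sqrt{1+\boldsymbol{\lambda}_1^{T}\boldsymbol{\lambda}_1})$. As a shortcut one may instead invoke $\boldsymbol X=\boldsymbol\mu+\Sigma^{1/2}\boldsymbol Y$ with $\boldsymbol Y\sim SNT_k(\lambda_0,\boldsymbol{\lambda}_1)$, so that $M_{\boldsymbol X}(\mathbf{s})=e^{\mathbf{s}^{T}\boldsymbol\mu}M_{\boldsymbol Y}(\Sigma^{1/2}\mathbf{s})$, and substitute the standardized mgf from the previous theorem with $\mathbf{t}=\Sigma^{1/2}\mathbf{s}$, noting $\mathbf{t}^{T}\mathbf{t}=\mathbf{s}^{T}\Sigma\mathbf{s}$; I would present the direct-integration version for self-containedness since it relies only on results already proved above.
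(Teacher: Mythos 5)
Your proposal is correct. It differs from the paper's proof in organization rather than in substance: the paper proves the theorem in two lines by writing $\boldsymbol{X}=\boldsymbol{\mu}+\Sigma^{1/2}\boldsymbol{Y}$ with $\boldsymbol{Y}\sim SNT_k(\lambda_0,\boldsymbol{\lambda}_1)$ and substituting $\Sigma^{1/2}\mathbf{s}$ into the standardized mgf derived just before (exactly the ``shortcut'' you mention at the end), whereas your primary argument integrates the general location--scale density directly, completing the square to extract $\exp\{\mathbf{s}^{T}\boldsymbol{\mu}+\tfrac12\mathbf{s}^{T}\Sigma\mathbf{s}\}$ and then applying the appendix identity for $\mathbb{E}[\Phi(\mathbf{a}^{T}\boldsymbol{X}+b)]$ with $\mathbf{a}=\Sigma^{-1/2}\boldsymbol{\lambda}_1$ and $b=\lambda_0-\mathbf{a}^{T}\boldsymbol{\mu}$. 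The two routes rest on the same key lemma (the paper uses the $\mathbb{E}[\Phi(\cdot)]$ identity inside its proof of the standardized mgf), so nothing genuinely new is needed either way; what your version buys is self-containedness --- it does not lean on the standardized-mgf theorem, whose printed proof in the paper in fact carries a small typo ($e^{\mathbf{s}^{T}\mathbf{s}}$ where $e^{\mathbf{s}^{T}\mathbf{s}/2}$ is meant) that your direct computation sidesteps --- at the cost of redoing the Gaussian completion of the square that the affine reduction hides. Your algebra checks out: the numerator collapses to $\lambda_0+\boldsymbol{\lambda}_1^{T}\Sigma^{1/2}\mathbf{s}$ and the denominator to $\sqrt{1+\boldsymbol{\lambda}_1^{T}\boldsymbol{\lambda}_1}$ exactly as claimed.
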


\begin{proof}
	The proof is straightforward by recognizing that ${\bf X} = {\bf \mu} + \Sigma^{1/2}{\bf Y}$, where ${\bf Y} \sim SNT_k(\lambda_0, {\bf \lambda_1}).$ Since, for ${\bf s} \in \mathbb{R}^k$,
	$$
	M_{\bf X}({\bf s}) = e^{{\bf s}^T{\bf \mu}}M_{\bf Y}(\Sigma^{\frac{1}{2}}{\bf s})
	$$
	and the moment generating function for ${\bf Y}$ was obtained earlier, the result follows.
\end{proof}

The mean and the covariance of ${\bf X}$ can be obtained similarly. In particular, let ${\bf X} \sim SNT_k(\lambda_0, {\bf \lambda_1},  \boldsymbol{\mu}, \Sigma)$, then
\begin{align}
	\mathbb{E}({\bf X}) &= \boldsymbol{\mu} + \Sigma^{1/2}\mathbb{E}({\bf Y}) \label{eq:9}\\
	&= \boldsymbol{\mu} + h(\lambda_0, {\bf \lambda}_1)
	\frac{\Sigma^{1/2}{\bf \lambda}_1}{\sqrt{1 + \lambda_1^t\lambda_1}} \label{eq:10}
\end{align}
where $h (\lambda_{0}, \boldsymbol{\lambda}_{1}) = \frac{\varphi \left( \frac{\lambda_{0}}{\sqrt{1 + \boldsymbol{\lambda}_{1}^{t} \boldsymbol{\lambda}_{1}}}\right) }{\Phi \left(\frac{\lambda_{0}}{\sqrt{1 + \boldsymbol{\lambda}_{1}^{t} \boldsymbol{\lambda}_{1}}}\right) }$, and
$\text{Cov}({\bf X}) = \Sigma^{1/2}\text{Cov}({\bf Y})\Sigma^{1/2}.$
\\~\\

\noindent \textbf{Proof of Theorem \ref{Linear Transformation}}
\begin{proof}
	Let $s \in \mathbb{R}^{m}$ and  $A_{m \times n}$.
	\begin{align}
		M_{\bf Y}(s) \:&=\: \mathbb{E} \: [ e^{s^{T} {\bf Y}} ] \:=\: \mathbb{E} \: [ e^{s^{T} ({\bf b} + AX)} ] \nonumber \\
		&=\: e^{s^T{\bf b}}M_{X}(A^{T}s) \nonumber \\
		&=\: e^{s^T{\bf b}}e^{(A^{T}s)^{T} \mu + \frac{1}{2} (A^{T}s)^{T} \Sigma (A^{T}s)} \cdot \frac{\Phi \left( \frac{\lambda_{0} + \lambda^{T}_{1} \Sigma^{\frac{1}{2}} (A^{T} s)}{\sqrt{1 + \lambda^{T}_{1} \lambda_{1}}}\right) }{\Phi \left( \frac{\lambda_{0}}{\sqrt{1 + \lambda^{T}_{1} \lambda_{1}}}\right) } \nonumber \\
		&=\: e^{s^{T} ({\bf b} + A \mu) + \frac{1}{2} s^{T} (A \Sigma A^{T}) s} \cdot  \frac{ \Phi \left( \frac{\lambda_{0}}{\sqrt{1 + \lambda^{T}_{1} \lambda_{1}}} + \frac{s^{T} A \Sigma^{\frac{1}{2}} \lambda_{1}}{\sqrt{1 + \lambda^{T}_{1} \lambda_{1}}} \right)}{\Phi \left( \frac{\lambda_{0}}{\sqrt{1 + \lambda^{T}_{1} \lambda_{1}}} \right) }
	\end{align}
	To show ${\bf Y} \sim SNT_{m}({\bf \mu}_{y}, \Sigma_{y}, \tau_{0}, {\bf \tau}_{1})$, we need to show:
	\begin{align*}
		M_{\bf Y}(s) \:=\: e^{s^{T} ({\bf \mu}_{y}) + \frac{1}{2} s^{T} \Sigma_{y} s} \cdot  \frac{ \Phi \left( \frac{\tau_{0} + s^{T} \Sigma_{y}^{\frac{1}{2}} {\bf \tau}_{1}}{\sqrt{1 + {\bf \tau}^{T}_{1} {\bf \tau}_{1}}} \right)}{\Phi \left( \frac{\tau_{0}}{\sqrt{1 + {\bf \tau}^{T}_{1} {\bf \tau}_{1}}} \right) }
	\end{align*}
	First, we can write:
	\begin{align}
		s^{T} A \Sigma^{\frac{1}{2}} \lambda_{1} \:&=\: s^{T} \Sigma_{y}^{\frac{1}{2}} \underset {\lambda_{*}} {\underbrace{\Sigma_{y}^{-\frac{1}{2}} A \Sigma^{\frac{1}{2}} \lambda_{1}}} \nonumber \\
		&=\: s^{T} \Sigma_{y}^{\frac{1}{2}} \lambda_{*}
	\end{align}
	Then
	\begin{align}
		\lambda^{T}_{*} \lambda_{*} \:&=\: \lambda^{T}_{1} \Sigma^{\frac{1}{2}} A^{T}\Sigma_{y}^{-\frac{1}{2}} \cdot \Sigma_{y}^{-\frac{1}{2}} A \Sigma^{\frac{1}{2}} \lambda_{1}  \nonumber \\
		&=\: \lambda^{T}_{1} \Sigma^{\frac{1}{2}} A^{T}\Sigma_{y}^{-1} A \Sigma^{\frac{1}{2}} \lambda_{1} \nonumber \\
		&=\: \lambda^{T}_{1} H \Sigma_{y}^{-1} H^{T} \lambda_{1},
	\end{align}
	where $H = \Sigma^{\frac{1}{2}} A^{T}$. In addition, since
	\begin{align}
		1 + \lambda^{T}_{1} \lambda_{1} \:&=\: 1 + \lambda^{T}_{1} \lambda_{1} + \lambda^{T}_{*} \lambda_{*} - \lambda^{T}_{*} \lambda_{*} \nonumber \\
		&=\: 1 + \lambda^{T}_{*} \lambda_{*} + \lambda^{T}_{1} \left[ I - H \Sigma_{y}^{-1} H^{T} \right] \lambda_{1},
	\end{align}
	thus
	$$
	\frac{1 + \lambda_1^T\lambda_1}{1 + \lambda_1^T\left[ I - H \Sigma_{y}^{-1} H^{T} \right] \lambda_1}
	= 1 + \frac{\lambda_*^T\lambda_*}{1 + \lambda_1^T\left[ I - H \Sigma_{y}^{-1} H^{T} \right] \lambda_1}
	= 1 + \tau_1^T \tau_1
	$$
	where $\tau_1$ as defined in the theorem. As a result, we have
	\begin{align*}
		\frac{\lambda_{0}}{\sqrt{1 + \lambda^{T}_{1} \lambda_{1}}} \:&=\: \frac{ \frac{\lambda_{0}}{\sqrt{1 + \lambda^{T}_{1} \left[ I - H \Sigma_{y}^{-1} H^{T} \right] \lambda_{1}}}}{\frac{\sqrt{1 + \lambda^{T}_{1} \lambda_{1}}}{\sqrt{1 + \lambda^{T}_{1} \left[ I - H \Sigma_{y}^{-1} H^{T} \right] \lambda_{1}}}} \nonumber \\
		&=\: \frac{\tau_{0}}{\sqrt{1 + \tau^{T}_{1} \tau_{1}}}
	\end{align*}
	and similarly,
	\begin{align}
		\frac{s^{T} A \Sigma^{\frac{1}{2}} \lambda_{1}}{\sqrt{1 + \lambda^{T}_{1} \lambda_{1}}} \:&=\: \frac{s^{T}\Sigma_{y}^{\frac{1}{2}} \lambda_{*}}{\sqrt{1 + \lambda^{T}_{1} \lambda_{1}}} \nonumber \\
		&=\: s^{T}\Sigma_{y}^{\frac{1}{2}} \cdot \frac{\frac{\lambda_{*}}{\sqrt{1 + \lambda^{T}_{1} \left[ I - H \Sigma_{y}^{-1} H^{T} \right] \lambda_{1}}}}{\frac{\sqrt{1 + \lambda^{T}_{1} \lambda_{1}}}{\sqrt{1 + \lambda^{T}_{1} \left[ I - H \Sigma_{y}^{-1} H^{T} \right] \lambda_{1}}} } \nonumber \\
		&=\: \frac{s^{T}\Sigma_{y}^{\frac{1}{2}} \tau_{1}}{\sqrt{1 + \tau^{T}_{1} \tau_{1}}}
	\end{align}
	
	Thus,
	\begin{align}
		M_{Y}(s) \:&=\: e^{s^{T} (A \mu) + \frac{1}{2} s^{T} (A \Sigma A^{T}) s} \cdot  \frac{ \Phi \left( \frac{\tau_{0}}{\sqrt{1 + \tau^{T}_{1} \tau_{1}}} + \frac{s^{T} (\Sigma_{y})^{\frac{1}{2}} \tau_{1}}{\sqrt{1 + \tau^{T}_{1} \tau_{1}}} \right)}{\Phi \left( \frac{\tau_{0}}{\sqrt{1 + \tau^{T}_{1} \tau_{1}}} \right) }\\
	\end{align}
	as wanted.
\end{proof}

\noindent \textbf{Proof of Theorem \ref{SimaanModel}}
\begin{proof}
	We first derive the distribution of ${\bf R}$ for ${\bf \mu} = {\bf 0}$ and $W = I.$ Observe that $|X|$ follows a half-normal distribution with density $f_{|X|}(x) = 2\varphi(x; 0,1)$ for $x > 0,$ and the conditional distribution of $Y |\ |X| = x \sim N_k({\bf \delta}x, \Sigma)$ where $ \Sigma = (I - \Delta^2)^{1/2}\Psi(I - \Delta^2)^{1/2}.$ Thus the density of ${\bf R}$ is
	\begin{align*}
		f_{\bf R}({\bf r}) &= \int_0^\infty f_{{\bf R}, {|X|}}({\bf r}, x)dx\\
		&= \int_0^\infty 2\varphi(x; 0, 1)\varphi_k({\bf r}; {\bf \delta}x, {\bf \Sigma}) dx\\
		&\propto \int_0^\infty e^{-\frac{1}{2}\left[
			({\bf r} - {\bf \delta}x)^T\Sigma^{-1}({\bf r} - {\bf \delta}x) + x^2\right]}dx
	\end{align*}
	Working with the exponent in the above integral, we get
	\begin{align*}
		({\bf r} - &{\bf \delta}x)^T\Sigma^{-1}({\bf r} - {\bf \delta}x) + x^2\\
		&= x^T({\bf \delta}^T\Sigma^{-1}{\bf \delta} + 1)x - 2x^T{\bf \delta}^T\Sigma^{-1}{\bf r} + {\bf r}^T\Sigma^{-1}{\bf r}\\
		&= (x - b)^TA^{-1}(x-b) + {\bf r}^T\left[
		\Sigma^{-1} - \Sigma^{-1}{\bf \delta}A{\bf \delta}^T \Sigma^{-1}\right]{\bf r},
	\end{align*}
	where $A^{-1} = {\bf \delta}^T\Sigma^{-1}{\bf \delta} + 1$ (a scalar) and $b = A{\bf \delta}^T\Sigma^{-1}{\bf r}$ (also a scalar). Using the Woodbury matrix identity (See Theorem \ref{ShermanMorrisonWoodbury} in the Appendix), we get
	$$
	\Sigma^{-1} - \Sigma^{-1}{\bf \delta}A{\bf \delta}^T \Sigma^{-1}
	= \left[\Sigma^{-1} + {\bf \delta}{\bf \delta}^T\right]^{-1}
	$$
	and
	$$
	A^{-1} = 1 + {\bf \delta}^T\Sigma^{-1}{\bf \delta} =
	\left(1 - {\bf \delta}^T\bar{\Omega}^{-1}{\bf \delta}\right)^{-1}.
	$$
	Also, since
	\begin{align*}
		b &= A{\bf \delta}^T\Sigma^{-1}{\bf r}
		= \frac{{\bf \delta}^T\Sigma^{-1}{\bf r}}
		{1 + {\bf \delta}^T\Sigma^{-1}{\bf \delta}}\\
		&= (1 - {\bf \delta}^T\bar{\Omega}^{-1}{\bf \delta}){\bf \delta}^T
		\left[\bar{\Omega}^{-1}-\frac{\bar{\Omega}^{-1}{\bf \delta}{\bf \delta}^T\bar{\Omega}^{-1}}{{\bf \delta}^T\bar{\Omega}^{-1}{\bf \delta}-1}\right]{\bf r}\\
		&= {\bf \delta}^T\bar{\Omega}^{-1}{\bf r}
	\end{align*}
	we get
	\begin{align*}
		({\bf r} - &{\bf \delta}x)^T\Sigma^{-1}({\bf r} - {\bf \delta}x) + x^2\\
		&= {\bf r}^T\left[\Sigma^{-1} + {\bf \delta}{\bf \delta}^T\right]^{-1}{\bf r} + \frac{(x - {\bf \delta}^T\bar{\Omega}^{-1}{\bf r})^2}{1 - {\bf \delta}^T\bar{\Omega}^{-1}{\bf \delta}}.
	\end{align*}
	We then have
	\begin{align*}
		f_{\bf R}({\bf r}) &\propto \varphi({\bf r}; {\bf 0}, \Sigma^{-1}+{\bf \delta}{\bf \delta}^T)
		\int_0^\infty \varphi(x;{\bf \delta}^T\bar{\Omega}^{-1}{\bf r},
		1 -{\bf \delta}^T\bar{\Omega}^{-1}{\bf \delta})dx\\
		&\propto \varphi({\bf r}; {\bf 0}, \Sigma^{-1}+{\bf \delta}{\bf \delta}^T)
		\int_{-k({\bf r})}^\infty \varphi(y; 0, 1)dy\\
		&\propto \varphi({\bf r}; {\bf 0}, \Sigma^{-1}+{\bf \delta}{\bf \delta}^T) \Phi(k({\bf r})),
	\end{align*}
	where $k({\bf r}) = \frac{{\bf \delta}^T\bar{\Omega}^{-1}{\bf r}}
	{\sqrt{1 - {\bf \delta}^T\bar{\Omega}^{-1}{\bf \delta}}}.$
	As a result, by adding the location parameter ${\bf \mu}$ and scale parameter matrix $W$ we obtain the result through a linear transformation.
\end{proof}
\clearpage


\begin{thm}[Portfolio Optimization under Skewed Normal Returns]
	\begin{align}
		\underset{\boldsymbol{w}}{\textnormal{min}} \:\:\:\:\:\:\:\:\:\:\:\:\:\:\:\:\: f({\boldsymbol{w}}) \:&=\: \frac{1}{2}{\boldsymbol{w}}^t S {\boldsymbol{w}} \\~\nonumber\\
		\textnormal {subject to} \:\:\:\:\:\:\:\:\: g_1({\boldsymbol{w}}) \:&=\ {\boldsymbol{w}}^t{\bf 1} \:=\: 1 \nonumber\\
		g_2({\boldsymbol{w}}) \:&=\ {\boldsymbol{w}}^t\mu \:=\: \mu_w \:=\: M \nonumber \\
		g_3({\boldsymbol{w}}) \:&=\ {\boldsymbol{w}}^t{\bf b} \:=\: b_w \:=\: N. \nonumber
	\end{align}
\end{thm}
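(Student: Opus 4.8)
The plan is to treat this as an equality-constrained convex quadratic program and solve it by the method of Lagrange multipliers, since the objective $f(\boldsymbol{w}) = \tfrac12\boldsymbol{w}^tS\boldsymbol{w}$ is a strictly convex quadratic form — here $S$ is symmetric positive definite, being the spherical-variance component $W(I-D^2)^{1/2}\Psi(I-D^2)^{1/2}W$ of a genuine covariance matrix, hence in particular invertible — and the three constraints $g_1, g_2, g_3$ are affine. First I would form the Lagrangian
$$
\mathcal{L}(\boldsymbol{w}, \ell_1, \ell_2, \ell_3) \:=\: \tfrac12\boldsymbol{w}^tS\boldsymbol{w} - \ell_1(\boldsymbol{w}^t\mathbf{1} - 1) - \ell_2(\boldsymbol{w}^t\mu - M) - \ell_3(\boldsymbol{w}^t\mathbf{b} - N),
$$
and set the gradient in $\boldsymbol{w}$ to zero, obtaining the stationarity condition $S\boldsymbol{w} = \ell_1\mathbf{1} + \ell_2\mu + \ell_3\mathbf{b}$, equivalently
$$
\boldsymbol{w} \:=\: \ell_1 S^{-1}\mathbf{1} + \ell_2 S^{-1}\mu + \ell_3 S^{-1}\mathbf{b}.
$$
Rescaling each summand by the appropriate scalar exhibits $\boldsymbol{w} = c_1\boldsymbol{w}_1 + c_2\boldsymbol{w}_2 + c_3\boldsymbol{w}_3$, with $c_1 = \ell_1(\mathbf{1}^tS^{-1}\mathbf{1})$, $c_2 = \ell_2(\mathbf{1}^tS^{-1}\mu)$, $c_3 = \ell_3(\mathbf{1}^tS^{-1}\mathbf{b})$, and $\boldsymbol{w}_1, \boldsymbol{w}_2, \boldsymbol{w}_3$ the three vectors defined in the text.

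Next I would substitute this expression back into the three constraints $g_1(\boldsymbol{w}) = 1$, $g_2(\boldsymbol{w}) = M$, $g_3(\boldsymbol{w}) = N$. This produces a $3\times 3$ linear system in $(\ell_1, \ell_2, \ell_3)$, equivalently in $(c_1, c_2, c_3)$, whose coefficient matrix is the Gram matrix of $\{\mathbf{1}, \mu, \mathbf{b}\}$ with respect to the inner product induced by $S^{-1}$, with entries $\mathbf{1}^tS^{-1}\mathbf{1}$, $\mathbf{1}^tS^{-1}\mu$, $\mu^tS^{-1}\mathbf{b}$, and so on. Since $S^{-1}$ is positive definite, this Gram matrix is invertible exactly when $\mathbf{1}, \mu, \mathbf{b}$ are linearly independent — the natural nondegeneracy hypothesis, which I would state explicitly (if it fails the target constraints are either redundant or inconsistent). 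Solving the system then yields $(\ell_1, \ell_2, \ell_3)$, hence $(c_1, c_2, c_3)$, as explicit functions of the data and of the targets $M$ and $N$; by construction the resulting $\boldsymbol{w}$ satisfies all three constraints.

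Finally I would verify that this stationary point is actually the global minimizer over the feasible affine subspace. Writing an arbitrary feasible point as $\boldsymbol{w} + \boldsymbol{h}$ with $\boldsymbol{h}$ in the common null space of $\mathbf{1}^t, \mu^t, \mathbf{b}^t$, one has $f(\boldsymbol{w}+\boldsymbol{h}) = f(\boldsymbol{w}) + \boldsymbol{h}^tS\boldsymbol{w} + \tfrac12\boldsymbol{h}^tS\boldsymbol{h} = f(\boldsymbol{w}) + \tfrac12\boldsymbol{h}^tS\boldsymbol{h} \ge f(\boldsymbol{w})$, the cross term vanishing because $S\boldsymbol{w}$ is a linear combination of $\mathbf{1}, \mu, \mathbf{b}$ and $\boldsymbol{h}$ is orthogonal to each of these. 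I expect the only genuine subtlety to be the nonsingularity step: one must argue carefully that $S$ is positive definite and that $\{\mathbf{1}, \mu, \mathbf{b}\}$ is linearly independent, so that both the reduced $3\times 3$ system has a unique solution and the normalizing denominators $\mathbf{1}^tS^{-1}\mathbf{1}$, $\mathbf{1}^tS^{-1}\mu$, $\mathbf{1}^tS^{-1}\mathbf{b}$ are nonzero — the first is automatic from positive definiteness, while the latter two additionally require that $\mu$ and $\mathbf{b}$ are not $S^{-1}$-orthogonal to $\mathbf{1}$.
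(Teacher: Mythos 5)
Your proposal follows essentially the same route as the paper's own proof: the Lagrangian stationarity condition $S\boldsymbol{w}=\ell_1\mathbf{1}+\ell_2\mu+\ell_3\mathbf{b}$, the resulting $3\times 3$ linear system obtained by premultiplying with $\mathbf{1}^t,\mu^t,\mathbf{b}^t$, and the identification $\boldsymbol{w}=c_1\boldsymbol{w}_1+c_2\boldsymbol{w}_2+c_3\boldsymbol{w}_3$ with exactly the paper's coefficients $c_1=\ell_1(\mathbf{1}^tS^{-1}\mathbf{1})$, etc. Your two additions --- the explicit nondegeneracy hypothesis on $\{\mathbf{1},\mu,\mathbf{b}\}$ and the vanishing-cross-term argument showing the stationary point is the global minimizer --- are refinements the paper silently omits, and they strengthen rather than alter the argument.
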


\begin{proof}
	By applying the Lagrange multiplier method, the optimal portfolio ${\boldsymbol{w}}$ can be found:
	$$
	\bigtriangledown f({\boldsymbol{w}}) =
	\lambda_1\bigtriangledown g_1({\boldsymbol{w}}) +
	\lambda_2\bigtriangledown g_2({\boldsymbol{w}}) +
	\lambda_3\bigtriangledown g_3({\boldsymbol{w}}).
	$$
	Equivalently,
	\begin{align*}
		S {\boldsymbol{w}} = \lambda_1{\bf 1} + \lambda_2\mu + \lambda_3{\bf b}
	\end{align*}
	
	This implies that
	$$
	{\boldsymbol{w}} = S^{-1}[\lambda_1{\bf 1} + \lambda_2 \mu + \lambda_3{\bf b}].
	$$
	
	Pre-multiplying this equation by $\mu^t, b^t$ and ${\bf 1}^{t}$ and using the constraints, we obtain the following system
	\begin{align*}
		M &= \mu^t {\boldsymbol{w}} = \lambda_1\mu^t S^{-1}{\bf 1} + \lambda_2 \mu^t S^{-1}\mu
		+ \lambda_3\mu^t S^{-1}{\bf b}\\
		N &= b^t {\boldsymbol{w}} = \lambda_1 {\bf b}^t S^{-1}{\bf 1} + \lambda_2 {\bf b}^t S^{-1}\mu
		+ \lambda_3{\bf b}^t S^{-1}{\bf b}\\
		1 & = {\bf 1}^{t}{\boldsymbol{w}} = \lambda_1{\bf 1}^t S^{-1}{\bf 1} + \lambda_2 {\bf 1}^t S^{-1}\mu + \lambda_3{\bf 1}^t S^{-1}{\bf b}
	\end{align*}
	
	Let $A = {\bf 1}^{t} S^{-1}{\bf 1}, B = {\bf 1}^{t} S^{-1}\mu, C = \mu^t S^{-1}\mu, D = {\bf b}^t S^{-1}{\bf 1}, E = {\bf b}^t S^{-1}\mu, F = {\bf b}^t S^{-1}{\bf b}.$ We then obtain the following linear system in the three unknowns $\lambda_1, \lambda_2, \lambda_3$.
	\begin{align*}
		\lambda_1B + \lambda_2C + \lambda_3E &= M \\
		\lambda_1D + \lambda_2E + \lambda_3F &= N \\
		\lambda_1A + \lambda_2B + \lambda_3D &= 1
	\end{align*}
	From this system, $\lambda_1, \lambda_2$ and $\lambda_3$ can be determined once $M$ and $N$ are given. Consider the following three special portfolios ${\boldsymbol{w}}_1 = \frac{S^{-1}{\bf 1}}{{\bf 1}^t S^{-1}{\bf 1}},\ {\boldsymbol{w}}_2 = \frac{S^{-1}\mu}{{\bf 1}^t S^{-1}\mu}$ and ${\boldsymbol{w}}_3 = \frac{S^{-1}{\bf b}}{{\bf 1}^t S^{-1}{\bf b}}.$ It is straightforward to show that ${\boldsymbol{w}}_1, {\boldsymbol{w}}_2$ and ${\boldsymbol{w}}_3$ are legitimate portfolios as they sum up to 1. These portfolios form a basis for the space of minimum variance portfolio. Indeed, since
	\begin{align*}
		{\boldsymbol{w}} &= \lambda_1 S^{-1}{\bf 1} + \lambda_2 S^{-1}\mu +
		\lambda_3 S^{-1}{\bf b}\\
		&= \lambda_1A\cdot {\boldsymbol{w}}_1 + \lambda_2B\cdot {\boldsymbol{w}}_2 + \lambda_3D\cdot {\boldsymbol{w}}_3\\
		&= c_1 {\boldsymbol{w}}_1 + c_2 {\boldsymbol{w}}_2 + c_3 {\boldsymbol{w}}_3
	\end{align*}
	and
	\begin{align*}
		c_1 + c_2 + c_3 &= \lambda_1A + \lambda_2B + \lambda_3D = 1.
	\end{align*}
\end{proof}

\subsection{Proof of the Main Result}
\noindent \textbf{Proof of Theorem \ref{mainResult}}
\begin{proof}
	\begin{align*}
		f_{{\bf R} | {\bf V = v}}&({\bf r}) = \int f_{{\bf R}|{\bf M=m}}({\bf r})\pi^{post}_{{\bf M}|{\bf V=v}}({\bf m})d{\bf m}\\
		&\propto \int \varphi({\bf r};{\bf m - s}, \Sigma)\Phi\left[\lambda_0 + \boldsymbol{ \lambda}_1^t\Sigma^{-1/2}({\bf r+s-m})\right]\varphi({\bf m};\boldsymbol{\mu}_{BL}, \Sigma_{BL})d{\bf m}\\
		&\propto \int \varphi({\bf r + s};{\bf m}, \Sigma)\Phi\left[\lambda_0 + \boldsymbol{ \lambda}_1^t\Sigma^{-1/2}({\bf r+s-m})\right]\varphi({\bf m};\boldsymbol{\mu}_{BL}, \Sigma_{BL})d{\bf m}\\
		&\propto \int \varphi({\bf r + s};\boldsymbol{\mu}_{BL}, \Sigma + \Sigma_{BL})\Phi\left[\lambda_0 + \boldsymbol{\lambda}_1^t\Sigma^{-1/2}({\bf r+s-m})\right]\\
		&\qquad\qquad\qquad \varphi({\bf m};z({\bf r+s},\boldsymbol{\mu}_{BL}), \Delta)d{\bf m}\\
		&\propto \varphi({\bf r + s};\boldsymbol{\mu}_{BL}, \Sigma + \Sigma_{BL})\cdot\\
		&\qquad \int \varphi({\bf m};z({\bf r+s},\boldsymbol{\mu}_{BL}), \Delta)\Phi\left[\lambda_0 + \boldsymbol{\lambda}_1^t\Sigma^{-1/2}({\bf r+s-m})\right]d{\bf m}
	\end{align*}
	where $\Delta = (\Sigma^{-1} + \Sigma^{-1}_{BL})^{-1}$ and $z({\bf r+s}, \boldsymbol{\mu}_{BL}) = \Delta\left[\Sigma^{-1}({\bf r+s}) + \Sigma^{-1}_{BL}\boldsymbol{\mu}_{BL}\right].$
	
	To avoid integration, we need to squeeze out a distribution from the integrand. For ease of writing, let ${\bf r}^* = {\bf r+s}$ and $z({\bf r+s}, \boldsymbol{\mu}_{BL}) = z({\bf r}^*, \boldsymbol{\mu}_{BL}) = {\bf z}.$ Then
	\begin{align*}
		\lambda_0 + &\boldsymbol{\lambda}_1^t\Sigma^{-1/2}({\bf r+s-m})
		= \lambda_0 + \boldsymbol{\lambda_1}^t\Sigma^{-1/2}({\bf r}^* - {\bf m + z -z})\\
		&= \lambda_0 + \boldsymbol{\lambda_1}^t\Sigma^{-1/2}({\bf r}^*-{\bf z})
		- \boldsymbol{\lambda_1}^t\Sigma^{-1/2}({\bf m - z})\\
		&= \lambda_0^* + (\boldsymbol{\lambda}_1^*)^t\Delta^{-1/2}({\bf m - z})
	\end{align*}
	where $\lambda_0^* = \lambda_0 + \boldsymbol{\lambda}_1^t\Sigma^{-1/2}({\bf r}^*-{\bf z})$ and $\boldsymbol{\lambda}_1^* = - \Delta^{1/2}\Sigma^{-1/2}\boldsymbol{\lambda}_1.$ Note that $\lambda_0^*$ is a function of ${\bf r},$ we need to keep this term in determining the distribution of ${\bf R}|{\bf V = v}.$ In particular, the above algebras show that
	\begin{align*}
		f_{{\bf R}|{\bf V=v}}({\bf r}) &\propto \varphi({\bf r + s};\boldsymbol{\mu}_{BL}, \Sigma + \Sigma_{BL})\cdot \Phi\left[\frac{\lambda_0^*}
		{\sqrt{1+(\boldsymbol{\lambda}_1^*)^t\boldsymbol{\lambda}_1^*}}\right]\\
		&\propto \varphi({\bf r}^*;\boldsymbol{\mu}_{BL}, \Sigma + \Sigma_{BL})\Phi\left[\frac{\lambda_0+\boldsymbol{\lambda}_1^t\Sigma^{-1/2}({\bf r}^*-{\bf z})}{\sqrt{1+(\boldsymbol{\lambda}_1^*)^t\boldsymbol{\lambda}_1^*}}\right]
	\end{align*}
	It remains to show that the above function is actually the density of a hidden truncation skew normal. Observe that by the Sherman-Morrison-Woodbury identity (see Theorem \ref{ShermanMorrisonWoodbury}), we have
	\begin{align*}
		{\bf r}^* - &{\bf z} = {\bf r}^* - \Delta[\Sigma^{-1}{\bf r}^* + \Sigma_{BL}^{-1}\boldsymbol{\mu}_{BL}]\\
		&= {\bf r}^* - (\Sigma^{-1}+\Sigma_{BL}^{-1})^{-1}[\Sigma^{-1}{\bf r}^* + \Sigma_{BL}^{-1}\boldsymbol{\mu}_{BL}]\\
		&= {\bf r}^* - [\Sigma - \Sigma(\Sigma + \Sigma_{BL})^{-1}\Sigma]
		[\Sigma^{-1}{\bf r}^* + \Sigma_{BL}^{-1}\boldsymbol{\mu}_{BL}]\\
		&= \Sigma(\Sigma + \Sigma_{BL})^{-1}{\bf r}^* -
		\Sigma[\Sigma_{BL}^{-1} - (\Sigma + \Sigma_{BL})^{-1}\Sigma\Sigma_{BL}^{-1}]\boldsymbol{\mu}_{BL}.
	\end{align*}
	Now since $(\Sigma + \Sigma_{BL})(\Sigma_{BL}^{-1} - (\Sigma + \Sigma_{BL})^{-1}\Sigma\Sigma_{BL}^{-1}) = I$, so
	\begin{align*}
		{\bf r}^* - &{\bf z}
		= \Sigma(\Sigma + \Sigma_{BL})^{-1}{\bf r}^* - \Sigma(\Sigma + \Sigma_{BL})^{-1}\boldsymbol{\mu}_{BL}\\
		&= \Sigma(\Sigma + \Sigma_{BL})^{-1}({\bf r}^* - \boldsymbol{\mu}_{BL}).
	\end{align*}
	Let $\tau_0 = \frac{\lambda_0}{\sqrt{1+(\boldsymbol{\lambda}_1^*)^t\boldsymbol{\lambda}_1^*}}$
	and $\boldsymbol{\tau}_1 = \frac{(\Sigma+\Sigma_{BL})^{-1/2}\Sigma^{1/2}\boldsymbol{\lambda}_1}
	{\sqrt{1+(\boldsymbol{\lambda}_1^*)^t\boldsymbol{\lambda}_1^*}}.$ Then,
	\begin{align*}
		f_{{\bf R}|{\bf V=v}}({\bf r}) &\propto \varphi({\bf r}^*;\boldsymbol{\mu}_{BL}, \Sigma + \Sigma_{BL})\Phi\left[\frac{\lambda_0+\boldsymbol{\lambda}_1^t\Sigma^{-1/2}({\bf r}^*-{\bf z})}{\sqrt{1+(\boldsymbol{\lambda}_1^*)^t\boldsymbol{\lambda}_1^*}}\right]\\
		&\propto \varphi({\bf r};\boldsymbol{\mu}_{BL} -{\bf s}, \Sigma + \Sigma_{BL})\Phi[\tau_0 + \boldsymbol{\tau}_1^t(\Sigma + \Sigma_{BL})^{-1/2}({\bf r} - (\boldsymbol{\mu}_{BL} - {\bf s})).
	\end{align*}
	As a result, we conclude:
	\begin{align}
		{\bf R}|{\bf V = v} \sim SNT_k(\tau_0, \boldsymbol{\tau}_1, \boldsymbol{\mu}_{BL}-{\bf s}, \Sigma + \Sigma_{BL}) \label{eq:17}
	\end{align}
\end{proof}

\end{document}